\newcommand*{\TT}{{\mathrm{T}}}
\newcommand*{\s}{{\mathrm{s}}}
\theoremstyle{plain}
\newtheorem{definition}{\bf{Definition}}
\newtheorem{assumption}{\bf{Assumption}}
\newtheorem{theorem}{\bf{Theorem}}
\newtheorem{lemma}{\bf{Lemma}}
\newtheorem{remark}{\bf{Remark}}
\def\BibTeX{{\rm B\kern-.05em{\sc i\kern-.025em b}\kern-.08em
    T\kern-.1667em\lower.7ex\hbox{E}\kern-.125emX}}
\begin{document}

\title{Stochastic Coded Federated Learning: Theoretical Analysis and Incentive Mechanism Design}

\author{Yuchang~Sun,~\IEEEmembership{Student Member,~IEEE},
Jiawei~Shao,~\IEEEmembership{Student Member,~IEEE}, 
Yuyi~Mao,~\IEEEmembership{Member,~IEEE},
Songze~Li,~\IEEEmembership{Member,~IEEE}, 
and Jun~Zhang,~\IEEEmembership{Fellow,~IEEE}

\thanks{
      	Manuscript received 8 November 2022; revised 17 May and 6 September 2023; accepted 8 November 2023. This paper was presented in part at the 2022 IEEE International Symposium on Information Theory \cite{scfl} [DOI: 10.1109/ISIT50566.2022.9834445]. The editor coordinating the review of this article and approving it for publication was Z. Cai. \emph{(Corresponding authors: Jun Zhang and Yuyi Mao)}
      	
      	Y. Sun, J. Shao, and J. Zhang are with the Department of Electronic and Computer Engineering, the Hong Kong University of Science and Technology, Hong Kong, China (E-mail: \{yuchang.sun, jiawei.shao\}@connect.ust.hk, eejzhang@ust.hk). 
      	Y. Mao is with the Department of Electrical and Electronic Engineering, the Hong Kong Polytechnic University, Hong Kong, China (E-mail: yuyi-eie.mao@polyu.edu.hk).
      	S. Li is with School of Cyber Science and Engineering, Southeast University, Nanjing, China (E-mail: songzeli@seu.edu.cn).

       The work of S. Li is in part supported by the National Nature Science Foundation of China (NSFC) Grant 62106057.
}}

\maketitle

\begin{abstract}
Federated learning (FL) has achieved great success as a privacy-preserving distributed training paradigm, where many edge devices collaboratively train a machine learning model by sharing the model updates instead of the raw data with a server. However, the heterogeneous computational and communication resources of edge devices give rise to stragglers that significantly decelerate the training process. To mitigate this issue, we propose a novel FL framework named stochastic coded federated learning (SCFL) that leverages coded computing techniques. In SCFL, before the training process starts, each edge device uploads a privacy-preserving coded dataset to the server, which is generated by adding Gaussian noise to the projected local dataset. During training, the server computes gradients on the global coded dataset to compensate for the missing model updates of the straggling devices. We design a gradient aggregation scheme to ensure that the aggregated model update is an unbiased estimate of the desired global update. Moreover, this aggregation scheme enables periodical model averaging to improve the training efficiency. We characterize the tradeoff between the convergence performance and privacy guarantee of SCFL. In particular, a more noisy coded dataset provides stronger privacy protection for edge devices but results in learning performance degradation. We further develop a contract-based incentive mechanism to coordinate such a conflict. The simulation results show that SCFL learns a better model within the given time and achieves a better privacy-performance tradeoff than the baseline methods. In addition, the proposed incentive mechanism grants better training performance than the conventional Stackelberg game approach.
\end{abstract}

\begin{IEEEkeywords}
Federated learning (FL), coded computing, straggler effect, mutual information differential privacy (MI-DP), incentive mechanism.
\end{IEEEkeywords}

\section{Introduction}

The rapid growth of artificial intelligence (AI) technologies is boosting the development of various intelligent applications such as digital healthcare, smart transportation, and augmented/virtual reality (AR/VR).
These applications generate a large volume of data at the wireless network edge, which contain valuable information for training high-quality AI models to improve user experience.
Whilst the traditional solution is to directly upload the data to a cloud server for centralized training, it is prohibitive in many emerging use scenarios where the data may contain privacy-sensitive information, e.g., geographical locations and user preferences \cite{meneghello2019iot,li2021security}.
Federated learning (FL) \cite{fedavg}, which iteratively aggregates local model updates computed by the edge devices (e.g., smartphones and Internet of Things (IoT) devices) at a server, was proposed as a promising framework for privacy-preserving distributed model training.

The training process of FL is divided into multiple communication rounds.
In each communication round, the edge devices perform local model training with their own data and the server aggregates their model updates to generate a new global model for the next round.
Due to the heterogeneous computational and communication resources at different edge devices, e.g., computing speeds and network link rates, it may take a much longer time for edge devices with fewer resources to finish the local training \cite{nishio2019client}.
Since the local data among edge devices are typically non-independent and identically distributed (non-IID), simply ignoring the updates from stragglers may lead to biased model aggregations and greatly degrade the training performance.
To mitigate the straggler effect, various methods such as device scheduling \cite{cs1,cs3} and asynchronous model aggregation \cite{AFLsurvey} have been developed.
While the training efficiency can be improved to some extent, these solutions fail to exploit the valuable computational resources at the server, which can be utilized to further accelerate the training speed.

\emph{Coded federated learning} (CFL) \cite{CFL,CFL_journal}, which exploits the server's computational resources by constructing coded datasets at the edge devices \cite{li2020coded,codingnew2}, has recently been proposed to alleviate the straggler issue in FL.
Specifically, with the local coded datasets shared by the edge devices, the server is able to compute gradient updates to complement the missing ones of the stragglers.
Unfortunately, although the server cannot decode the raw data samples, the coded datasets still carry much information that causes privacy concerns on CFL.
In this paper, we propose a \textit{stochastic coded federated learning} (SCFL) framework with both convergence and privacy guarantee, where the local coded datasets are protected by both random projection and additive noise.
Also, we develop an incentive mechanism for SCFL to motivate edge devices to share less noisy coded data.

\subsection{Related Works}

To resolve the straggler issue in FL, a simple approach is to ignore the straggling edge devices in each communication round.
In particular, the server can be configured to wait for a given duration in each communication round and aggregate the gradient updates received before the round deadline \cite{nishio2019client}.
Nevertheless, the presence of non-IID data results in biased aggregations that deviate from the global training objective.
Alternatively, asynchronous FL \cite{AFLsurvey} can be adopted where the model aggregation is triggered once sufficient model updates are available at the server.
Unfortunately, asynchronous FL may need more communication rounds to converge because the stale updates can be poisonous for the global model.
Also, some works \cite{cs1,cs3} designed device scheduling and resource allocation strategies to collect more informative updates in each communication round.
Besides, assigning models with heterogeneous sizes to edge devices according to their computation capabilities is also an effective approach to mitigate the straggler issue in FL \cite{diao2020heterofl}.
In addition, a barrier control method was proposed in \cite{zhao2021federated} to deal with the device heterogeneity in FL, where the server decides whether the collected local updates from edge devices should be passed to the aggregator or blocked to wait for further updates.
While existing approaches can improve the training efficiency, it is still likely that some edge devices cannot upload their gradient updates in each communication round, which leads to missing information for model aggregation at the server.
To combat the stragglers in conventional distributed computing systems, coded computing techniques \cite{li2020coded,codingnew2} introduce redundancy via coding theory when assigning tasks to distributed computing nodes.
Specifically, the server encodes the computation tasks with additional redundancy and distributes the tasks to the computing nodes such that it can recover the desired result from partial nodes.
Inspired by coded computing, a coded dataset was adopted to combat the straggler issue of FL \cite{CFL,CFL_journal,shao2022dres} in CFL, where local data samples of edge devices are encoded and uploaded to the server such that the server can compute gradients on the coded data to compensate for the missing updates from stragglers.
Compared with coded computing, the coded dataset in CFL are transformed from the local data at edge devices, which cannot be accessed by the server.
Therefore, a dedicated design is required to ensure unbiased gradient estimation at the server.
The first CFL schemes, including CFL-FB \cite{CFL} and CodedFedL \cite{CFL_journal}, construct the coded datasets by applying random linear projection on the local data.
In particular, the CFL-FB \cite{CFL} scheme performs full-batch (FB) gradient descent at both the server and edge devices, while a mini-batch of the local data are sampled to reduce the computation latency in each communication round \cite{CFL_journal}.
To ensure convergence, the server has to collect the gradients from edge devices after every local training step, which incurs significant communication overhead.
More recently, to enhance privacy protection, a differentially private coded federated learning (DP-CFL) \cite{anand2021differentially} scheme, which adds Gaussian noise to the coded data, was proposed.
However, the added noise leads to biased gradient estimates at the server and thus deteriorates the training performance.
The efficiency of the existing CFL schemes also suffers from frequent communication \cite{CFL,CFL_journal} or inadequate cooperation between edge devices and the server \cite{anand2021differentially}.
In addition, the interplay between privacy leakage in coded data sharing and training performance was not well studied.
The main characteristics of the existing CFL frameworks are summarized in Table \ref{table:related}.

\begin{table*}[!t]
\caption{Comparisons of different CFL frameworks.}
\label{table:related}
\centering
\resizebox{\textwidth}{!}{
\begin{tabular}{cccccccc}
\hline
\textbf{Framework} & {\begin{tabular}[c]{@{}c@{}}\textbf{Privacy Protection}\\ \textbf{of Coded Data}\end{tabular}} & {\begin{tabular}[c]{@{}c@{}}\textbf{Mini-batch} \\ \textbf{SGD}\end{tabular}} & {\begin{tabular}[c]{@{}c@{}} \textbf{Periodical} \\\textbf{Aggregation}\end{tabular}} &
 {\begin{tabular}[c]{@{}c@{}}\textbf{Convergence} \\ \textbf{Analysis}\end{tabular}} &
  {\begin{tabular}[c]{@{}c@{}}\textbf{Privacy}\\ \textbf{Analysis}\end{tabular}} & {\begin{tabular}[c]{@{}c@{}}\textbf{Tradeoff}\\ \textbf{Characterization}\end{tabular}} & {\begin{tabular}[c]{@{}c@{}}\textbf{Incentive}\\ \textbf{Mechanism}\end{tabular}} \\ \hline
CFL-FB \cite{CFL}    & Low  & \XSolidBrush   & \XSolidBrush    & \XSolidBrush     & \XSolidBrush   & \XSolidBrush  & \XSolidBrush \\
CodedFedL \cite{CFL_journal} & Low  & \Checkmark   & \XSolidBrush   & \Checkmark   & \Checkmark & \XSolidBrush  & \XSolidBrush \\
DP-CFL \cite{anand2021differentially}    & High &  \XSolidBrush  & N/A   & \XSolidBrush    &  \Checkmark     & \XSolidBrush  &  \XSolidBrush \\
SCFL (Ours)      & High &  \Checkmark   & \Checkmark   & \Checkmark   & \Checkmark    & \Checkmark  & \Checkmark \\ \hline
\end{tabular}}
\end{table*}

The noise levels of the coded datasets are key parameters in CFL. On one hand, to enhance privacy protection, the edge devices prefer adding stronger noise to the coded data.
On the other hand, from the perspective of the server, it is desirable to learn a better model by paying rewards to the edge devices for sharing less noisy coded datasets.
It is therefore necessary to design an incentive mechanism for determining the noise levels of coded datasets in CFL \cite{sun2022profit,ying2020double}.
To the best of our knowledge, the only incentive mechanism for CFL was proposed in \cite{ng2021hierarchical} for data allocation and device selection.
To determine the noise levels for gradient sharing in conventional FL, a Stackelberg game approach was developed in \cite{hu2020trading}, where the reward declared by the server is allocated to the edge devices according to their privacy costs.
Although this approach can be adapted for CFL, it excludes those edge devices with larger privacy sensitivity from uploading the coded dataset or compels them to add too strong noise to the coded datasets, which is detrimental to the learning performance.
To avoid this issue, we resort to the contract theory and develop a contract-based incentive mechanism \cite{bolton2004contract,kang2019incentive,limcontract}, where a contract item is designed for each edge device so that the server reduces the noise in the coded datasets with a minimal cost.

\subsection{Contributions}

In FL, it may take a much longer time for edge devices with fewer resources to finish the local training in each communication round, which leads to the straggling effect that significantly degrades the training efficiency.
Although ignoring the stragglers is a simple remedy, biased model aggregations may be incurred, which shall compromise the training performance.
In this work, we aim to tackle the straggler issue in FL via an innovative framework of SCFL, where a coded dataset is constructed at the server to generate unbiased gradient estimates that compensate for the missing gradients from stragglers.
Our main contributions are summarized as follows:
\begin{itemize}
    \item We propose an SCFL framework for federated linear regression. 
    Before training starts, each edge device generates a noisy local coded dataset via random projection, which is uploaded to the server.
    During training, the edge devices perform local training for multiple steps and upload the accumulated updates to the server for aggregation.
    Meanwhile, the server computes stochastic gradients based on the coded data. 
    Notably, the added noise reduces the privacy leakage of the local data compared with existing CFL frameworks \cite{CFL,CFL_journal}.
    Compared with our previous work \cite{scfl}, multiple steps of local training at both the server and edge devices in each communication round are allowed to improve the training efficiency.
    \item To ensure unbiased gradient estimation, we develop a novel gradient aggregation scheme at the server.
    We prove the convergence of SCFL and characterize the privacy guarantee of the coded datasets via the notion of mutual information differential privacy (MI-DP).
    Besides, by analyzing the effect of noise levels of the coded datasets, we theoretically demonstrate a tradeoff between privacy and performance of SCFL.
    \item 
    We further design a contract-based incentive mechanism to motivate edge devices to share coded datasets with less noise.
    Specifically, it derives a set of contract items with each specifying the privacy budget and the earned reward for each edge device.
    Accordingly, the mutually satisfactory noise levels of local coded datasets are determined with the minimal rewards paid by the server.
    \item We evaluate the proposed SCFL framework on the MNIST \cite{mnist} and CIFAR-10 \cite{cifar10} datasets.
    The simulation results demonstrate the benefits of SCFL in securing a higher test accuracy within the given training time.
    Compared with other CFL schemes, SCFL also achieves a consistently better privacy-performance tradeoff. 
    Besides, we also investigate the impacts of various system parameters, including the number of local steps and the ratio of stragglers, on the learning performance.
    In addition, the proposed contract-based incentive mechanism for SCFL yields a better learned model compared with the Stackelberg game approach for a given amount of reward paid by the server.
\end{itemize}

\subsection{Organization}

The rest of this paper is organized as follows.
In Section \ref{sec:system}, we describe the system model and the conventional FL algorithm for linear regression.
Section \ref{sec:algorithm} introduces the proposed SCFL framework.
We analyze the convergence and privacy guarantee of SCFL in Section \ref{sec:theory}.
An incentive mechanism for determining the noise levels of the coded datasets is developed in Section \ref{sec:incentive}.
Section \ref{sec:experiment} presents the experiment results and Section \ref{sec:conclusion} concludes this paper.

\subsection{Notations}
We use boldface upper-case letter (e.g., $\mathbf{X}$) and boldface lower-case letter (e.g., $\mathbf{x}$) to represent matrix and vector, respectively. 
For matrix $\mathbf{X}\in\mathbb{R}^{m \times n}$, $\mathbf{X}_{i,j}$ is the entry in the $i$-th row and $j$-th column, and $\|\mathbf{X}\|_{\mathrm{F}} \triangleq \sqrt{ \sum_{i=1}^{m}\sum_{j=1}^{n} \left|\mathbf{X}_{i,j}\right|^{2} }$ denotes its Frobenius norm.
Besides, we denote the $i$-th diagonal entry of a diagonal matrix $\mathbf{Z}$ as $z_i$.
The set of integers $\{1,2,\dots,N\}$ is denoted as $\left[ N \right]$.
In addition, $\mathbbm{1}\{\cdot\}$ is the indicator function, i.e., $\mathbbm{1}\{A\}=1$ if event $A$ happens and $\mathbbm{1}\{A\}=0$ otherwise.
Table \ref{tab:notation} summarizes the main notations in this paper.

\begin{table*}[t]
\caption{Main Notations}
\label{tab:notation}
\centering
\begin{tabular}{|c|c|c|c|}
\hline
\textbf{Notation} & \textbf{Meaning} & \textbf{Notation} & \textbf{Meaning}        \\ \hline
$(\mathbf{X}^{(i)}, \mathbf{Y}^{(i)})$      &  Local dataset of edge device $i$  & $(\mathbf{\hat{X}}_{k,u}^{(i)}, \mathbf{\hat{Y}}_{k,u}^{(i)})$    & Sampled data in step $u$ of round $k$ on edge device $i$ \\
$(\mathbf{\tilde{X}}^{(i)}, \mathbf{\tilde{Y}}^{(i)})$    & Coded dataset of edge device $i$  & $(\mathbf{\hat{X}}_{k,u}^{\s}, \mathbf{\hat{Y}}_{k,u}^{\s})$    & Sampled data in step $u$ of round $k$ on the server \\
$(\mathbf{X}, \mathbf{Y})$      & Global dataset    & $\mathbf{G}$        & Random projection matrix \\ 
$(\mathbf{\tilde{X}}, \mathbf{\tilde{Y}})$    & Global coded dataset & $\mathbf{N}$     & Gaussian noise matrix    \\ 
\hline
\end{tabular}
% \vspace{-2em} 
\end{table*}

\section{Federated Learning for Linear Regression}\label{sec:system}

\subsection{System Model}

We consider an edge AI system consisting of a server and $N$ edge devices. 
Each edge device $i\in [N]$ has a local dataset denoted as $\left( \mathbf{X}^{(i)}, \mathbf{Y}^{(i)} \right)$, where $\mathbf{X}^{(i)}\in \mathbb{R}^{l_i\times d}$ concatenates the $d$-dimensional features of $l_i$ data samples, and $\mathbf{Y}^{(i)} \in \mathbb{R}^{l_i\times o}$ represents the corresponding $o$-dimensional labels.
Accordingly, the global dataset in the FL system is given as the union of all the local datasets, i.e., $\mathbf{X} = \left[ (\mathbf{X}^{(1)})^{\TT},(\mathbf{X}^{(2)})^{\TT}, \dots,(\mathbf{X}^{(N)})^{\TT} \right]^{\TT} \in \mathbb{R}^{m \times d}$ and $\mathbf{Y} = \left[ (\mathbf{Y}^{(1)})^{\TT},(\mathbf{Y}^{(2)})^{\TT}, \dots,(\mathbf{Y}^{(N)})^{\TT} \right]^{\TT} \in \mathbb{R}^{m \times o}$, where $m = \sum_{i=1}^{N} l_{i}$ is the total number of data samples in the system.
We use matrix $\mathbf{Z}^{(i)} \triangleq \left[ z_{j^{\prime},j}^{(i)} \right] \in \mathbb{R}^{l_i \times m}$ to indicate the data availability on edge device $i$, where $z_{j^\prime,j}^{(i)}=1$ if the $j$-th data sample of the global dataset corresponds to the $j^\prime$-th data sample at edge device $i$ and $z_{j^\prime,j}^{(i)}=0$ otherwise. 
Thus, the local datasets can be expressed as $\left( \mathbf{X}^{(i)}, \mathbf{Y}^{(i)} \right) = \left( \mathbf{Z}^{(i)}\mathbf{X}, \mathbf{Z}^{(i)}\mathbf{Y} \right), i\in\left[N\right]$.

The server aims to fit a model $\mathbf{W} \in \mathbb{R}^{d\times o}$ to the distributed data on the edge devices by solving the following linear regression problem \cite{regression}:
\setlength\abovedisplayskip{0.1cm}
\setlength\belowdisplayskip{0.1cm}
\begin{equation}
    \min_{\mathbf{W} \in \mathbb{R}^{d\times o}} f(\mathbf{W}) \triangleq \frac{1}{2} \|\mathbf{X}\mathbf{W} - \mathbf{Y} \|_{\mathrm{F}}^{2}.
\label{eq:loss}
\end{equation}
We assume the server pays the edge devices for model training, and uses the learned model to earn profits, which is relevant to the business models of multiple industries.
For example, a software developer working for a news recommendation company can deploy a server to train a machine learning model based on the subscribers' behaviors \cite{news}, which enables high accuracy and thus increases the user viscosity.
Notably, if the global dataset is available at the server, the optimal linear regression model can be obtained in closed-form by setting the gradient of $f\left(\mathbf{W}\right)$ to zero, i.e., $\nabla f(\mathbf{W}) \triangleq \mathbf{X}^{\TT} (\mathbf{X}\mathbf{W}-\mathbf{Y})=\mathbf{0}$.
However, since the local datasets typically contain privacy-sensitive information that cannot be disclosed, solving \eqref{eq:loss} in a centralized manner at the server would be infeasible.

\subsection{Federated Learning Algorithm}
\label{sec:comm-model}

Rather than pursuing the optimal solution, the server leverages FL to train a linear regression model without accessing the raw data at the edge devices, and FedAvg \cite{fedavg} is a classic FL algorithm that contains $K$ communication rounds with identical duration $T$.
In the $k$-th communication round of FedAvg, the edge devices first download the global model (denoted as $\mathbf{W}_{k}$) from the server. The model downloading time is denoted as $t_{\mathrm{D}}$.
Then, each edge device performs a $\tau$-step stochastic gradient descent (SGD) to minimize the loss function defined on its local dataset, i.e., $f_i(\mathbf{W}) \triangleq \frac{1}{2} \|\mathbf{X}^{(i)}\mathbf{W} - \mathbf{Y}^{(i)} \|_{\mathrm{F}}^{2}$.
In other words, the updated local model after each SGD step can be written as $\mathbf{W}_{k,u}^{(i)} = \mathbf{W}_{k,u-1}^{(i)} - \eta_k g_{k,u-1}^{(i)} (\mathbf{W}_{k,u-1}^{(i)}), \forall u\in[\tau]$, where $\mathbf{W}_{k,0}^{(i)} = \mathbf{W}_{k}$, $\eta_k$ is the learning rate, and $g_{k,u-1}^{(i)} ( \mathbf{W}_{k,u-1}^{(i)})$ denotes the stochastic gradient computed on a random subset of the local data with batch size $b$.
Denote $\text{MACR}_{i}$ as the Multiply-Accumulate (MAC) rate of edge device $i$ and $N_\text{MAC}$ as the number of MAC operations required for processing one data sample.
The local model updating time in each communication round is thus given as $t_{\mathrm{C}}^{i}(b) = \frac{\tau b N_\text{MAC}}{\text{MACR}_{i}}$ \cite{xu2019elfish}. 
After local model training, each edge device summarizes its model updates as $\mathbf{\hat{g}}_k^{(i)} \triangleq \sum_{u=0}^{\tau-1} g_{k,u}^{(i)}( \mathbf{W}_{k,u}^{(i)})$, which is uploaded to the server for aggregation.
Accordingly, the uploading time is given as $t_{\mathrm{U},k}^{i} = \frac{M}{R_{\mathrm{U},k}^{i}}$, where $M$ (in bits) is the size of the gradient update and $R_{\mathrm{U},k}^{i}=B \log_{2} (1+|h_{k}^{i}|^2 \frac{P_{i}}{N_{0}})$ is the uplink transmission rate, with $B$, $h_{k}^{i}$, $P_{i}$ and $N_{0}$ denoting the channel bandwidth, channel coefficient, uplink transmit power, and receive noise power, respectively.
We further denote $t_k^i(b) \triangleq t_{\mathrm{D}} + t_{\mathrm{C}}^{i}(b)+ t_{\mathrm{U},k}^{i}$ as the minimum time required by edge device $i$ to complete model training and exchange in the $k$-th communication round.

Due to the time-varying wireless channel conditions, it may happen that some edge devices cannot upload their model updates before the deadline of each communication round.
We use an indicator $\mathbbm{1}_{k}^{(i)} \triangleq \mathbbm{1}\{t_k^i(b) \leq T \} $ to denote the arrival status of edge device $i$ in the $k$-th communication round, where $\mathbbm{1}_{k}^{(i)}=1$ means it successfully uploads the model update and $\mathbbm{1}_{k}^{(i)}=0$ otherwise.
By assuming IID block fading, the probability of successfully receiving the model update from edge device $i$ within time $T$ is given as $p_{i} \triangleq \mathrm{Pr} [ t_k^i(b) \leq T ]$, and the new global model is generated at the server by aggregating the received model updates according to $\mathbf{W}_{k+1}=\mathbf{W}_{k}- \eta_{k} \sum_{i=1}^{N} \mathbf{\hat{g}}_{k}^{(i)} \mathbbm{1}_{k}^{(i)}$.

Nevertheless, the efficiency of the above federated linear regression process suffers due to device heterogeneity.
On one hand, edge devices with large MAC rates and better uplink channel quality can finish their local training and model uploading earlier, and thus they have to wait for the slowest device.
To avoid idling the fast devices, we adapt the batch size according to their wireless channel conditions, i.e., edge device $i$ selects the largest batch size $b_{k}^{i}$ for SGD to meet the round deadline, i.e., $b_{k}^{i} = \max_{b\in[l_i]} b \mathbbm{1}\{t_k^i(b) \leq T \} $. 
On the other hand, the straggling devices may have low arrival probabilities, which makes it difficult to utilize their local data for model training. 
In the next section, we propose a new FL framework for linear regression, named stochastic coded federated learning (SCFL), which adopts the coded data at the server to compensate for the missing gradients.

\section{Stochastic Coded Federated Learning}\label{sec:algorithm}

Since the straggling edge devices may fail in uploading the gradients occasionally, we introduce a coded dataset in the proposed SCFL framework for gradient compensation at the server.
Before training starts, each edge device generates the local coded data and uploads them to the server.
Afterward, both the server and edge devices compute the stochastic gradients, which are periodically aggregated to obtain an unbiased model update.
Fig. \ref{fig:codeddata} provides an overview of the SCFL process with different operations detailed as follows.
\begin{figure*}[!t]
    \centering
    \includegraphics[width=0.8\linewidth]{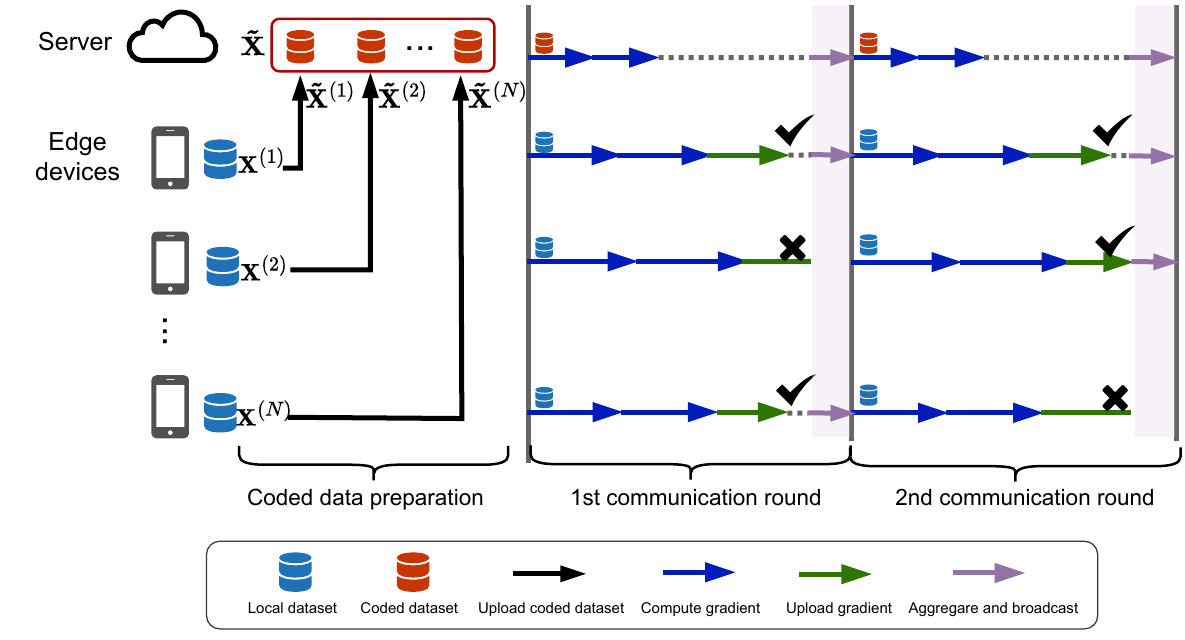}
    \caption{An overview of SCFL. Before training starts, the edge devices upload the coded data to the server. During training, both the server and edge devices perform SGD for multiple steps before gradient aggregation in each communication round. Due to the time-varying wireless channel conditions, some straggling devices may not be able to upload their model updates within time $T$ and the missing gradients are compensated by the coded dataset at the server.}
    \label{fig:codeddata}
\end{figure*}
 
\subsection{Coded Data Preparation}
Before the start of the first communication round, a set of coded data is generated at each edge device via random matrix projection \cite{CFL,CFL_journal}.
However, the privacy protection of simple random projection may be escaped via some knowledge reconstruction attacks \cite{guo2007deriving}.
% For example, \cite{guo2007deriving} shows that when a small subset of raw data is available, the attacker can recover the original data from projected data.
To provide extra privacy protection, we further add noise to the projected data.
To illustrate the necessity of adding noise, we launch the privacy attack in \cite{guo2007deriving} that is able to reconstruct the original data from the randomly projected data with some prior knowledge. Specifically, with a small subset of the original data samples, one can estimate the projection matrix and use it to decode the original data from the projected data.
The reconstruction error is used as a measure of privacy leakage.
It is intuitive that a larger reconstruction error implies a higher difficulty of reconstructing the original data from the coded data, i.e., data privacy is better preserved.
As shown in Fig. \ref{fig:r1}, coded data with additive noise better preserves data privacy than coded data without adding noise.

Therefore, edge device $i$ computes the coded data features and labels according to $\mathbf{\tilde{X}}^{(i)}=\mathbf{G}_{i} \mathbf{X}^{(i)} + \mathbf{N}_{i}$ and $\mathbf{\tilde{Y}}^{(i)}=\mathbf{G}_{i} \mathbf{Y}^{(i)}$, respectively, where $\mathbf{G}_{i}\in \mathbb{R}^{c \times l_i}$ is the projection matrix with each element independent and identically sampled from the standard Gaussian distribution $\mathcal{N}\left(0, 1\right)$. Also, each entry in $\mathbf{N}_{i} \in \mathbb{R}^{c \times d}$ is independently sampled from $\mathcal{N}\left(0, \sigma_i^2\right)$, where $\sigma_i > 0$ denotes the noise level and $c$ is the number of coded data that is the same at different edge devices.
The local coded data of all the edge devices are collected by the server to construct a global coded dataset as follows:
\begin{equation}
    \mathbf{\tilde{X}} \triangleq \sum_{i=1}^{N} \mathbf{\tilde{X}}^{(i)} = \mathbf{G} \mathbf{X} + \mathbf{N}, \quad
    \mathbf{\tilde{Y}} \triangleq \sum_{i=1}^{N} \mathbf{\tilde{Y}}^{(i)} = \mathbf{G} \mathbf{Y},
\label{eq:encoding}
\end{equation}
where $\mathbf{G} \triangleq [\mathbf{G}_1, \mathbf{G}_2,\dots,\mathbf{G}_N]$ and $\mathbf{N} \triangleq \sum_{i=1}^{N} \mathbf{N}_i$.
Since the server has no access to matrices $\{\mathbf{G}_i\}$'s and $\{\mathbf{N}_i\}$'s, it cannot directly decode the real datasets $\{\mathbf{X}^{(i)}, \mathbf{Y}^{(i)}\}$.
It is worthwhile noting that the noise levels of the coded datasets have direct impacts on both the learning performance and privacy guarantee. Intuitively, adding stronger noise reinforces the privacy protection of local data but degrades the learning performance.
We will characterize the trade-off between the convergence and privacy guarantee using the MI-DP metric in Section \ref{sec:privacy}, and the noise levels at different edge devices will be determined via an incentive mechanism in Section \ref{sec:incentive}.

\subsection{Gradient Computation at the Edge Devices}

In the $k$-th communication round, the local model is initialized as $\mathbf{W}_{k,0}^{(i)} = \mathbf{W}_{k}^{(i)}$ at the $i$-th edge device.
To avoid frequent communication between the server and edge devices, we adopt the periodical averaging approach \cite{fedavg,stich2018local} where the edge devices compute stochastic gradients for $\tau$ steps before uploading the model updates in each communication round. This generalizes the existing works on CFL \cite{CFL,CFL_journal,scfl}, where the model updates need to be uploaded after every local training step.
Specifically, in step $u\in[\tau]$, edge device $i$ updates local model $\mathbf{W}_{k,u}^{(i)}$ by sampling a batch of its local data samples with size $b_{k}^{i}$ for gradient computation.
The sampling matrix is denoted as a random diagonal matrix $\mathbf{S}_{k,u}^{(i)} \in \mathbb{R}^{l_i\times l_i}$ where each diagonal element is independently sampled from the Bernoulli distribution $\mathrm{Bern}(\frac{b_k^i}{l_i})$, i.e., the $j$-th diagonal element of $\mathbf{S}_{k,u}^{(i)}$ equals $1$ if the $j$-th local data sample is sampled.
Accordingly, the batch of randomly sampled data can be denoted as $\left( \mathbf{\hat{X}}_{k,u}^{(i)}, \mathbf{\hat{Y}}_{k,u}^{(i)} \right) \triangleq \left( \mathbf{S}_{k,u}^{(i)} \mathbf{X}^{(i)}, \mathbf{S}_{k,u}^{(i)} \mathbf{Y}^{(i)} \right)$, and the stochastic gradient is computed as follows:
\begin{equation}
    g_{k,u}^{(i)} ( \mathbf{W}_{k,u}^{(i)}) \triangleq \frac{l_i}{b_k^i}\mathbf{\hat{X}}_{k,u}^{(i)\TT} \left( \mathbf{\hat{X}}_{k,u}^{(i)} \mathbf{W}_{k,u}^{(i)} - \mathbf{\hat{Y}}_{k,u}^{(i)} \right),
\label{eq:device}
\end{equation}
where $\mathbf{W}_{k,u}^{(i)} = \mathbf{W}_{k,u-1}^{(i)} - \eta_k g_{k,u-1}^{(i)} ( \mathbf{W}_{k,u-1}^{(i)}), u=1,2,\dots,\tau$.
After $\tau$ local steps, each edge device summarizes the accumulated local gradient as $\mathbf{\hat{g}}_k^{(i)} \triangleq \sum_{u=0}^{\tau-1} g_{k,u}^{(i)}( \mathbf{W}_{k,u}^{(i)})$ and uploads it to the server for aggregation.

\subsection{Gradient Computation at the Server}

To combat the stragglers, the server computes the stochastic gradients on the coded dataset in a similar way as the edge devices but with a higher MAC rate.
Specifically, in the $u$-th step of the $k$-th communication round, it randomly selects $b_{\s}$ data samples via a diagonal sampling matrix $\mathbf{S}_{k,u}^{\s} \in \mathbb{R}^{c\times c}$, where the $j$-th diagonal element is sampled from the Bernoulli distribution $\mathrm{Bern}\left(\frac{b_{\s}}{c}\right)$ and equals $1$ if the $j$-th coded data sample is selected.
Accordingly, the sampled coded data at the server are denoted as $\left( \mathbf{\hat{X}}_{k,u}^{\s},\mathbf{\hat{Y}}_{k,u}^{\s} \right) \triangleq \left( \mathbf{S}_{k,u}^{\s} \mathbf{\tilde{X}},  \mathbf{S}_{k,u}^{\s} \mathbf{\tilde{Y}} \right)$.
To make full use of its MAC rate, the server selects the largest batch size $b_{\s}$ that meets the round deadline.
To eliminate the biased gradient estimate caused by the added noise, we propose a make-up term $g_{\mathrm{o}}(\mathbf{W}_{k,u}^{\s}) \triangleq - \sigma^2 \mathbf{W}_{k,u}^{\s}$, where $\sigma^2 \triangleq \sum_{i=1}^{N} \sigma_i^2$ and $\mathbf{W}_{k,u}^{\s}$ denotes the model at the server, to compute the stochastic gradient as follows:
\begin{equation}
    g_{k,u}^{\s} ( \mathbf{W}_{k,u}^{\s}) \triangleq \frac{1}{b_{\s}} (\mathbf{\hat{X}}_{k,u}^{\s})^{\TT} \left( \mathbf{\hat{X}}_{k,u}^{\s} \mathbf{W}_{k,u}^{\s} - \mathbf{\hat{Y}}_{k,u}^{\s} \right),
\label{eq:server}
\end{equation}
where $\mathbf{W}_{k,u}^{\s} = \mathbf{W}_{k,u-1}^{\s} - \eta_k \left( g_{k,u-1}^{\s} ( \mathbf{W}_{k,u-1}^{\s}) + g_{\mathrm{o}}(\mathbf{W}_{k,u-1}^{\s}) \right)$ and $\mathbf{W}_{k,0}^{\s} = \mathbf{W}_{k}$.
Thus, the accumulated update computed by the server in the $k$-th communication round is summarized as $\mathbf{\hat{g}}_{k}^{\s} \triangleq \sum_{u=0}^{\tau-1} g_{k,u}^{\s}( \mathbf{W}_{k,u}^{\s}) + g_{\mathrm{o}}(\mathbf{W}_{k,u}^{\s})$.

\subsection{Gradient Aggregation Scheme}

Since each communication round has a fixed duration of $T$, only a subset of the edge devices with high MAC rates and favorable channel quality can upload their model updates successfully.
In this regard, if the model updates are aggregated with the same weights, it will be biased towards the local data at fast edge devices, while the training data at stragglers are rarely exploited due to the lower arrival probability.
To resolve this issue, we design a new gradient aggregation scheme by attaching the aggregation weight as the reciprocal of their arrival probability $\frac{1}{p_i}$.
Besides, the missing model updates from stragglers in each communication round are compensated by using the stochastic gradients over the coded data computed at the server.
The global model update in the $k$-th communication round of SCFL can thus be expressed as follows:
\begin{equation}
    g(\mathbf{W}_{k})
    = \frac{1}{2} \left( \sum_{i=1}^{N} \frac{1}{p_i} \mathbf{\hat{g}}_{k}^{(i)} \mathbbm{1}_{k}^{(i)}
    + \mathbf{\hat{g}}_{k}^{\s} \right),
\label{eq:aggregated}
\end{equation}
where the gradient uploaded from edge devices and the coded gradient are given the same weight of $\frac{1}{2}$ in aggregation. This is because they are both equivalent to the gradient computed over the global training dataset in the expected sense.
As will be shown in the next section, $g(\mathbf{W}_{k})$ is an unbiased gradient estimate of the model update over the global dataset.
After the gradient aggregation, the server updates the global model as $\mathbf{W}_{k+1}=\mathbf{W}_{k} - \eta_{k} g(\mathbf{W}_{k})$ and transmits the new global model to the edge devices for the next communication round. 
It also maintains $\mathbf{W}_{k}$ to derive the final learned model as $\mathbf{W}_{\text{SCFL},K} = \frac{1}{\sum_{k=0}^{K-1} \eta_k} \sum_{k=0}^{K-1} \eta_k \mathbf{W}_k$ after $K$ communication rounds.
The complete training process of SCFL is summarized in Algorithm \ref{alg}.

\begin{figure}[t]
    \centering
    \includegraphics[width=0.33\textwidth]{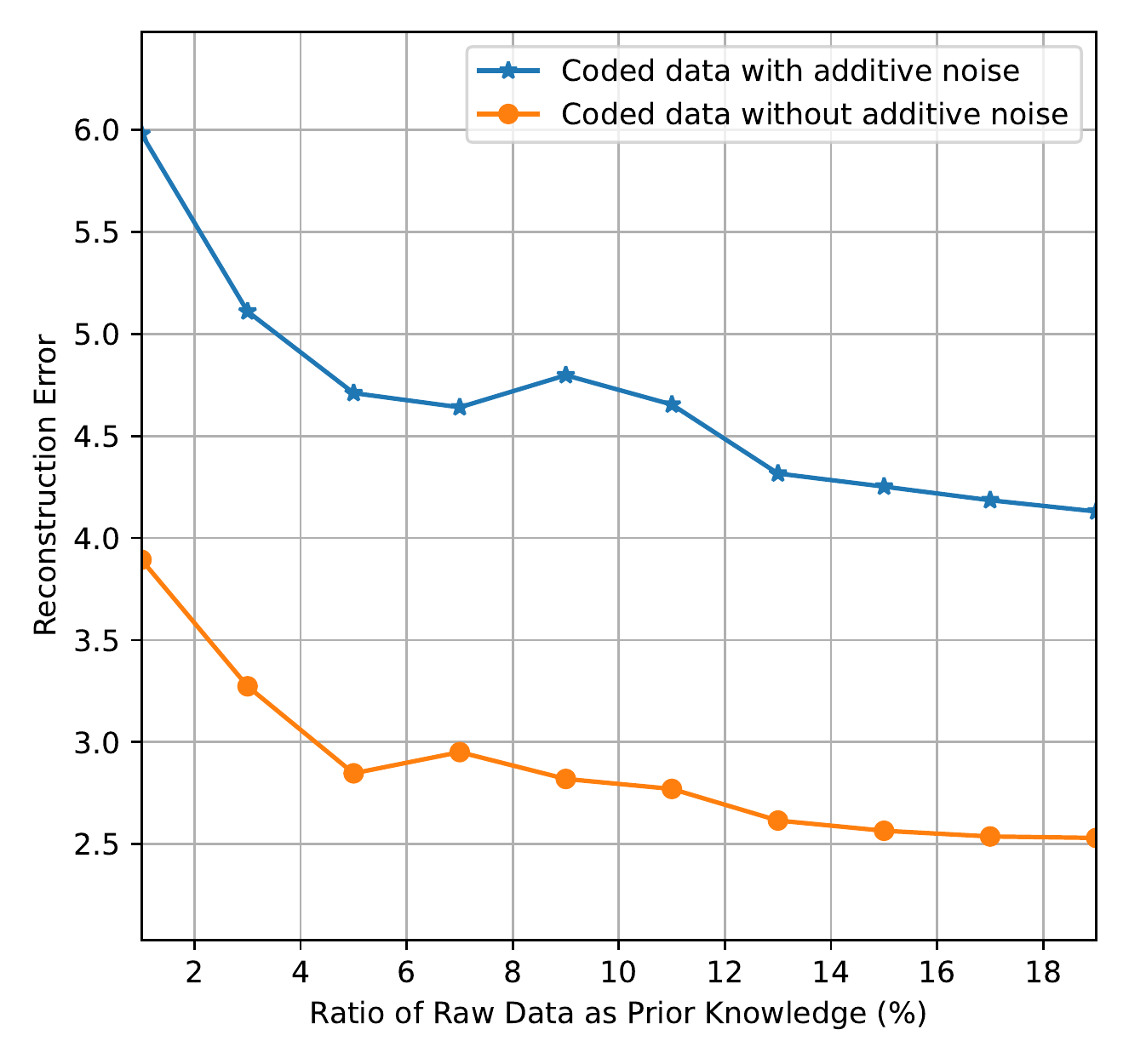}
    \caption{Reconstruction error using the coded data with and without additive noise vs. ratio of raw data samples as prior knowledge at an attacker.}
    \label{fig:r1}
\end{figure}

\textbf{Communication overhead.}
Coded data uploading is one-shot and with a communication cost of $\mathcal{O}(cd)$, while iteratively uploading gradients incurs a communication cost of $\mathcal{O}(Kdo)$.
Although uploading coded data incurs a non-negligible communication cost, it effectively mitigates the straggling effect and improves the learned model accuracy, as will be shown in Section \ref{sec:experiment}.

\begin{algorithm}[tb]
\setlength\abovedisplayskip{0.1cm}
\setlength\belowdisplayskip{0.1cm}
\begin{small}
\caption{Training Process of SCFL} \label{alg}
{\textit{// Before Training Starts //}}
\begin{algorithmic}[1]
\STATE{Each edge device computes the local coded dataset according to $\mathbf{\tilde{X}}^{(i)} = \mathbf{G}_{i} \mathbf{X}^{(i)}+ \mathbf{N}_{i}$ and $\mathbf{\tilde{Y}}^{(i)}=\mathbf{G}_{i} \mathbf{Y}^{(i)}$ and uploads them to the server;}
\STATE{The server constructs the global coded dataset as $(\mathbf{\tilde{X}}, \mathbf{\tilde{Y}}) = (\sum_{i=1}^{N} \mathbf{\tilde{X}}^{(i)}, \sum_{i=1}^{N} \mathbf{\tilde{Y}}^{(i)})$;}
\end{algorithmic}
{\textit{// During Training //}}
\begin{algorithmic}[1]
\STATE{\textbf{Training process at the server:}}
% \end{algorithmic}
% \begin{algorithmic}[1] %[1] enables line numbers
\STATE{Initialize a random global model $\mathbf{W}_{0}$;}
\FOR{$k=0,1,\dots,K-1$}
    \STATE{Broadcast $\mathbf{W}_{k}$ to each edge device $i\in[N]$ and inform them to perform \textit{LocalTrain}($\mathbf{W}_{k}$);}
    \FOR{$u=1,2,\dots,\tau$}
        \STATE{Compute the stochastic gradient based on the coded data according to \eqref{eq:server} and update the model according to $\mathbf{W}_{k,u}^{\s} = \mathbf{W}_{k,u-1}^{\s} - \eta_k \left( g_{k,u-1}^{\s} ( \mathbf{W}_{k,u-1}^{\s}) + g_{\mathrm{o}}(\mathbf{W}_{k,u-1}^{\s}) \right)$;}
    \ENDFOR
	\STATE{Aggregate the computed and received gradients according to \eqref{eq:aggregated} to obtain $g(\mathbf{W}_k)$;}
	\STATE{Update the global model as $\mathbf{W}_{k+1} = \mathbf{W}_{k} - \eta_k g(\mathbf{W}_k)$ and keep a local copy of $\mathbf{W}_{k+1}$;}
\ENDFOR
\RETURN{$\mathbf{W}_{\text{SCFL},K} = \frac{1}{\sum_{k=0}^{K-1} \eta_k} \sum_{k=0}^{K-1} \eta_k \mathbf{W}_k$ as the final learned model}
\STATE{\textbf{Training process at edge device $i$:}}
\STATE{\textbf{def} \textit{LocalTrain}($\mathbf{W}_{k}$):}
% \begin{algorithmic}[1]
\STATE{Initialize $\mathbf{W}_{k,0}^{(i)} = \mathbf{W}_{k}$;}
\FOR{$u=1,2,\dots,\tau$}
    % \STATE{Download the global model $\mathbf{W}_{k}$ from the server;}
	\STATE{Compute the stochastic gradient based on the local data according to \eqref{eq:device};}
	\STATE{Update the local model according to $\mathbf{W}_{k,u}^{(i)} = \mathbf{W}_{k,u-1}^{(i)} - \eta_k g_{k,u-1}^{(i)} ( \mathbf{W}_{k,u-1}^{(i)})$;}
\ENDFOR
\STATE{Compute the local update as $\mathbf{\hat{g}}_k^{(i)} = \sum_{u=0}^{\tau-1} g_{k,u}^{(i)}( \mathbf{W}_{k,u}^{(i)})$;}
\RETURN{$\mathbf{\hat{g}}_{k}^{(i)}$}
\end{algorithmic}
\end{small}
\end{algorithm}

\section{Theoretical Analysis}\label{sec:theory}

In this section, we characterize the convergence performance of SCFL and quantify the privacy leakage in coded data sharing.
Based on these results, we derive a privacy-performance trade-off which is determined by the noise levels of the coded data.

\subsection{Convergence Analysis} \label{sec:convergence}

We first present the following assumption to facilitate the convergence analysis \cite{CFL_journal,hu2016convergence}.

\begin{assumption}\label{ass:norm}
There exist positive constants $\{\alpha_i\}_{i=1}^{N}$, $\{\zeta_i\}_{i=1}^{N}$, $\{\kappa_i\}_{i=1}^{N}$, and $\phi$ such that $\alpha_i^2 \leq \left\| \mathbf{X}^{(i)} \right\|_{\mathrm{F}}^2 \leq \zeta_i^2$, $\left\| \mathbf{X}^{(i)} \mathbf{W} - \mathbf{Y}^{(i)} \right\|_{\mathrm{F}}^2 \leq \kappa_i^2$, and $\left\| \mathbf{W} \right\|_{\mathrm{F}}^2 \leq \phi^2$.
\end{assumption}

According to the definition of matrices $\mathbf{G}$, $\mathbf{N}$, $\{\mathbf{S}_{k,u}^{\s}\}$'s, and $\{ \mathbf{S}_{k,u}^{(i)} \}$'s, we derive some important properties in the following lemma, which are helpful for the analysis.

\begin{lemma}\label{useful-lemma}
Matrices $\mathbf{G}$, $\mathbf{N}$, $\{\mathbf{S}_{k,u}^{\s}\}$'s, and $\{ \mathbf{S}_{k,u}^{(i)} \}$'s have the following properties:
\begin{itemize} % [leftmargin=1em]
    \item $\mathbb{E}\left[ \left\|\frac{1}{c}\mathbf{G}^{\TT}\mathbf{G} \right\|_{\mathrm{F}} \right] = \mathbf{I}_m$ and $\mathbb{E} \left[\left\| \frac{1}{c}\mathbf{G}^{\TT}\mathbf{G} - \mathbf{I}_m \right\|_{\mathrm{F}}^2 \right] = \frac{m+m^2}{c}$.
    \item $\mathbb{E}\left[ \left\|\frac{1}{c}\mathbf{N}^{\TT}\mathbf{N} \right\|_{\mathrm{F}} \right] = \sigma^2 \mathbf{I}_d$ and $\mathbb{E} \left[\left\| \frac{1}{c}\mathbf{N}^{\TT}\mathbf{N} - \sigma^2 \mathbf{I}_d \right\|_{\mathrm{F}}^2 \right] = \frac{d+d^2}{c} \sum_{i=1}^{N} \sigma_i^4 $.
    \item $\mathbb{E}\left[ \frac{c}{b_{\s}} (\mathbf{S}_{k,u}^{\s})^{\TT} \mathbf{S}_{k,u}^{\s} \right] = \mathbf{I}_c$ and $\mathbb{E} \left[ \left\| \frac{c}{b_{\s}} (\mathbf{S}_{k,u}^{\s})^{\TT}\mathbf{S}_{k,u}^{\s} - \mathbf{I}_c \right\|_{\mathrm{F}}^2 \right] = \frac{c(c-b_{\s})}{b_{\s}}$.
    \item $\mathbb{E}\left[ \frac{l_i}{b_k^i} {\mathbf{S}_{k,u}^{(i)\TT}}\mathbf{S}_{k,u}^{(i)} \right]  =  \mathbf{I}_{l_i}$ and $\mathbb{E} \left[ \left\| \frac{l_i}{b_k^i}{\mathbf{S}_{k,u}^{(i)\TT}} \mathbf{S}_{k,u}^{(i)} - \mathbf{I}_{l_i} \right\|_{\mathrm{F}}^2 \right]  =  \frac{l_i(l_i-b_k^i)}{b_k^i}, \forall i\in[N]$.
\end{itemize}
\end{lemma}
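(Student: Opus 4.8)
The plan is to prove all four items by the same template, reducing each matrix identity to a handful of scalar moment computations that exploit the independence of the underlying entries. Every bullet has two parts: a first-moment (mean) identity and a second-moment (Frobenius-deviation) identity, and I read the first identity in each bullet as the entrywise expectation of the matrix (so, e.g., $\mathbb{E}\bigl[\tfrac{1}{c}\mathbf{G}^{\TT}\mathbf{G}\bigr]=\mathbf{I}_m$). For the mean identities I would compute the generic $(j,k)$ entry of the relevant matrix and take expectation entry by entry. For the deviation identities I would expand $\mathbb{E}\|\mathbf{A}-\mathbb{E}[\mathbf{A}]\|_{\mathrm{F}}^2=\sum_{j,k}\mathrm{Var}(\mathbf{A}_{jk})$ and use independence so that no cross-covariance terms survive; this turns each claim into counting diagonal versus off-diagonal entries and inserting the correct per-entry variance.

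For the Gaussian projection $\mathbf{G}\in\mathbb{R}^{c\times m}$ I would write $\bigl(\tfrac{1}{c}\mathbf{G}^{\TT}\mathbf{G}\bigr)_{jk}=\tfrac{1}{c}\sum_{t=1}^{c}G_{tj}G_{tk}$. Since the entries are i.i.d.\ $\mathcal{N}(0,1)$, $\mathbb{E}[G_{tj}G_{tk}]=\delta_{jk}$, giving $\mathbb{E}\bigl[\tfrac{1}{c}\mathbf{G}^{\TT}\mathbf{G}\bigr]=\mathbf{I}_m$. For the deviation, the $c$ rows are independent, so $\mathrm{Var}\bigl(\tfrac{1}{c}\sum_t G_{tj}G_{tk}\bigr)=\tfrac{1}{c}\mathrm{Var}(G_{1j}G_{1k})$, which equals $2/c$ on the diagonal (from $\mathbb{E}[G^4]=3$, hence $\mathrm{Var}(G^2)=2$) and $1/c$ off the diagonal (from $\mathbb{E}[G_{1j}^2G_{1k}^2]=1$ with zero mean). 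Counting $m$ diagonal and $m^2-m$ off-diagonal entries yields $\tfrac{2m+(m^2-m)}{c}=\tfrac{m+m^2}{c}$. The noise matrix $\mathbf{N}=\sum_{i}\mathbf{N}_i$ is handled identically once I note that, by independence of the $\mathbf{N}_i$, each entry $N_{tj}=\sum_{i=1}^{N}(N_i)_{tj}$ is $\mathcal{N}(0,\sigma^2)$ with $\sigma^2=\sum_{i=1}^{N}\sigma_i^2$; the same diagonal-plus-off-diagonal split, now using the second and fourth moments of this aggregate Gaussian, produces the mean $\sigma^2\mathbf{I}_d$ and a Frobenius-deviation of the stated form.

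For the Bernoulli sampling matrices I would use idempotence: each diagonal entry $s_j$ of $\mathbf{S}_{k,u}^{\s}$ lies in $\{0,1\}$, so $s_j^2=s_j$ and $\tfrac{c}{b_{\s}}(\mathbf{S}_{k,u}^{\s})^{\TT}\mathbf{S}_{k,u}^{\s}$ is diagonal with entries $\tfrac{c}{b_{\s}}s_j$. Then $\mathbb{E}\bigl[\tfrac{c}{b_{\s}}s_j\bigr]=\tfrac{c}{b_{\s}}\cdot\tfrac{b_{\s}}{c}=1$ gives the mean $\mathbf{I}_c$, and the Frobenius deviation is a sum of $c$ independent terms each with variance $\mathrm{Var}\bigl(\tfrac{c}{b_{\s}}s_j\bigr)=\tfrac{c^2}{b_{\s}^2}\cdot\tfrac{b_{\s}}{c}\bigl(1-\tfrac{b_{\s}}{c}\bigr)=\tfrac{c-b_{\s}}{b_{\s}}$, summing to $\tfrac{c(c-b_{\s})}{b_{\s}}$. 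The device matrices $\mathbf{S}_{k,u}^{(i)}$ follow by the same argument verbatim with $(c,b_{\s})$ replaced by $(l_i,b_k^i)$.

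I expect the only genuinely delicate step to be the second-moment bookkeeping in the Gaussian cases. One must verify that the products $G_{tj}G_{tk}$ are uncorrelated both across rows $t$ and across index pairs $(j,k)$, since this is exactly what licenses expanding $\mathbb{E}\|\cdot\|_{\mathrm{F}}^2$ as an independent-coordinate sum rather than carrying covariance terms, and one must track the diagonal entries separately because their variance exceeds the off-diagonal one. For $\mathbf{N}$ the additional care lies in evaluating the fourth moment of the summed entry $N_{tj}=\sum_{i=1}^{N}(N_i)_{tj}$ and aggregating the independent per-device contributions correctly, which is where the device-wise noise levels $\{\sigma_i\}$ enter the final expression. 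Everything else reduces to routine evaluations of Gaussian and Bernoulli moments.
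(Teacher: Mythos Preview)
Your approach is essentially the paper's: it invokes the Wishart law $\mathbf{G}^{\TT}\mathbf{G}\sim W_m(c,\mathbf{I}_m)$ and reads off its first two moments, and treats the Bernoulli sampling matrices via the same idempotence argument you give. Your entrywise moment computation is just the unpackaged version of the Wishart shortcut, so the routes coincide.

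One genuine issue, however, is your claim that the fourth moment of the aggregate entry $N_{tj}=\sum_{i=1}^{N}(N_i)_{tj}$ will deliver ``a Frobenius-deviation of the stated form,'' meaning $\tfrac{d+d^2}{c}\sum_{i=1}^{N}\sigma_i^4$. It will not. Since the $(N_i)_{tj}$ are independent Gaussians, $N_{tj}\sim\mathcal{N}(0,\sigma^2)$ with $\sigma^2=\sum_{i}\sigma_i^2$, so $\mathrm{Var}(N_{tj}^2)=2\sigma^4$ on the diagonal and $\mathrm{Var}(N_{tj}N_{tk})=\sigma^4$ off it; your own template then gives
\[
\mathbb{E}\Bigl\|\tfrac{1}{c}\mathbf{N}^{\TT}\mathbf{N}-\sigma^2\mathbf{I}_d\Bigr\|_{\mathrm{F}}^2
=\frac{d+d^2}{c}\,\sigma^4
=\frac{d+d^2}{c}\Bigl(\sum_{i=1}^{N}\sigma_i^2\Bigr)^{2},
\]
which equals $\tfrac{d+d^2}{c}\sum_{i}\sigma_i^4$ only when $N=1$. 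The paper's own argument (``similar to $\mathbf{G}$'', i.e.\ $\mathbf{N}^{\TT}\mathbf{N}\sim W_d(c,\sigma^2\mathbf{I}_d)$) yields the same $\sigma^4$ answer, so the discrepancy lies in the displayed formula, not in your method. Carry out your computation as written and expect to land on $(\sum_i\sigma_i^2)^2$; do not try to ``aggregate per-device contributions'' in a way that forces $\sum_i\sigma_i^4$ out, because no correct bookkeeping will do that here.
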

\begin{proof}
Since each entry of $\mathbf{G}$ is independently sampled from Gaussian distribution $\mathcal{N}\left(0, 1\right)$, $\mathbf{G}^{\TT}\mathbf{G}$ follows the Wishart distribution, i.e., $\mathbf{G}^{\TT} \mathbf{G} \sim W_m(c,\mathbf{I}_m)$, which leads to: $\mathbb{E}[\frac{1}{c} \mathbf{G}^{\TT} \mathbf{G}] = \mathbf{I}_m$ and $\mathbb{E} \left[\left\| \frac{1}{c}\mathbf{G}^{\TT}\mathbf{G} - \mathbf{I}_m \right\|_{\mathrm{F}}^2 \right] = \frac{m+m^2}{c}$.
Besides, since matrix $\mathbf{S}_{k,u}^{\s}$ is symmetric and diagonal, each entry of $\frac{c}{b_{\s}} (\mathbf{S}_{k,u}^{\s})^{\TT} \mathbf{S}_{k,u}^{\s}$ has a unit mean.
Thus, we have $\mathbb{E}\left[ \frac{c}{b_{\s}} (\mathbf{S}_{k,u}^{\s})^{\TT} \mathbf{S}_{k,u}^{\s} \right] = \mathbf{I}_c$ and $\mathbb{E} \left[ \left\| \frac{c}{b_{\s}} (\mathbf{S}_{k,u}^{\s})^{\TT}\mathbf{S}_{k,u}^{\s} - \mathbf{I}_c \right\|_{\mathrm{F}}^2 \right] = \frac{c(c-b_{\s})}{b_{\s}}$.
The proofs for $\mathbf{N}$ and $\{\mathbf{S}_{k,u}^{(i)}\}$'s are similar with those of $\mathbf{G}$ and $\{\mathbf{S}_{k,u}^{\s}\}$, respectively, which are omitted for brevity.
\end{proof}

We define an auxiliary model updating process initialized with global model $\mathbf{W}_{k,0} = \mathbf{W}_{k}$ at the beginning of the $k$-th communication round. 
Assume the server has access to the global dataset,  and it performs $\tau$-step gradient descent according to $\mathbf{W}_{k,u} = \mathbf{W}_{k,u-1} - \eta_{k} \nabla f(\mathbf{W}_{k,u-1}), u= 1,2,\dots,\tau$. The virtual accumulated model update can be summarized as $\mathbf{u}_{k} \triangleq \sum_{u=0}^{\tau-1} \nabla f(\mathbf{W}_{k,u})$.
The main challenge of the convergence analysis for SCFL is to show the aggregated update in \eqref{eq:aggregated} is an unbiased estimate of the virtual accumulated model update $\mathbf{u}_{k}$ with bounded variance, as elaborated in the following lemmas.

\begin{lemma}\label{lem:unbiased}
The global model update in SCFL is an unbiased estimate of the virtual accumulated model update, i.e., $\mathbb{E}[g(\mathbf{W}_{k})]= \mathbf{u}_{k}, \forall k\in[K]$.
\end{lemma}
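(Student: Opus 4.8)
The plan is to prove the claim by taking expectations separately over the two contributions in the aggregated update \eqref{eq:aggregated}, namely the edge-device term $\sum_{i=1}^{N}\frac{1}{p_i}\mathbf{\hat{g}}_k^{(i)}\mathbbm{1}_k^{(i)}$ and the server term $\mathbf{\hat{g}}_k^{\s}$, and showing that each is, in expectation, equal to the virtual accumulated update $\mathbf{u}_k$; the prefactor $\frac{1}{2}$ then averages two unbiased estimators of the same quantity. Throughout I would exploit that, for linear regression, both $\nabla f(\mathbf{W})=\mathbf{X}^{\TT}(\mathbf{X}\mathbf{W}-\mathbf{Y})$ and $\nabla f_i(\mathbf{W})=\mathbf{X}^{(i)\TT}(\mathbf{X}^{(i)}\mathbf{W}-\mathbf{Y}^{(i)})$ are affine in $\mathbf{W}$, so expectation commutes with the gradient map, i.e. $\mathbb{E}[\nabla f(\mathbf{W})]=\nabla f(\mathbb{E}[\mathbf{W}])$, and $\sum_{i=1}^{N}\nabla f_i=\nabla f$ by the concatenation structure of the global dataset.

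For the edge-device term, I would first use that the arrival indicator $\mathbbm{1}_k^{(i)}$ is independent of the sampling randomness generating $\mathbf{\hat{g}}_k^{(i)}$ with $\mathbb{E}[\mathbbm{1}_k^{(i)}]=p_i$, so the weight $\frac{1}{p_i}$ cancels the arrival probability, giving $\mathbb{E}[\frac{1}{p_i}\mathbf{\hat{g}}_k^{(i)}\mathbbm{1}_k^{(i)}]=\mathbb{E}[\mathbf{\hat{g}}_k^{(i)}]$. Substituting $\mathbf{\hat{X}}_{k,u}^{(i)}=\mathbf{S}_{k,u}^{(i)}\mathbf{X}^{(i)}$ into \eqref{eq:device} and conditioning on the iterate $\mathbf{W}_{k,u}^{(i)}$, the sampling identity $\mathbb{E}[\frac{l_i}{b_k^i}{\mathbf{S}_{k,u}^{(i)\TT}}\mathbf{S}_{k,u}^{(i)}]=\mathbf{I}_{l_i}$ from Lemma \ref{useful-lemma} shows each local stochastic gradient is conditionally unbiased for the local full gradient $\nabla f_i(\mathbf{W}_{k,u}^{(i)})$. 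Summing over the devices and using $\sum_{i=1}^{N}\nabla f_i=\nabla f$ recovers the global full gradient at each step, and summing over the $\tau$ steps matches the accumulated virtual update $\mathbf{u}_k$ once the per-step reference iterates are aligned with the virtual trajectory $\mathbf{W}_{k,u}$.

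For the server term, the crucial step is cancelling the bias introduced by the Gaussian noise. I would expand $\mathbf{\tilde{X}}^{\TT}\mathbf{\tilde{X}}=(\mathbf{G}\mathbf{X}+\mathbf{N})^{\TT}(\mathbf{G}\mathbf{X}+\mathbf{N})$ and $\mathbf{\tilde{X}}^{\TT}\mathbf{\tilde{Y}}=(\mathbf{G}\mathbf{X}+\mathbf{N})^{\TT}\mathbf{G}\mathbf{Y}$, take expectations, and invoke Lemma \ref{useful-lemma}: since $\mathbf{G}$ and $\mathbf{N}$ are independent and zero-mean, the cross terms $\mathbb{E}[\mathbf{X}^{\TT}\mathbf{G}^{\TT}\mathbf{N}]$ and $\mathbb{E}[\mathbf{N}^{\TT}\mathbf{G}\mathbf{X}]$ vanish, while $\mathbb{E}[\frac{1}{c}\mathbf{G}^{\TT}\mathbf{G}]=\mathbf{I}_m$ and $\mathbb{E}[\frac{1}{c}\mathbf{N}^{\TT}\mathbf{N}]=\sigma^2\mathbf{I}_d$. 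Combined with the server-sampling identity $\mathbb{E}[\frac{c}{b_{\s}}(\mathbf{S}_{k,u}^{\s})^{\TT}\mathbf{S}_{k,u}^{\s}]=\mathbf{I}_c$, this yields $\mathbb{E}[g_{k,u}^{\s}(\mathbf{W})]=\nabla f(\mathbf{W})+\sigma^2\mathbf{W}$ for a fixed $\mathbf{W}$ in \eqref{eq:server}, so the make-up term $g_{\mathrm{o}}(\mathbf{W})=-\sigma^2\mathbf{W}$ precisely annihilates the spurious $\sigma^2\mathbf{W}$ and leaves $\nabla f(\mathbf{W})$. Because the server descent direction then matches the global gradient exactly, the affine-structure induction gives $\mathbb{E}[\mathbf{W}_{k,u}^{\s}]=\mathbf{W}_{k,u}$, and summing over $u$ gives $\mathbb{E}[\mathbf{\hat{g}}_k^{\s}]=\mathbf{u}_k$.

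Substituting both expectations into \eqref{eq:aggregated} then gives $\mathbb{E}[g(\mathbf{W}_k)]=\frac{1}{2}(\mathbf{u}_k+\mathbf{u}_k)=\mathbf{u}_k$. I expect the crucial mechanism to be the server-side algebra: verifying that the independence and zero-mean structure of $\mathbf{G}$ and $\mathbf{N}$ eliminate the cross terms, and that the $\mathbf{N}^{\TT}\mathbf{N}$ term produces exactly the $\sigma^2\mathbf{W}$ bias that $g_{\mathrm{o}}$ is designed to cancel (which is precisely why $\sigma^2=\sum_{i=1}^{N}\sigma_i^2$ must be used). The main technical obstacle, however, is the bookkeeping across the $\tau$ local steps: the affine form of the linear-regression gradient is essential so that taking expectation before or after applying the gradient map gives the same result, and care is needed in aligning the expected per-step iterates with the virtual trajectory $\mathbf{W}_{k,u}$.
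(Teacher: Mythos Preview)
Your proposal is essentially the paper's own argument. The paper proceeds by mathematical induction on the local step index $u$: it first establishes (its Lemma~\ref{lem:equal}) that, at each step, the device-side stochastic gradient and the server-side stochastic gradient are each unbiased for the corresponding piece of $\nabla f(\mathbf{W}_{k,u})$, using exactly the sampling identities in Lemma~\ref{useful-lemma}, the independence and zero-mean structure of $\mathbf{G}$ and $\mathbf{N}$ to kill the cross terms, and the make-up term $-\sigma^2\mathbf{W}$ to cancel $\mathbb{E}[\frac{1}{c}\mathbf{N}^{\TT}\mathbf{N}]=\sigma^2\mathbf{I}_d$; it then sums over $u=0,\dots,\tau-1$. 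Your explicit invocation of the affineness of $\nabla f$ (so that $\mathbb{E}[\nabla f(\mathbf{W})]=\nabla f(\mathbb{E}[\mathbf{W}])$) is precisely what carries the induction from one step to the next, and the paper uses the same device implicitly when it assumes $\mathbb{E}[\mathbf{W}_{k,u}^{(i)}]=\mathbb{E}[\mathbf{W}_{k,u}^{\s}]=\mathbb{E}[\mathbf{W}_{k,u}]$ and then pushes the expectation through. The only organizational difference is that you treat the device term and the server term separately and then average, whereas the paper packages them together per step before summing; the algebra and the key cancellation are identical.
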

\begin{proof}
By mathematical induction, we show that in each local step $u\in[\tau]$, the sum of the stochastic gradients over the local dataset, i.e., $\sum_{i=1}^{N} \frac{1}{p_i} g_{k,u}^{(i)}(\mathbf{W}_{k,u}^{(i)}) \mathbbm{1}_{k}^{(i)}$, and the global coded dataset, i.e., $g_{k,u}^{\s}(\mathbf{W}_{k,u}^{\s})$, is an unbiased estimate of the virtual gradient computed over the global dataset, i.e., $\nabla f(\mathbf{W}_{k,u})$.
Then we sum up the gradients over $u=0,1,\dots,\tau \!-\! 1$ to complete the proof.
The detailed proof is delegated to Appendix \ref{proof:unbiased}.
\end{proof}

To show the bounded variance property, we decompose the estimation error of the global update as follows:
\begin{align}
    & \mathbb{E} [\| g(\mathbf{W}_{k}) - \mathbf{u}_{k} \|_{\mathrm{F}}^2 ] \\
    =& \mathbb{E} \Big[ \Big\| \frac{1}{2} \Big( \sum_{u=0}^{\tau-1}  \sum_{i=1}^{N} \frac{1}{p_i} g_{k,u}^{(i)}(\mathbf{W}_{k,u}^{(i)}) \mathbbm{1}_{k}^{(i)} \!-\! \nabla f(\mathbf{W}_{k,u}) \Big) \nonumber \\
    &+ \frac{1}{2} \Big(\sum_{u=0}^{\tau-1}  g_{k,u}^{\s}(\mathbf{W}_{k,u}^{\s}) \!-\! \sigma^2 \mathbf{W}_{k,u}^{\s} \!-\! \nabla f(\mathbf{W}_{k,u}) \Big) \Big\|_{\mathrm{F}}^2 \Big] \nonumber \\
    =& \frac{1}{4} \underbrace{\mathbb{E} \Big[\Big\| \sum_{u=0}^{\tau-1}  \sum_{i=1}^{N} \frac{\mathbbm{1}_{k}^{(i)}}{p_i} g_{k,u}^{(i)}(\mathbf{W}_{k,u}^{(i)}) - \nabla f(\mathbf{W}_{k,u}) \Big\|_{\mathrm{F}}^2 \Big]}_{V_1}  \nonumber \\
    &+ \frac{1}{4}  \underbrace{\mathbb{E} \Big[\Big\| \sum_{u=0}^{\tau-1}  g_{k,u}^{\s}(\mathbf{W}_{k,u}^{\s}) \!-\! \sigma^2 \mathbf{W}_{k,u}^{\s} - \nabla f(\mathbf{W}_{k,u}) \Big\|_{\mathrm{F}}^2 \Big]}_{V_2}, \nonumber
\end{align}
where $V_1$ and $V_2$ respectively represent the gradient estimation errors attributed to the gradients computed on the local datasets and the coded dataset, and the last equality is because of their independence.
Next, we upper bound $V_1$ and $V_2$ in Lemma \ref{lem:device} and Lemma \ref{lem:server}, respectively.

\begin{lemma}\label{lem:device}
The estimation error of the aggregated gradients received from the edge devices is upper bounded as follows:
\begin{equation}
    V_1
    \leq \underbrace{2 \tau \sum_{i=1}^{N} \frac{1 - p_i}{p_i} \zeta_i^2\kappa_i^2 + 
    2\tau \sum_{i=1}^{N} \frac{l_i(l_i-b_k^i)}{b_k^i} \zeta_i^2\kappa_i^2}_{\rho_1}.
\end{equation}
\end{lemma}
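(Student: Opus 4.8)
The plan is to reduce the $\tau$-step accumulated error to a single-step variance computation and then to separate the two independent sources of randomness that drive that variance: the arrival indicators $\{\mathbbm{1}_k^{(i)}\}$ and the mini-batch sampling matrices $\{\mathbf{S}_{k,u}^{(i)}\}$. First I would apply the Cauchy--Schwarz (Jensen) inequality $\left\| \sum_{u=0}^{\tau-1}\mathbf{A}_u \right\|_{\mathrm{F}}^2 \le \tau\sum_{u=0}^{\tau-1}\left\|\mathbf{A}_u\right\|_{\mathrm{F}}^2$ to the outer sum over local steps, with $\mathbf{A}_u \triangleq \sum_{i=1}^N \frac{\mathbbm{1}_k^{(i)}}{p_i}g_{k,u}^{(i)}(\mathbf{W}_{k,u}^{(i)}) - \nabla f(\mathbf{W}_{k,u})$. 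This produces the leading factor $\tau$ and leaves the task of bounding $\mathbb{E}\left\|\mathbf{A}_u\right\|_{\mathrm{F}}^2$ for a single step. Because the per-step unbiasedness $\mathbb{E}\big[\sum_i \frac{\mathbbm{1}_k^{(i)}}{p_i}g_{k,u}^{(i)}(\mathbf{W}_{k,u}^{(i)})\big]=\nabla f(\mathbf{W}_{k,u})$ is already available from the proof of Lemma \ref{lem:unbiased}, each $\mathbb{E}\left\|\mathbf{A}_u\right\|_{\mathrm{F}}^2$ is a genuine variance and can be split cleanly.

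For each step I would decompose $\mathbf{A}_u$ into an arrival-error part and a sampling-error part and bound them separately using $\left\|\mathbf{a}+\mathbf{b}\right\|_{\mathrm{F}}^2 \le 2\left\|\mathbf{a}\right\|_{\mathrm{F}}^2 + 2\left\|\mathbf{b}\right\|_{\mathrm{F}}^2$, which accounts for the factor $2$ in $\rho_1$. The arrival part gathers the terms $\sum_{i}\big(\frac{\mathbbm{1}_k^{(i)}}{p_i}-1\big)\nabla f_i(\mathbf{W}_{k,u}^{(i)})$. Since the indicators are independent across devices and independent of the local SGD (an edge device trains regardless of whether its update arrives), the per-device cross terms vanish in expectation and $\mathbb{E}\big[(\frac{\mathbbm{1}_k^{(i)}}{p_i}-1)^2\big]=\frac{1-p_i}{p_i}$. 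Combined with the uniform bound $\left\|\nabla f_i(\mathbf{W})\right\|_{\mathrm{F}} \le \left\|\mathbf{X}^{(i)}\right\|_{\mathrm{F}}\left\|\mathbf{X}^{(i)}\mathbf{W}-\mathbf{Y}^{(i)}\right\|_{\mathrm{F}} \le \zeta_i\kappa_i$ from Assumption \ref{ass:norm}, this yields the first summand $\sum_i \frac{1-p_i}{p_i}\zeta_i^2\kappa_i^2$.

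For the sampling part I would write the per-device residual as $g_{k,u}^{(i)}(\mathbf{W}_{k,u}^{(i)}) - \nabla f_i(\mathbf{W}_{k,u}^{(i)}) = \mathbf{X}^{(i)\TT}\big(\frac{l_i}{b_k^i}\mathbf{S}_{k,u}^{(i)\TT}\mathbf{S}_{k,u}^{(i)} - \mathbf{I}_{l_i}\big)(\mathbf{X}^{(i)}\mathbf{W}_{k,u}^{(i)} - \mathbf{Y}^{(i)})$, apply sub-multiplicativity of the Frobenius norm, and take expectations. Independence of $\{\mathbf{S}_{k,u}^{(i)}\}$ across devices again kills the cross terms, the fourth identity of Lemma \ref{useful-lemma} supplies $\mathbb{E}\big\|\frac{l_i}{b_k^i}\mathbf{S}_{k,u}^{(i)\TT}\mathbf{S}_{k,u}^{(i)} - \mathbf{I}_{l_i}\big\|_{\mathrm{F}}^2 = \frac{l_i(l_i-b_k^i)}{b_k^i}$, and Assumption \ref{ass:norm} supplies $\left\|\mathbf{X}^{(i)}\right\|_{\mathrm{F}}^2 \le \zeta_i^2$ and $\left\|\mathbf{X}^{(i)}\mathbf{W}_{k,u}^{(i)}-\mathbf{Y}^{(i)}\right\|_{\mathrm{F}}^2 \le \kappa_i^2$. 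This produces the second summand $\sum_i \frac{l_i(l_i-b_k^i)}{b_k^i}\zeta_i^2\kappa_i^2$. Adding the two parts, multiplying by $2$, and then by $\tau$ gives $\rho_1$.

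The main obstacle I anticipate is handling the two randomness sources jointly together with the local-model drift $\mathbf{W}_{k,u}^{(i)}\neq\mathbf{W}_{k,u}$. One must condition in the right order, fixing the arrivals and the drifted local iterates $\{\mathbf{W}_{k,u}^{(i)}\}$ first and only then taking expectation over the fresh sampling $\mathbf{S}_{k,u}^{(i)}$, so that the per-device cross terms genuinely vanish; and one must exploit the fact that the bounds in Assumption \ref{ass:norm} hold uniformly in $\mathbf{W}$ (in particular at the drifted iterates $\mathbf{W}_{k,u}^{(i)}$), so that the drift never needs to be tracked and no $\phi$-dependent term surfaces. The most delicate point is the weight $\frac{\mathbbm{1}_k^{(i)}}{p_i}$ multiplying the sampling residual: the decomposition must be arranged so that this factor is attached to the already-centered arrival term rather than to the sampling term, otherwise an extra $\frac{1}{p_i}$ would inflate the sampling contribution beyond the stated $\frac{l_i(l_i-b_k^i)}{b_k^i}\zeta_i^2\kappa_i^2$.
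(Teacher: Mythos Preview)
Your plan matches the paper's proof almost step for step: Jensen's inequality over the $\tau$ local steps to extract the factor $\tau$; a two-way split producing the factor $2$; independence across devices to kill cross terms; the Bernoulli variance $\mathbb{E}[(\mathbbm{1}_k^{(i)}-p_i)^2]/p_i^2=\tfrac{1-p_i}{p_i}$ for the arrival part; the fourth identity of Lemma~\ref{useful-lemma} for the sampling part; and Assumption~\ref{ass:norm} to cap each per-device factor by $\zeta_i^2\kappa_i^2$.

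There is one inconsistency in your decomposition that is worth fixing. You write the arrival part as $\sum_i(\tfrac{\mathbbm{1}_k^{(i)}}{p_i}-1)\nabla f_i(\mathbf{W}_{k,u}^{(i)})$ and the sampling part as $\sum_i(g_{k,u}^{(i)}-\nabla f_i)$ \emph{without} the weight $\tfrac{\mathbbm{1}_k^{(i)}}{p_i}$, but these two pieces do not sum to $\mathbf{A}_u$. The split the paper actually uses (and the one your last paragraph is reaching for) is
\[
\mathbf{A}_u=\sum_{i=1}^N\Big(\tfrac{\mathbbm{1}_k^{(i)}}{p_i}-1\Big)g_{k,u}^{(i)}(\mathbf{W}_{k,u}^{(i)})+\sum_{i=1}^N\Big(g_{k,u}^{(i)}(\mathbf{W}_{k,u}^{(i)})-\nabla f_i(\mathbf{W}_{k,u}^{(i)})\Big),
\]
i.e., the arrival factor is attached to the \emph{stochastic} gradient $g_{k,u}^{(i)}$ rather than to the full-batch $\nabla f_i$. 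With this arrangement the sampling term indeed carries no $\tfrac{1}{p_i}$, and the arrival term is bounded via $\tfrac{1-p_i}{p_i}\,\mathbb{E}\|g_{k,u}^{(i)}\|_{\mathrm{F}}^2$, which the paper then caps by $\zeta_i^2\kappa_i^2$. Apart from this bookkeeping point, your approach is the paper's approach.
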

\begin{proof}
Please refer to Appendix \ref{appendix-B}.
\end{proof}

\begin{lemma}\label{lem:server}
Define $\mathbf{\tilde{\sigma}} \triangleq [\sigma_1^2,\sigma_2^2,\dots,\sigma_N^2]$.
The estimation error of the gradients computed on the global coded dataset is upper bounded as follows:
\begin{align}
    V_2 \leq & \frac{4\tau}{c} (m+m^2) \zeta\kappa + \frac{4\tau}{c} (d+d^2) \phi^2 \sum_{i=1}^{N} \sigma_i^4 \nonumber \\
    &+ \frac{4 dmn \tau }{c^2} (\zeta\phi^2 + \kappa) \sum_{i=1}^{N} \sigma_i^2 \coloneqq \rho_2(\mathbf{\tilde{\sigma}}).
\end{align}
\end{lemma}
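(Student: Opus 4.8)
The plan is to control $V_2$ in three moves: extract the factor $\tau$, rewrite the per-step error as a \emph{coding} error evaluated at a single model (which the make-up term $-\sigma^2\mathbf{W}$ renders unbiased), and then split that error into four matrix products whose second moments are exactly the quantities tabulated in Lemma \ref{useful-lemma}. First I would apply Jensen's inequality $\big\|\sum_{u=0}^{\tau-1}\mathbf{a}_u\big\|_{\mathrm{F}}^2\le\tau\sum_{u=0}^{\tau-1}\|\mathbf{a}_u\|_{\mathrm{F}}^2$ to reduce $V_2$ to $\tau$ copies of the per-step error $\mathbb{E}\big[\|g_{k,u}^{\s}(\mathbf{W}_{k,u}^{\s})-\sigma^2\mathbf{W}_{k,u}^{\s}-\nabla f(\mathbf{W}_{k,u})\|_{\mathrm{F}}^2\big]$, which produces the leading $\tau$ common to every term of $\rho_2(\mathbf{\tilde{\sigma}})$.

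The heart of the argument is the algebraic decomposition of the per-step error. Taking the conditional expectation over the sampling matrix $\mathbf{S}_{k,u}^{\s}$ via $\mathbb{E}[\frac{c}{b_{\s}}(\mathbf{S}_{k,u}^{\s})^{\TT}\mathbf{S}_{k,u}^{\s}]=\mathbf{I}_c$ replaces $g_{k,u}^{\s}(\mathbf{W})$ by the full-batch coded gradient $\frac{1}{c}\mathbf{\tilde{X}}^{\TT}(\mathbf{\tilde{X}}\mathbf{W}-\mathbf{\tilde{Y}})$. Substituting $\mathbf{\tilde{X}}=\mathbf{G}\mathbf{X}+\mathbf{N}$ and $\mathbf{\tilde{Y}}=\mathbf{G}\mathbf{Y}$ gives $\mathbf{\tilde{X}}\mathbf{W}-\mathbf{\tilde{Y}}=\mathbf{G}(\mathbf{X}\mathbf{W}-\mathbf{Y})+\mathbf{N}\mathbf{W}$ and $\mathbf{\tilde{X}}^{\TT}=\mathbf{X}^{\TT}\mathbf{G}^{\TT}+\mathbf{N}^{\TT}$, so that subtracting $\sigma^2\mathbf{W}+\nabla f(\mathbf{W})$ (recalling $\nabla f(\mathbf{W})=\mathbf{X}^{\TT}(\mathbf{X}\mathbf{W}-\mathbf{Y})$) leaves the four pieces
\begin{align}
&\mathbf{X}^{\TT}\Big(\tfrac{1}{c}\mathbf{G}^{\TT}\mathbf{G}-\mathbf{I}_m\Big)(\mathbf{X}\mathbf{W}-\mathbf{Y}),\quad \Big(\tfrac{1}{c}\mathbf{N}^{\TT}\mathbf{N}-\sigma^2\mathbf{I}_d\Big)\mathbf{W}, \nonumber\\
&\tfrac{1}{c}\mathbf{X}^{\TT}\mathbf{G}^{\TT}\mathbf{N}\mathbf{W},\quad \tfrac{1}{c}\mathbf{N}^{\TT}\mathbf{G}(\mathbf{X}\mathbf{W}-\mathbf{Y}). \nonumber
\end{align}
The make-up term is precisely what cancels the mean of the second piece (this is also why the coded gradient is unbiased in Lemma \ref{lem:unbiased}). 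Applying $\|\mathbf{a}+\mathbf{b}+\mathbf{c}+\mathbf{d}\|_{\mathrm{F}}^2\le 4(\|\mathbf{a}\|_{\mathrm{F}}^2+\|\mathbf{b}\|_{\mathrm{F}}^2+\|\mathbf{c}\|_{\mathrm{F}}^2+\|\mathbf{d}\|_{\mathrm{F}}^2)$ supplies the factor $4$.

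It then remains to bound each piece by Frobenius submultiplicativity $\|\mathbf{A}\mathbf{B}\|_{\mathrm{F}}\le\|\mathbf{A}\|_{\mathrm{F}}\|\mathbf{B}\|_{\mathrm{F}}$, the global bounds $\|\mathbf{X}\|_{\mathrm{F}}^2\le\zeta$, $\|\mathbf{X}\mathbf{W}-\mathbf{Y}\|_{\mathrm{F}}^2\le\kappa$, $\|\mathbf{W}\|_{\mathrm{F}}^2\le\phi^2$ of Assumption \ref{ass:norm}, and the moment identities of Lemma \ref{useful-lemma}. The first two pieces collapse onto $\mathbb{E}\|\tfrac{1}{c}\mathbf{G}^{\TT}\mathbf{G}-\mathbf{I}_m\|_{\mathrm{F}}^2=\tfrac{m+m^2}{c}$ and $\mathbb{E}\|\tfrac{1}{c}\mathbf{N}^{\TT}\mathbf{N}-\sigma^2\mathbf{I}_d\|_{\mathrm{F}}^2=\tfrac{d+d^2}{c}\sum_i\sigma_i^4$, giving the first two terms of $\rho_2$; the two cross pieces require the auxiliary moment $\mathbb{E}\|\mathbf{G}^{\TT}\mathbf{N}\|_{\mathrm{F}}^2=cmd\sum_i\sigma_i^2$, which follows from the independence and zero mean of $\mathbf{G}$ and $\mathbf{N}$ (only the $r=r'$ diagonal contributions survive), and which produces the final cross term of $\rho_2$.

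I expect the main obstacle to be the mismatch between the two iterates together with the randomness shared across steps. The server iterate $\mathbf{W}_{k,u}^{\s}$ is propagated by the \emph{stochastic} coded gradients, which all reuse the single realization of $\mathbf{G}$ and $\mathbf{N}$, whereas $\nabla f$ is evaluated at the \emph{virtual} iterate $\mathbf{W}_{k,u}$ driven by the exact gradient; consequently the summands over $u$ are statistically coupled and I cannot appeal to any cancellation or cross-step independence. The remedy is to never exploit such cancellation: I bound every residual and model norm by the deterministic constants $\zeta$, $\kappa$, $\phi$, which hold for all reachable models (including both $\mathbf{W}_{k,u}^{\s}$ and $\mathbf{W}_{k,u}$), so that after taking expectations the estimate depends only on these constants and the matrix moments of Lemma \ref{useful-lemma}. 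The residual sampling variance from $\mathbf{S}_{k,u}^{\s}$ is handled by the companion identity $\mathbb{E}\|\tfrac{c}{b_{\s}}(\mathbf{S}_{k,u}^{\s})^{\TT}\mathbf{S}_{k,u}^{\s}-\mathbf{I}_c\|_{\mathrm{F}}^2=\tfrac{c(c-b_{\s})}{b_{\s}}$ and is negligible in the server's large-batch regime.
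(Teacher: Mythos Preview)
Your proposal is correct and follows essentially the same route as the paper: Jensen to extract the factor~$\tau$, a four-term expansion of the coding error $\tfrac{1}{c}\mathbf{\tilde X}^{\TT}(\mathbf{\tilde X}\mathbf{W}-\mathbf{\tilde Y})-\sigma^2\mathbf{W}-\nabla f(\mathbf{W})$ via $\mathbf{\tilde X}=\mathbf{G}\mathbf{X}+\mathbf{N}$, the inequality $\|\sum_{j=1}^{4}\mathbf{a}_j\|_{\mathrm F}^2\le 4\sum_j\|\mathbf{a}_j\|_{\mathrm F}^2$, Frobenius submultiplicativity together with Assumption~\ref{ass:norm}, and the second-moment identities of Lemma~\ref{useful-lemma}. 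The only presentational difference is that the paper first inserts the full-batch coded gradient $\nabla f_{\mathrm s}(\mathbf{W})=\tfrac{1}{c}\mathbf{\tilde X}^{\TT}(\mathbf{\tilde X}\mathbf{W}-\mathbf{\tilde Y})$ and uses the conditional mean-zero of $g_{k,u}^{\s}-\nabla f_{\mathrm s}$ (independence of $\mathbf{S}_{k,u}^{\s}$ from $\mathbf{G},\mathbf{N}$) to split the per-step error \emph{exactly} into a sampling part (bounded by $\tfrac{c-b_{\s}}{cb_{\s}}\zeta\kappa$) and the coding part; you instead pass to the coding error by ``conditional expectation'' and relegate the sampling variance to a residual term at the end. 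Both arrive at the same bound, and your final paragraph is actually more explicit than the paper about the $\mathbf{W}_{k,u}^{\s}$ vs.\ $\mathbf{W}_{k,u}$ mismatch being harmless under the uniform constants $\zeta,\kappa,\phi$.
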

\begin{proof}
Please refer to Appendix \ref{appendix-C}.
\end{proof}

With the above lemmas, we characterize the training performance of SCFL in the following theorem and remark.

\begin{theorem}\label{thm-convergence}
Define the optimality gap after a given training time $T_\text{tot} \triangleq TK $ as $G(T_\text{tot}) \triangleq f \left(\mathbf{W}_{\text{SCFL},K}\right) - f(\mathbf{W}^*)$, where $\mathbf{W}^* = \arg\min_{\mathbf{W}} f(\mathbf{W})$ denotes the optimal linear regression model.
With Assumption \ref{ass:norm}, if the learning rates are chosen as $\eta_{k}L < 1$, we have:
\begin{equation}
    G(T_\text{tot})
    \leq \frac{1-\alpha\eta_0}{2 \sum_{k=0}^{K-1} \eta_k} \left\| \mathbf{W}_0 - \mathbf{W}^* \right\|^2 + \frac{1}{\sum_{k=0}^{K-1} \eta_k} \sum_{k=0}^{K-1} \eta_k^2 \rho (\mathbf{\tilde{\sigma}}).
\label{eq:gap}
\end{equation}
where $\alpha \triangleq \sum_{i=1}^{N}\alpha_i^2 $ and $\rho (\mathbf{\tilde{\sigma}}) \triangleq \frac{1}{4} \rho_1 + \frac{1}{4} \rho_2(\mathbf{\tilde{\sigma}})$.
\end{theorem}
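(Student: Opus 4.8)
The plan is to run a standard stochastic-approximation descent argument on the squared distance to the optimum and then convert it into an objective-gap bound through Jensen's inequality, exploiting the quadratic structure of the linear-regression loss throughout. Write $\mathbf{e}_k \triangleq \mathbf{W}_k - \mathbf{W}^*$ and $\mathbf{A} \triangleq \mathbf{X}^{\TT}\mathbf{X} \succeq \mathbf{0}$, and recall that $\nabla f(\mathbf{W}) = \mathbf{A}(\mathbf{W} - \mathbf{W}^*)$ and $2\big(f(\mathbf{W}_k) - f(\mathbf{W}^*)\big) = \langle \mathbf{A}\mathbf{e}_k, \mathbf{e}_k\rangle$. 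Let $\mathbb{E}_k$ denote expectation conditioned on $\mathbf{W}_k$. First I would expand one round of $\mathbf{W}_{k+1} = \mathbf{W}_k - \eta_k g(\mathbf{W}_k)$:
\[
\mathbb{E}_k\|\mathbf{e}_{k+1}\|^2 = \|\mathbf{e}_k\|^2 - 2\eta_k\langle \mathbb{E}_k[g(\mathbf{W}_k)], \mathbf{e}_k\rangle + \eta_k^2\,\mathbb{E}_k\|g(\mathbf{W}_k)\|^2.
\]
Lemma \ref{lem:unbiased} replaces $\mathbb{E}_k[g(\mathbf{W}_k)]$ by the virtual accumulated update $\mathbf{u}_k$, while the variance decomposition $\mathbb{E}_k\|g(\mathbf{W}_k)\|^2 = \|\mathbf{u}_k\|^2 + \mathbb{E}_k\|g(\mathbf{W}_k) - \mathbf{u}_k\|^2$ together with Lemmas \ref{lem:device} and \ref{lem:server} yields $\mathbb{E}_k\|g(\mathbf{W}_k)\|^2 \le \|\mathbf{u}_k\|^2 + \rho(\mathbf{\tilde{\sigma}})$, leaving a recursion in $\mathbf{e}_k$, $\mathbf{u}_k$, and $\rho(\mathbf{\tilde{\sigma}})$.

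The core step is to turn the $\mathbf{u}_k$-terms into a per-round contraction. Since the virtual process is $\tau$ steps of full gradient descent on the quadratic $f$, the iterates have the closed form $\mathbf{W}_{k,u} - \mathbf{W}^* = (\mathbf{I} - \eta_k\mathbf{A})^u \mathbf{e}_k$, so $\mathbf{u}_k = \mathbf{A}\big(\sum_{u=0}^{\tau-1}(\mathbf{I} - \eta_k\mathbf{A})^u\big)\mathbf{e}_k$. Diagonalizing $\mathbf{A}$ and using $\eta_k L < 1$, so that every factor obeys $\mathbf{0} \preceq \mathbf{I} - \eta_k\mathbf{A} \prec \mathbf{I}$, the noiseless part of the recursion contracts the $j$-th eigencomponent of $\mathbf{e}_k$ exactly by $(1 - \eta_k\lambda_j)^{2\tau}$. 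I would then verify the scalar inequality $(1 - \eta_k\lambda_j)^{2\tau} \le 1 - \alpha\eta_k - \eta_k\lambda_j$ coordinate-wise, which combines the elementary bound $(1 - \eta_k\lambda_j)^{2\tau} \le 1 - \eta_k\lambda_j$ with the curvature lower bound encoded by $\alpha$. Summing over the eigencomponents and reinserting the variance term gives the per-round estimate
\[
\mathbb{E}_k\|\mathbf{e}_{k+1}\|^2 \le (1 - \alpha\eta_k)\|\mathbf{e}_k\|^2 - 2\eta_k\big(f(\mathbf{W}_k) - f(\mathbf{W}^*)\big) + \eta_k^2\,\rho(\mathbf{\tilde{\sigma}}).
\]

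Finally I would take total expectation, rearrange to isolate $2\eta_k\big(f(\mathbf{W}_k) - f(\mathbf{W}^*)\big)$, and sum over $k = 0,\dots,K-1$. The distance terms telescope: reindexing leaves the leading term $(1 - \alpha\eta_0)\|\mathbf{e}_0\|^2$ together with the boundary contributions $-\alpha\sum_{k=1}^{K-1}\eta_k\mathbb{E}\|\mathbf{e}_k\|^2$ and $-\mathbb{E}\|\mathbf{e}_K\|^2$, which are nonpositive and may be discarded (no monotonicity of $\eta_k$ is needed). This yields $2\sum_{k}\eta_k\,\mathbb{E}[f(\mathbf{W}_k) - f(\mathbf{W}^*)] \le (1 - \alpha\eta_0)\|\mathbf{W}_0 - \mathbf{W}^*\|^2 + \rho(\mathbf{\tilde{\sigma}})\sum_k \eta_k^2$. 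Dividing by $2\sum_k\eta_k$ and applying Jensen's inequality to the convex $f$ at the weighted average $\mathbf{W}_{\text{SCFL},K} = \frac{1}{\sum_k\eta_k}\sum_k\eta_k\mathbf{W}_k$, so that $f(\mathbf{W}_{\text{SCFL},K}) - f(\mathbf{W}^*) \le \frac{1}{\sum_k\eta_k}\sum_k\eta_k\big(f(\mathbf{W}_k) - f(\mathbf{W}^*)\big)$, delivers \eqref{eq:gap}.

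The main obstacle is the middle step: for $\tau > 1$ the expected update $\mathbf{u}_k$ accumulates gradients evaluated at the drifting virtual iterates rather than at a single point, so the textbook one-step strong-convexity and smoothness inequalities do not apply directly. Leveraging the linear-regression structure, namely the exact geometric form of $\mathbf{u}_k$ and the spectral bounds on $\sum_{u=0}^{\tau-1}(\mathbf{I} - \eta_k\mathbf{A})^u$ under $\eta_k L < 1$, is what makes the per-round contraction clean and confines the entire $\tau$-dependence to $\rho_1$ and $\rho_2(\mathbf{\tilde{\sigma}})$. I expect the bookkeeping that confirms the scalar eigenvalue inequality for the full spectrum, including the treatment of the zero eigenvalues when $\mathbf{A}$ is rank-deficient and the precise way $\alpha$ enters, to be the most delicate part.
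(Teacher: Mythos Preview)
Your approach is essentially the paper's: the paper's own proof is a two-line sketch that notes $f$ is $\alpha$-strongly convex, invokes Lemmas~\ref{lem:unbiased}--\ref{lem:server} for unbiasedness and the variance bound $\rho(\tilde{\boldsymbol\sigma})$, and then defers the descent-plus-telescoping argument to Lemma~7 of \cite{woodworth2020local}. Your plan is exactly the natural instantiation of that sketch, with the added (and helpful) observation that the quadratic structure gives the closed form $\mathbf{e}_k-\eta_k\mathbf{u}_k=(\mathbf{I}-\eta_k\mathbf{A})^{\tau}\mathbf{e}_k$, so the $\tau$-step virtual drift collapses to a single contraction.

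One point to tighten: the coordinate-wise inequality you propose, $(1-\eta_k\lambda_j)^{2\tau}\le 1-\alpha\eta_k-\eta_k\lambda_j$, does \emph{not} hold across the full range $\eta_k L<1$. For $\tau=1$ it reduces to $\eta_k\lambda_j(1-\eta_k\lambda_j)\ge\alpha\eta_k$, i.e.\ $\lambda_j(1-\eta_k\lambda_j)\ge\alpha$, which fails when $\eta_k\lambda_j$ is close to $1$. The clean fix is to use the \emph{product} form instead of the sum form: since $\lambda_j\ge\alpha$ and $2\tau-1\ge 1$,
\[
(1-\eta_k\lambda_j)^{2\tau}=(1-\eta_k\lambda_j)\,(1-\eta_k\lambda_j)^{2\tau-1}\le(1-\eta_k\lambda_j)(1-\eta_k\alpha),
\]
which yields $\|(\mathbf{I}-\eta_k\mathbf{A})^{\tau}\mathbf{e}_k\|^2\le(1-\alpha\eta_k)\bigl(\|\mathbf{e}_k\|^2-2\eta_k(f(\mathbf{W}_k)-f(\mathbf{W}^*))\bigr)$. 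Telescoping from here produces the stated bound (the extra factor $(1-\alpha\eta_k)\le 1$ on the $f$-term is harmless and absorbed by the looser constant in the second term of \eqref{eq:gap}). Your telescoping and Jensen steps are correct as written.
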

\begin{proof}
It is straightforward to verify that the loss function $f(\cdot)$ is $\alpha$-strongly convex. 
Since the aggregated model update is an unbiased estimate of $\mathbf{u}_{k}$ with bound variance derived in Lemma \ref{lem:device} and Lemma \ref{lem:server}, we can follow a similar proof of Lemma 7 in \cite{woodworth2020local} to conclude the result.
\end{proof}

\begin{remark}\label{rm:convergence}
\textbf{(Convergence Performance)}
If the learning rates further satisfy $\lim_{K\rightarrow \infty} \sum_{k=0}^{K-1} \eta_k = \infty$ and $\lim_{K\rightarrow \infty} \sum_{k=0}^{K-1} \eta_k^2 < \infty$, the right-hand side (RHS) of \eqref{eq:gap} diminishes to zero as $K\rightarrow \infty$, which means the learned model of SCFL, i.e., $\frac{1}{\sum_{k=0}^{K-1} \eta_k} \sum_{k=0}^{K-1} \eta_k \mathbf{W}_k$, converges to the optimal linear regression model.
In particular, by selecting $\eta_{k} = \mathcal{O}\left(\frac{\tau}{(k+\beta)L}\right)$ with some constant $\beta >0$, the dominating term in the RHS of \eqref{eq:gap} becomes $\mathcal{O}\left(\frac{1}{\tau \sum_{k=1}^K (k+\beta)}\right)$.
Besides, the RHS of \eqref{eq:gap} increases with the gradient estimation error $\rho (\mathbf{\tilde{\sigma}})$, which means adding stronger noise $\mathbf{\tilde{\sigma}}$ degrades the performance of the learned model. 

With the selected learning rates $\eta_{k} = \mathcal{O}\left(\frac{\tau}{(k+\beta)L}\right)$, SCFL achieves a convergence rate of $\mathcal{O}(\frac{1}{\tau K})$, i.e., it converges geometrically fast with respect to the number of local steps $\tau$ and communication rounds $K$, which is because of the $\alpha$-strong convexity of loss function $f(\cdot)$ \cite{bottou2018optimization,reisizadeh2020fedpaq}. In comparison, for general convex loss functions, the convergence rate of FL using SGD for local model updates becomes slower, which is typically given as $\mathcal{O}(\frac{1}{\sqrt{\tau K}})$ \cite{scaffold}.

\end{remark}

To obtain a better model, the server prefers collecting less noisy coded datasets, which, however, degrades the privacy protection of local datasets.

\subsection{Privacy-Performance Tradeoff Analysis}\label{sec:privacy}

To characterize the privacy leakage caused by sharing the local coded datasets, we adopt the $\epsilon$-mutual information differential privacy ($\epsilon$-MI-DP) \cite{MI-DP} metric, which captures the privacy risk of individual entries in the dataset.
Compared with original DP definitions \cite{dp,dp2}, MI-DP is often considered to be more intuitive, because it explicitly incorporates the well-established concept of mutual information.
We first introduce the following assumption to meet the definition of $\epsilon$-MI-DP, which can be easily satisfied by normalizing each entry in $\mathbf{X}$. 

\begin{assumption}\label{ass:max_entry}
The maximum absolute value of entries in $\mathbf{X}$ is upper bounded by $1$.
\end{assumption}

\begin{definition}
\textbf{($\epsilon$-MI-DP \cite{MI-DP})} A randomized mechanism $q(\cdot)$ that encodes local data $\mathbf{X}^{(i)}$ to $\mathbf{\tilde{X}}^{(i)}$ satisfies the $\epsilon_{i}$-mutual information differential privacy if
\begin{equation}
    \sup_{j, P_{\mathbf{X}^{(i)}}} I\left(\mathbf{X}^{(i)}[j] ; \mathbf{\tilde{X}}^{(i)} | \mathbf{X}^{(i)}[-j] \right) \leq \epsilon_{i},
    \label{DP-def}
\end{equation}
where the supremum is over the entry index $j$ and distribution $P_{\mathbf{X}^{(i)}}$ of the local dataset $\mathbf{X}^{(i)}$, and $\mathbf{X}^{(i)}[-j] $ denotes the dataset $\mathbf{X}^{(i)}$ excluding the $j$-th sample $\mathbf{X}^{(i)}[j]$.
\end{definition}

In particular, \eqref{DP-def} provides an upper bound for the maximum privacy leakage of each entry in the local dataset.
Notably, a smaller value of the privacy budget $\epsilon_i$ offers better privacy protection.
We select the maximum privacy budget of all the edge devices as the privacy loss caused by coded data sharing, i.e., $\epsilon \triangleq \max_{i\in[N]} \epsilon_{i}$, as stated in the following theorem.

\begin{theorem} \label{theorem:privacy_budget}
With Assumption \ref{ass:max_entry}, the privacy leakage caused by sharing the local coded datasets is given as follows:
\begin{equation}
    \epsilon = \max_{i\in[N]} \left\{ \epsilon_i= q_i(\sigma_i^2) \triangleq \frac{1}{2} \log_{2} \bigg( 1+\frac{c}{h^{2}\big(\mathbf{\tilde{X}}^{(i)}\big)+ \sigma_i^2} \bigg) \right\},
    \label{eq:DP}
\end{equation}
where $h(\mathbf{\tilde{X}}^{(i)}) \triangleq \min\limits_{k_{2}} \sqrt{\sum\nolimits_{k_{1}=1}^{l_{i}}|\mathbf{X}_{k_{1},k_{2}}^{(i)}|^{2}  -  \max\limits_{k_{3} \in[l_{i}]}|\mathbf{X}_{k_{3},k_{2}}^{(i)}|^{2}}$ with the $\mathbf{X}_{j,k}^{(i)}$ denoting the $(j,k)$-th entry of matrix $\mathbf{X}^{(i)}$.
\end{theorem}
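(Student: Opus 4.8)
The plan is to reduce the conditional mutual information in \eqref{DP-def} to a scalar Gaussian channel and then control it by a channel-capacity expression. First I would determine the conditional law of $\mathbf{\tilde{X}}^{(i)}$ given $\mathbf{X}^{(i)}$. Writing $\mathbf{x}_k$ for the $k$-th row of $\mathbf{X}^{(i)}$, the $a$-th row of $\mathbf{\tilde{X}}^{(i)}$ equals $\sum_{k=1}^{l_i}(\mathbf{G}_i)_{a,k}\mathbf{x}_k + (\mathbf{N}_i)_{a,:}$; because the entries of $\mathbf{G}_i$ and $\mathbf{N}_i$ are independent across rows, conditioned on $\mathbf{X}^{(i)}$ the $c$ rows of $\mathbf{\tilde{X}}^{(i)}$ are i.i.d.\ $\mathcal{N}(\mathbf{0},(\mathbf{X}^{(i)})^{\TT}\mathbf{X}^{(i)}+\sigma_i^2\mathbf{I}_d)$. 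Fixing a feature index $k_2$, the column $\mathbf{\tilde{X}}^{(i)}_{:,k_2}$ therefore has $c$ independent entries distributed as $\mathcal{N}(0,\sum_{k}|\mathbf{X}^{(i)}_{k,k_2}|^2+\sigma_i^2)$, so the sensitive value $v_j\triangleq\mathbf{X}^{(i)}_{j,k_2}$ enters only through the variance. Conditioning on the remaining entries of that column, as \eqref{DP-def} requires, freezes the background power $\nu\triangleq\sum_{k\neq j}|\mathbf{X}^{(i)}_{k,k_2}|^2+\sigma_i^2$, so the mechanism acts as a Gaussian scale channel from $v_j$ to $\mathbf{\tilde{X}}^{(i)}_{:,k_2}$.

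The crux is bounding the mutual information of this scale channel, which is delicate because the secret modulates the variance rather than the mean. I would handle it with a genie argument: since the data is independent of the projection matrix, revealing the $j$-th column $\mathbf{g}_j$ of $\mathbf{G}_i$ can only increase the information, i.e.\ $I(v_j;\mathbf{\tilde{X}}^{(i)}_{:,k_2})\le I(v_j;\mathbf{\tilde{X}}^{(i)}_{:,k_2}\mid\mathbf{g}_j)$. Conditioned on $\mathbf{g}_j$, the channel becomes the additive model $\mathbf{\tilde{X}}^{(i)}_{:,k_2}=v_j\mathbf{g}_j+\mathbf{e}$ with $\mathbf{e}\sim\mathcal{N}(\mathbf{0},\nu\mathbf{I}_c)$, for which $z\triangleq\langle\mathbf{g}_j,\mathbf{\tilde{X}}^{(i)}_{:,k_2}\rangle/\|\mathbf{g}_j\|=v_j\|\mathbf{g}_j\|+e'$, $e'\sim\mathcal{N}(0,\nu)$, is a scalar sufficient statistic. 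Gaussian max-entropy then gives $I(v_j;z\mid\mathbf{g}_j)\le\tfrac12\log_2(1+\|\mathbf{g}_j\|^2\,\mathrm{Var}(v_j)/\nu)$, and taking the expectation over $\mathbf{g}_j$ together with the concavity of the logarithm (Jensen's inequality) and $\mathbb{E}[\|\mathbf{g}_j\|^2]=c$ yields $I(v_j;\mathbf{\tilde{X}}^{(i)}_{:,k_2})\le\tfrac12\log_2(1+c\,\mathrm{Var}(v_j)/\nu)$.

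It then remains to carry out the three maximizations hidden in \eqref{DP-def}. The supremum over $P_{\mathbf{X}^{(i)}}$ maximizes $\mathrm{Var}(v_j)$; by Assumption \ref{ass:max_entry} the secret entry lies in $[-1,1]$, so $\mathrm{Var}(v_j)\le\mathbb{E}[v_j^2]\le1$. The supremum over the secret position $j$ minimizes the background $\nu$, which is attained by deleting the largest entry of the column, giving $\min_j\nu=\sum_{k}|\mathbf{X}^{(i)}_{k,k_2}|^2-\max_{k_3}|\mathbf{X}^{(i)}_{k_3,k_2}|^2+\sigma_i^2$. Finally, the worst (smallest-background) feature is selected by the $\min_{k_2}$ in the definition of $h(\mathbf{\tilde{X}}^{(i)})$, which replaces the background by $h^2(\mathbf{\tilde{X}}^{(i)})+\sigma_i^2$; substituting back gives the per-device guarantee $\epsilon_i=\tfrac12\log_2(1+c/(h^2(\mathbf{\tilde{X}}^{(i)})+\sigma_i^2))$ claimed in \eqref{eq:DP}, and $\epsilon=\max_{i\in[N]}\epsilon_i$ by definition.

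I expect the scale-channel bound of the second paragraph to be the main obstacle, since the secret influences only the variance of the observation and a direct entropy computation leads to an intractable Gaussian mixture; the genie step, which promotes the secret to the mean of a standard Gaussian channel, is what makes the problem tractable at the cost of a valid upper bound. It is worth noting that an upper bound is exactly what the $\epsilon_i$-MI-DP guarantee asks for, so the slack introduced by the genie and by Jensen's inequality does not weaken the claimed privacy statement.
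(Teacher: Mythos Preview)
The paper does not actually prove this theorem: its entire proof reads ``The proof is available at Lemma 2 of \cite{Privacy_utility_tradeoff} and is omitted for brevity.'' So there is no in-paper argument to compare against, and I can only assess your proposal on its own merits.

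Your reduction of the column $\tilde{\mathbf X}^{(i)}_{:,k_2}$ to a scalar Gaussian channel and the genie step that reveals $\mathbf g_j$ are clean, and for the \emph{single-column} mechanism $v_j\mapsto \tilde{\mathbf X}^{(i)}_{:,k_2}$ they indeed yield the bound $\tfrac12\log_2\bigl(1+c/\nu\bigr)$ with $\nu=\sum_{k\neq j}|\mathbf X^{(i)}_{k,k_2}|^2+\sigma_i^2$, after which the three maximizations produce the expression in \eqref{eq:DP}. The step that deserves more care is the passage from the single column to the full coded release. The quantity in \eqref{DP-def} involves the \emph{entire} matrix $\tilde{\mathbf X}^{(i)}$, and while the other columns $\tilde{\mathbf X}^{(i)}_{:,k_2'}$ do not depend on $v_j$, they do depend on the same projection matrix $\mathbf G_i$. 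Hence
\[
I\bigl(v_j;\tilde{\mathbf X}^{(i)}\,\big|\,\mathbf X^{(i)}[-j]\bigr)
= I\bigl(v_j;\tilde{\mathbf X}^{(i)}_{:,k_2}\,\big|\,\{\tilde{\mathbf X}^{(i)}_{:,k_2'}\}_{k_2'\neq k_2},\,\mathbf X^{(i)}[-j]\bigr),
\]
and the extra conditioning on the remaining columns leaks information about $\{\mathbf g_k\}_{k\neq j}$, which can reduce the effective variance of your background term $e=\sum_{k\neq j}\mathbf X^{(i)}_{k,k_2}\mathbf g_k+(\mathbf N_i)_{:,k_2}$ below $\nu$ (in the limit of many features, toward $\sigma_i^2$). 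Your genie reveals only $\mathbf g_j$, so it does not neutralize this side channel, and the single-column bound you obtain is not automatically an upper bound on the full-matrix mutual information. If instead you let the genie reveal all of $\mathbf G_i$, the columns decouple and the reduction to $k_2$ is legitimate, but then the denominator becomes $\sigma_i^2$ rather than $h^2(\tilde{\mathbf X}^{(i)})+\sigma_i^2$, which is weaker than \eqref{eq:DP}. To close the argument you would need either to justify why restricting to one column is valid for the full release, or to check how the cited reference handles this coupling.
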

\begin{proof}
The proof is available at Lemma 2 of \cite{Privacy_utility_tradeoff} and is omitted for brevity.
\end{proof}

With the result in Theorem \ref{theorem:privacy_budget}, we first compare the privacy protection of different CFL schemes and then show the privacy-performance tradeoff, as stated in the following remarks, respectively.

\begin{remark}\label{rm:privacy}
% \textbf{(Comparison with CodedFedL)}
The coded datasets in CFL-FB \cite{CFL} and CodedFedL \cite{CFL_journal} are constructed as the random linear projection of the local data, and thus their privacy leakage can be viewed as a special case of Theorem \ref{theorem:privacy_budget} with $\sigma_i=0, \forall i\in[N]$.
By adding Gaussian noise to the coded data, the proposed SCFL framework and DP-CFL \cite{anand2021differentially} provides better privacy protection (i.e., a smaller $\epsilon$) than CFL-FB and CodedFedL.
\end{remark}

\begin{remark}\label{rm:tradeoff}
\textbf{(Privacy-performance tradeoff)}
According to Theorems \ref{thm-convergence} and \ref{theorem:privacy_budget}, there is a tradeoff between privacy protection and convergence performance of SCFL.
Specifically, by adding stronger noise $\sigma_{i}^2$ to the local coded dataset, each edge device can meet the requirement of a smaller privacy budget, which reduces the privacy leakage in coded dataset sharing.
However, it also degrades the training performance, as the gradient estimation error $\rho (\mathbf{\tilde{\sigma}})$ increases with each added noise $\sigma_{i}^2$.
\end{remark}

This privacy-performance tradeoff shows a clear conflict between convergence performance and privacy leakage.
In short, each edge device has its preference for privacy protection, while the server expects to receive less noisy coded datasets to improve the training performance.
To solve this conflict, we next develop an incentive mechanism to determine the proper noise levels.

\section{An Incentive Mechanism via Contract Theory} \label{sec:incentive}
In this section, we propose a contract-based incentive mechanism for SCFL, which determines the mutually satisfactory noise levels for the local coded datasets.

\subsection{Problem Formulation}

To improve the learning performance, Theorem \ref{thm-convergence} indicates that it is beneficial to reduce the gradient variance at the server, which can be achieved by gathering less noisy local coded datasets when constructing the global coded dataset.
However, it is almost impossible to obtain an exact and tractable characterization of the impacts of the added noise level on the learning performance since many parameters in the convergence bound are unknown. 
Therefore, we model the effect of the added noise level $\sigma_i^2$ on the learning performance as a general non-increasing concave function $\Gamma(\sigma_i^2)$ \cite{zhan2020learning}.
As our target is to provide a feasible mechanism to derive the mutually satisfactory noise levels for the local coded datasets, this utility function is sufficient to abstract the impacts of different noise levels.
To protect the privacy of local dataset, the edge devices tend to add stronger noise to the coded samples.
The server has to pay rewards to the edge devices to motivate them to share more accurate coded data.
We denote the reward paid to the $i$-th edge device by the server as $r_i$. Thus, the utility of the server can be expressed as follows:
\begin{equation}
    U_S \left(\{\sigma_{i}^{2}\},\{r_{i}\}\right) = \sum_{i=1}^{N} \Gamma(\sigma_{i}^{2}) - \lambda \sum_{i=1}^{N} r_i,
\label{eq:u_s_1}
\end{equation}
where $\lambda > 0$ is a weight parameter that can be adjusted by the server to control the total reward within an acceptable budget.
Specifically, a larger value of $\lambda$ implies that the server pays more rewards to obtain a better learning performance.
In practice, we can build a reference table (e.g., Table \ref{table:lambda} in Appendix \ref{sec:table}) that shows the correspondence between the value of $\lambda$ and the total reward once the optimal contract items are derived.
By referring to this table, the server can choose the value of $\lambda$ according to its reward budget.

The edge devices expect to have lower privacy costs while receiving larger rewards.
We model the privacy cost of edge device $i$ as $\mu_i\epsilon_i$, where $\mu_i > 0$ denotes its privacy sensitivity \cite{ghosh2011selling,ng2021hierarchical,liu2021privacy}.
Specifically, with a larger value of $\mu_i$, edge device $i$ concerns more with the risk of privacy leakage and thus aspires to receive more rewards for any given privacy budget.
In practice, the values of $\{\mu_i\}$'s are set by edge devices \cite{spiekermann2001privacy}, which may depend on their locations, data types, and contents.
Without loss of generality, we assume that $\chi = \{\mu_i: i\in[N] \}$ are sorted in an ascending order, i.e., $0 < \mu_1 \leq \mu_2 \leq \dots \leq \mu_N$.
The utility function of the $i$-th edge device can be written as follows:
\begin{equation}
    U_i(\epsilon_i,r_i) = r_i - \mu_i\epsilon_i,
    \forall i\in[N].
\end{equation}

The server needs to design a contract $\Omega(\chi)$ that contains a set of contract items $\{ \epsilon_i, r_i \}_{i=1}^{N}$, to maximize its utility as well as ensure the individual rationality (IR) and incentive compatibility (IC) of the edge devices, which are formally defined below.
\begin{definition}
\textbf{(Individual Rationality (IR))}
The edge devices upload the local coded dataset only when a non-negative utility can be achieved, i.e.,
\begin{equation}
    U_i(\epsilon_i,r_i) \geq 0, \forall i\in[N].
    \label{eq:IR}
\end{equation}
\end{definition}
\begin{definition}
\textbf{(Incentive Compatibility (IC))}
An edge device always adopts a contract item that can achieve the maximal utility, i.e.,
\begin{equation}
    U_i(\epsilon_i,r_i) \geq U_i(\epsilon_{i^{\prime}},r_{i^{\prime}}), \forall i\neq i^{\prime}, i,i^{\prime} \in [N].
    \label{eq:IC}
\end{equation}
\end{definition}

Note that $\sigma_i^2 = q^{-1}_i(\epsilon_{i})$ with $q^{-1}_i(x) \triangleq \frac{c}{2^{2x} - 1} - h^{2}(\mathbf{\tilde{X}}^{(i)})$.
To satisfy the IR and IC requirements, the contract design at the server can be formulated as the following optimization problem:
\begin{align}
    \mathbf{P1}: \max_{\Omega} \; & \tilde{U}_S (\Omega(\chi)) \\
    \text{s.t.} \; & \eqref{eq:IR}, \eqref{eq:IC}, \\
    & r_i \geq 0, \forall i\in[N],\\
    &0\leq \epsilon_i \leq q_i(0), \forall i\in[N], \label{eq:constrant2}
\end{align}
where $ \tilde{U}_S (\Omega(\chi)) \triangleq \sum_{i=1}^{N} \Gamma(q^{-1}_i(\epsilon_{i})) - \lambda \sum_{i=1}^{N} r_i$ and the last constraint follows the noise level is non-negative, i.e., $\sigma_{i}^{2} \geq 0$. 
It is straightforward to verify Problem (P1) is a convex optimization problem. However, instead of solving (P1) via the Karush–Kuhn–Tucker (KKT) conditions, we develop a low-complexity solution in the next subsection by establishing the relationship between the optimal rewards $\{r_i^*\}$ for a given set of privacy budgets $\{\epsilon_i\}$.
Besides, while the constraint \eqref{eq:constrant2} indicates a minimum noise level of zero, the proposed solutions can be easily generalized to the case with a noise level constraint $\sigma_{\text{min}}^2$, i.e., $0 < \epsilon_i \leq q_i(\sigma_{\text{min}}^2), \forall i \in [N]$.

\subsection{Optimal Contract Design}

To solve Problem (P1), we follow the methods in \cite{gao2011spectrum,limcontract} to first find the optimal rewards $\{r_i^*\}_{i=1}^{N}$ given the privacy costs of the edge devices, as stated in the following theorem.
\begin{theorem}\label{thm:optimal-r}
Given $0 < \mu_1 \leq \mu_2 \leq \dots \leq \mu_N$ and $\{\epsilon_{i}\}_{i=1}^{N}$, the optimal rewards are given as follows:
\begin{equation}
    r_{i}^{*} = \left\{
    \begin{array}{ll}
    \mu_{i} \epsilon_i,
    & \text{if }i=N, \\
    r_{i+1}^{*} - \mu_{i} \epsilon_{i+1} + \mu_{i} \epsilon_{i}, & \text{otherwise.}
    \end{array}
    \right.
\end{equation}
\end{theorem}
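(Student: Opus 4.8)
The plan is to exploit the fact that for a \emph{fixed} vector of privacy budgets $\{\epsilon_i\}_{i=1}^{N}$, the first term $\sum_{i=1}^{N}\Gamma(q_i^{-1}(\epsilon_i))$ of the server utility $\tilde{U}_S(\Omega(\chi))$ is a constant. Since $\lambda>0$, maximizing $\tilde{U}_S$ over the rewards is then equivalent to \emph{minimizing} the total reward $\sum_{i=1}^{N} r_i$ subject to the IR constraints \eqref{eq:IR} and the IC constraints \eqref{eq:IC}. I would therefore recast the theorem as the claim that the recursive rewards $\{r_i^*\}$ form the componentwise-smallest feasible reward vector, so that they simultaneously minimize the total reward. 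Note also that the non-negativity constraint $r_i\geq 0$ is automatic, since IR together with $\epsilon_i\geq 0$ forces $r_i\geq\mu_i\epsilon_i\geq 0$.

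A preliminary fact I would establish first is a monotonicity property forced by feasibility. Adding the two IC inequalities coupling devices $i$ and $j$ yields $(\mu_i-\mu_j)(\epsilon_j-\epsilon_i)\geq 0$; combined with the assumed ordering $\mu_1\leq\dots\leq\mu_N$, this shows that any feasible contract must have non-increasing budgets, $\epsilon_1\geq\epsilon_2\geq\dots\geq\epsilon_N$. This ordering is precisely what makes the telescoping arguments below go through.

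The optimality direction is the clean part, and I would prove it by backward induction on the device index. For $i=N$, the IR constraint gives $r_N\geq\mu_N\epsilon_N=r_N^*$. Assuming $r_{i+1}\geq r_{i+1}^*$, the adjacent downward IC constraint $r_i-\mu_i\epsilon_i\geq r_{i+1}-\mu_i\epsilon_{i+1}$ gives $r_i\geq r_{i+1}-\mu_i\epsilon_{i+1}+\mu_i\epsilon_i\geq r_{i+1}^*-\mu_i\epsilon_{i+1}+\mu_i\epsilon_i=r_i^*$, where the final equality is exactly the stated recursion. Hence every feasible reward vector satisfies $r_i\geq r_i^*$ for all $i$, so the recursion does minimize the total reward.

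It then remains to verify that $\{r_i^*\}$ is itself feasible, which I expect to be the main obstacle: the recursion only explicitly binds the IR of device $N$ and the \emph{adjacent downward} IC constraints, whereas feasibility requires all (non-adjacent and upward) IC constraints. Unrolling the recursion gives $r_i^*-r_j^*=\sum_{l=i}^{j-1}\mu_l(\epsilon_l-\epsilon_{l+1})$ for $j>i$, and writing $\epsilon_i-\epsilon_j=\sum_{l=i}^{j-1}(\epsilon_l-\epsilon_{l+1})$ I would express the slack of device $i$'s IC constraint against device $j$ as $\sum_{l=i}^{j-1}(\mu_l-\mu_i)(\epsilon_l-\epsilon_{l+1})\geq 0$, which is non-negative because $\mu_l\geq\mu_i$ and $\epsilon_l\geq\epsilon_{l+1}$ for $l\geq i$; the symmetric computation handles $j<i$ with slack $\sum_{l=j}^{i-1}(\mu_i-\mu_l)(\epsilon_l-\epsilon_{l+1})\geq 0$. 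Finally, setting $U_i^*\triangleq r_i^*-\mu_i\epsilon_i$, the recursion yields $U_i^*-U_{i+1}^*=(\mu_{i+1}-\mu_i)\epsilon_{i+1}\geq 0$ with $U_N^*=0$, so $U_i^*\geq 0$ for every device and IR holds throughout. Combining the feasibility and optimality directions establishes the reward formula.
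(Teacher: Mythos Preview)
Your proposal is correct and follows essentially the same approach as the paper: both arguments establish that $\{r_i^*\}$ is the componentwise-minimal reward vector satisfying IR and IC, with the paper packaging the feasibility conditions (monotonicity of $\epsilon_i$ and reduction to adjacent IC constraints) into a lemma cited from prior work and then proving minimality by contradiction, while you verify feasibility directly via the telescoping sums and prove minimality by backward induction. The logical content is the same; your version is simply more self-contained.
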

\begin{proof}
Please refer to Appendix \ref{proof:contract}.
\end{proof}

The above theorem simplifies the objective function in Problem (P1) to $\tilde{U}_S(\Omega(\chi)) = \sum_{i=1}^{N}$ $\Gamma(q^{-1}_i(\epsilon_{i})) - \lambda \sum_{i=1}^{N} [ i \mu_{i} \epsilon_{i} - (i-1) \mu_{i-1} \epsilon_{i} ]$ where only $\{\epsilon_i\}$'s are the optimization variables.
Define $\Phi_i(\epsilon_i) \triangleq \Gamma(q^{-1}_i(\epsilon_{i})) - \lambda i  \mu_{i} \epsilon_i + \lambda (i-1) \mu_{i-1} \epsilon_i $.
Problem (P1) is equivalent to the following optimization problem:
\begin{align}
    \mathbf{P2}: \max_{\{\epsilon_{i}\}_{i=1}^{N}} \; & \sum_{i=1}^{N} \Phi_i(\epsilon_i),  \\
    \text{s.t.} \;& \epsilon_1 \geq \epsilon_2 \geq \dots \geq \epsilon_N > 0, \label{eq:constraint} \\
    & 0 < \epsilon_i \leq q_i(0), \forall i \in [N] \label{eq:constraint2},
\end{align}
where the constraint \eqref{eq:constraint2} indicates the privacy budget should be upper bounded by that corresponds to the case without adding noise to the coded dataset.

Notably, with the concave objective function and the monotonic property of $\{\epsilon_i\}$'s, Problem (P2) can be optimally solved via the Bunching and Ironing Algorithm \cite{gao2011spectrum}.
Specifically, we first find the maximizer $\{\epsilon_{i}^{*}\}$ of the objective function by disregarding the constraints in \eqref{eq:constraint}.
Then, starting from $i = N$, we check whether $\epsilon_{i-1}^* \geq \epsilon_{i}^{*}$ holds.
If this inequality holds, we set $i$ as $i-1$ and proceed.
Otherwise, there must exist some $j \leq i-1$ such that $\epsilon_{j}^{*} \leq \epsilon_{j+1}^{*} \leq \dots < \epsilon_{i}^{*}$ (Line 4 of Algorithm \ref{bunching}), and we update their values by setting $\epsilon_{j}^{*} = \epsilon_{j+1}^{*} = \dots = \epsilon_{i}^{*}$ as $\arg\max_{\epsilon \in [0, \min_{l} Q_{l}(0)]} \sum_{l=j}^{i} \Phi_{l}(\epsilon)$ (Line 5 of Algorithm \ref{bunching}).
This ensures $\epsilon_{j}^{*} \geq \epsilon_{j+1}^{*} \geq \dots \geq \epsilon_{i}^{*}$ so that we can proceed by setting $i=j$.
In the worst case, adjustment is needed for each $i = N, N-1, \dots, 2$, i.e., the \textit{while} loop in Algorithm \ref{bunching} is repeated for at most $N-1$ times.
Details of the Bunching and Ironing Algorithm for Problem (P2) are summarized in Algorithm \ref{bunching}.
It is worth noting that since edge devices are privacy-sensitive, the optimal noise levels  rarely attain the noise-free upper bounds in \eqref{eq:constraint2}.
Furthermore, the Bunching and Ironing Algorithm is also applicable if a minimum noise level $\sigma_{\text{min},i}^2$ is added to the coded data, i.e., $0 < \epsilon_i \leq q_i(\sigma_{\text{min},i}^2), \forall i \in [N]$.

\begin{algorithm}[tb]
\begin{small}
\caption{Bunching and Ironing Algorithm} \label{bunching}
\begin{algorithmic}[1]
\STATE{Initialize $\epsilon_i^{*} = \arg\max_{ \epsilon_i \in (0, q^{-1}_i(0)]} \Phi_i(\epsilon_i), \forall i\in[N]$ and $i=N$;}
\WHILE{$i>1$}
    \IF{$\epsilon_{i-1}^{*} < \epsilon_{i}^{*}$} 
    \STATE{Find the smallest $j\leq i-1$ such that $\epsilon_{j}^{*} \leq \epsilon_{j+1}^{*} \leq \dots < \epsilon_{i}^{*}$;}
    \STATE{Calculate $\epsilon = \arg \max_{\epsilon \in [0, \min_{l} Q_{l}(0)]} \sum_{l = j}^{i} \Phi_{l}(\epsilon)$ and set $\epsilon_{l}^{*} = \epsilon, \forall l = j,j+1 \dots, i$;}
    \ENDIF
    \STATE{Set $i=j$;}
\ENDWHILE
\RETURN{$\{\epsilon_i^{*}\}_{i=1}^{N}$}
\end{algorithmic}
\end{small}
\end{algorithm}

\section{Performance Evaluation}\label{sec:experiment}

\subsection{Experimental Setup}

We consider an FL system with $20$ edge devices.
The channel bandwidth $B$ and noise power $N_0$ are set as $180$ kHz and $-70$ dBm, respectively. 
To simulate different pathloss, the transmit powers of the edge devices are uniformly sampled in the range of $15 \sim 25$ dBm, and the channel gains $\{|h_{k}^{i}|^{2}\}$ follow the exponential distribution with the default mean value $\gamma = 10^{-8}$.
Besides, the downloading rate and the MAC rate of the server are set as $1$ Mbps and $15,360$ KMAC per second, respectively.
We randomly generate the MAC rates of the edge devices according to $\text{MACR}_{i} = \mu_{\text{comp},i} \times 1,536$ KMAC per second, where $\mu_{\text{comp},i}$ is uniformly sampled from $[0.8,1.0]$.
We evaluate the FL algorithms on two image datasets, including the MNIST \cite{mnist} and CIFAR-10 \cite{cifar10} datasets.
In addition, the default value of $\tau$ is $5$, and $T$ is set as $10$ and $15$ seconds for experiments on the MNIST and CIFAR-10 datasets, respectively.

We compare the proposed SCFL scheme with the following three FL benchmarks:
\begin{itemize}
\item \textbf{FedAvg} \cite{fedavg}: 
In each communication round, the edge devices compute the stochastic gradients for $\tau$ local steps and upload the accumulated model updates to the server.
The server then aggregates the received model updates from fast edge devices and generates a new global model.

\item \textbf{CodedFedL} \cite{CFL_journal}: 
In CodedFedL, the local coded dataset is generated by random projection of the local data without adding noise. Both the server and the edge devices divide their datasets into several batches.
In each communication round of the training process, a random batch is selected to compute the gradient at each edge device and the server.
However, the server and edge devices compute the stochastic gradients only for one step (i.e., $\tau=1$) before aggregation. 

\item \textbf{DP-CFL} \cite{anand2021differentially}:
The coded dataset is generated with the same method as the proposed SCFL scheme. However, only the server performs centralized model training using the global coded dataset without further cooperation with the edge devices.

\end{itemize}
The CodedFedL scheme obtains the optimal batch sizes of the data at the server and edge devices by solving the optimization problem stated in (23) of \cite{CFL_journal}. 
In all CFL methods, the number of coded data samples $c$ is set as $10,000$, and the batch sizes of the server $b_{\s}$ are $499$ and $562$ for the MNIST and CIFAR-10 datasets, respectively.

To simulate the non-IID data on edge devices, we sort the images in each dataset by their labels, divide them into 20 shards with identical size, and assign one random shard to each edge device \cite{hsieh2020non}.
Following \cite{CFL_journal}, the classification task on the MNIST dataset is transformed into a linear regression problem by using the random Fourier feature mapping (RFFM) \cite{rahimi2008uniform}.
Accordingly, each transformed vector has a dimension of 2,000.
As for the CIFAR-10 dataset, we use a 4,096-dimensional feature vector extracted by a pretrained-VGG model \cite{vgg} to represent each image.
Therefore, we can train linear regression models for classifying both image datasets.

\begin{figure}[!t]
\centering
\subfigure[MNIST dataset]{
\begin{minipage}[t]{0.5\linewidth}
\centering
\includegraphics[width=\linewidth]{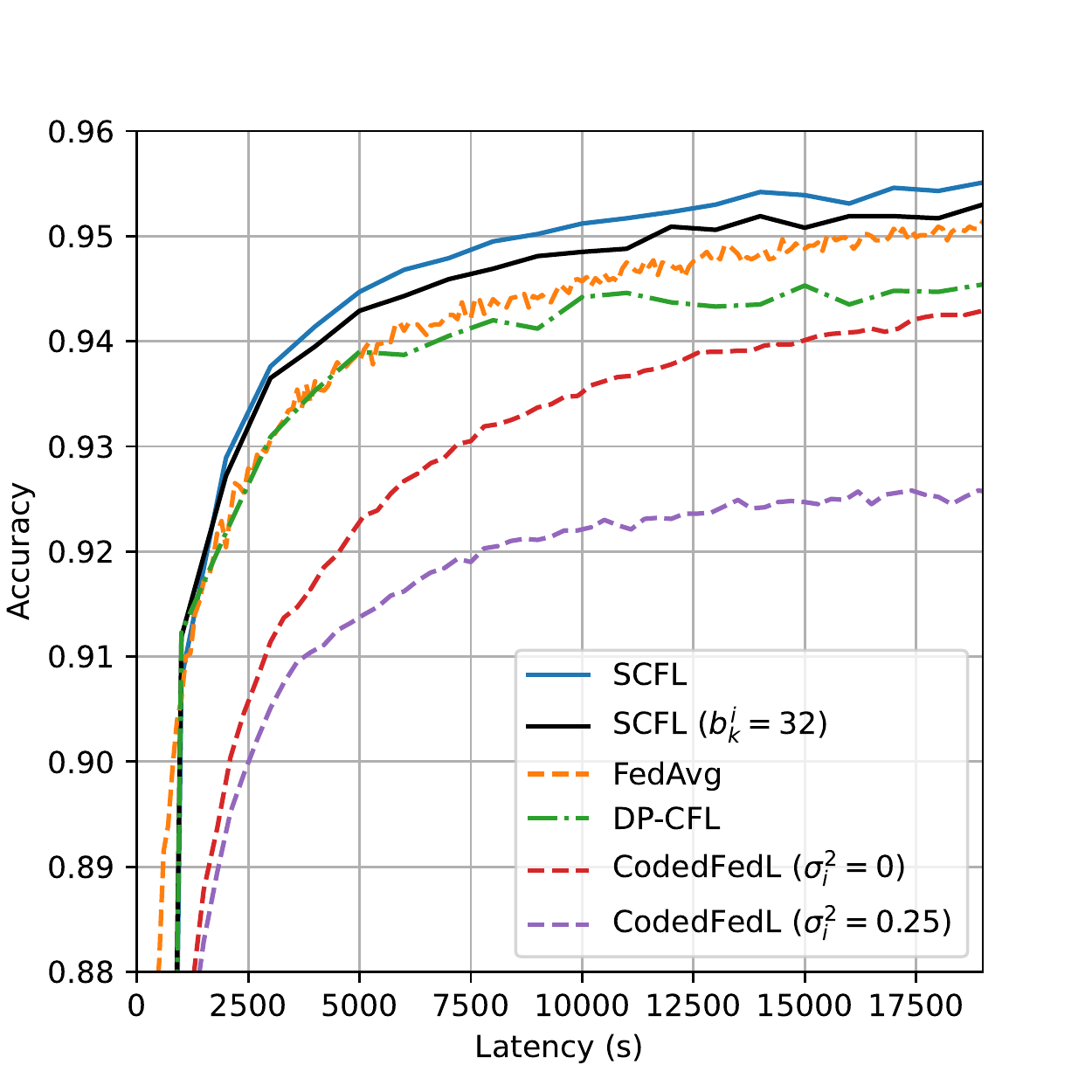}
\label{bsc+mnist}
\end{minipage}%
}%
\subfigure[CIFAR-10 dataset]{
\begin{minipage}[t]{0.5\linewidth}
\centering
\includegraphics[width=\linewidth]{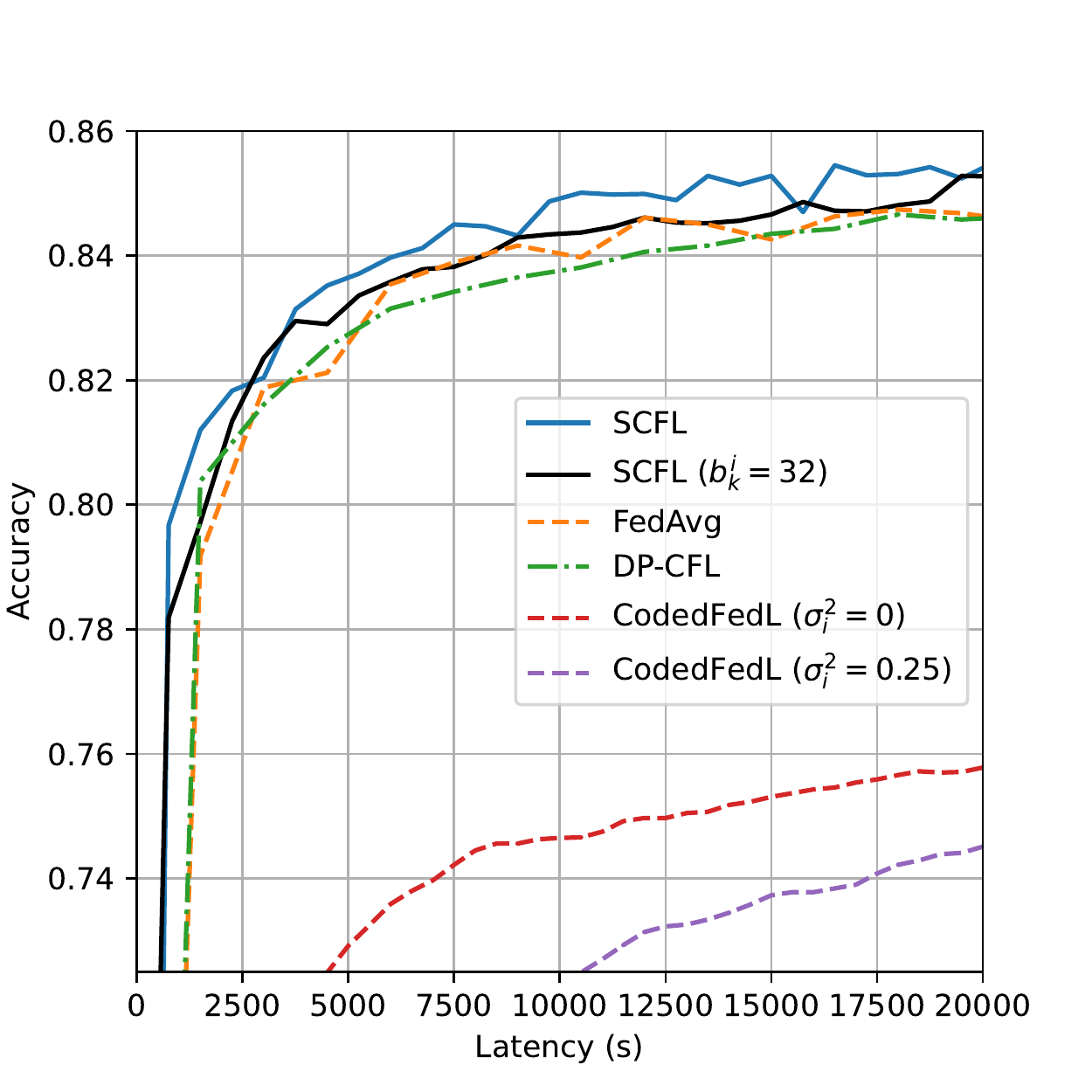}
\label{awgn+cifar10}
\end{minipage}
}%
\centering
\caption{Test accuracy of different FL schemes vs. training latency on (a) the MNIST dataset and (b) the CIFAR-10 dataset. SCFL ($b_{k}^{i}=32$) denotes SCFL with constant batch size $b_k^i = 32, \forall i\in[N], k\in[K]$.}
\label{Fig:convergence}
\end{figure}

\subsection{Results}

\subsubsection{Training Performance Comparison}
We compare the model accuracy achieved by different FL frameworks with respect to the training latency in Fig. \ref{Fig:convergence}, where the noise levels in all the CFL schemes are set as $\sigma_{i}^{2} = 0.25, \forall i \in [N]$, implying the same privacy budgets.
It is observed that CodedFedL achieves the lowest test accuracy due to the frequent communication and model update bias introduced by the noisy coded datasets.
Even without the added noise (i.e., $\sigma_{i}^{2} = 0$), the performance of CodedFedL is still worse than other schemes, which further demonstrates the benefit of periodical averaging in accelerating the convergence speed.
DP-CFL achieves a similar test accuracy with FedAvg, since the server continuously computes the stochastic gradients without any communication overhead.
Compared with the baseline methods, our proposed SCFL achieves the highest test accuracy within the given training time, and the batch adaptation scheme further accelerates the model convergence by exploiting more data samples for training.
This is because the proposed aggregation scheme in SCFL can effectively utilize the global coded dataset to mitigate the stragglers by fully exploiting the computational resources at both the server and edge devices.

\subsubsection{Privacy-Performance Tradeoff}

\begin{figure}[!t]
\centering
\subfigure[MNIST dataset]{
\begin{minipage}[t]{0.5\linewidth}
\centering
\includegraphics[width=\linewidth]{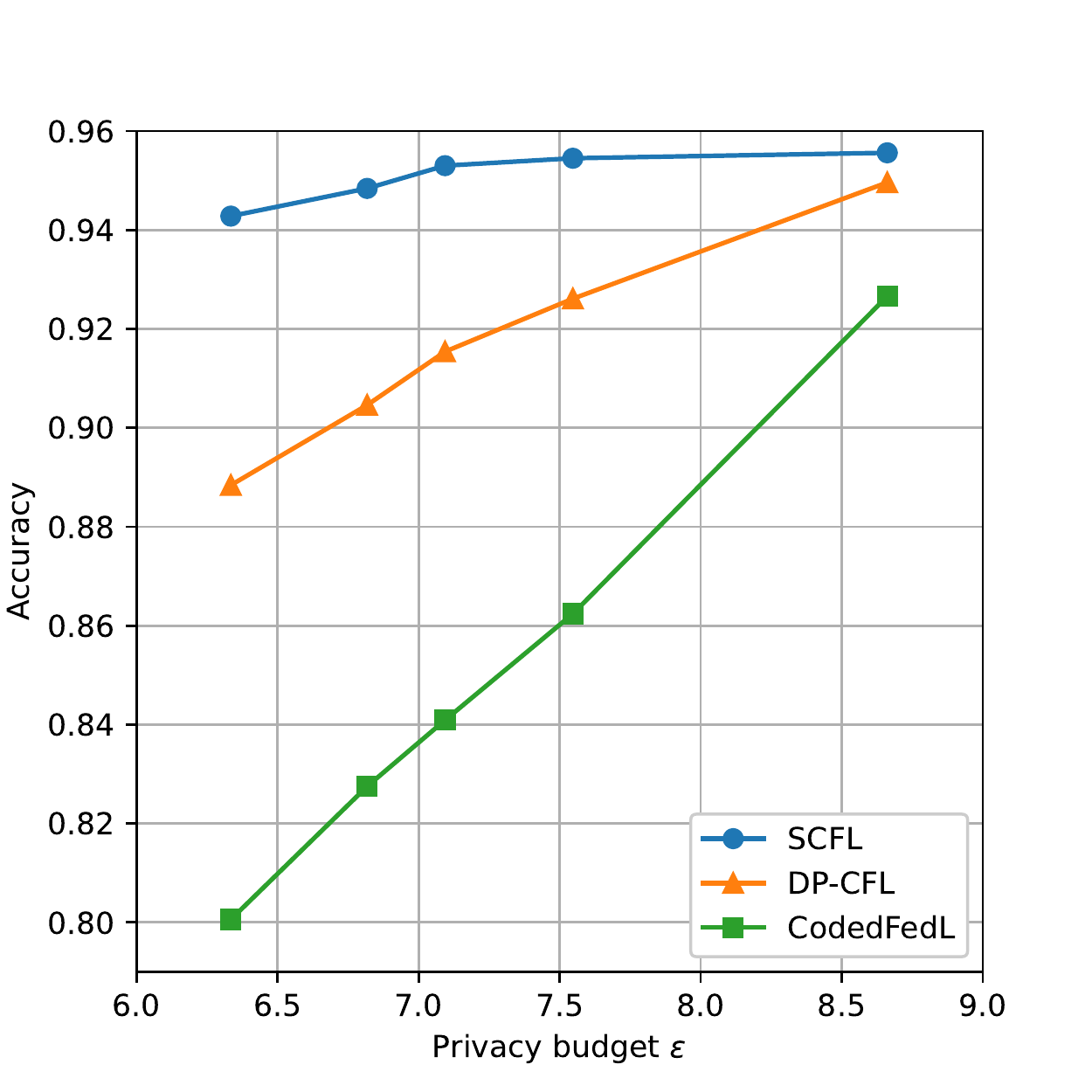}
\label{tradeoff-fig1}
\end{minipage}%
}%
\subfigure[CIFAR-10 dataset]{
\begin{minipage}[t]{0.5\linewidth}
\centering
\includegraphics[width=\linewidth]{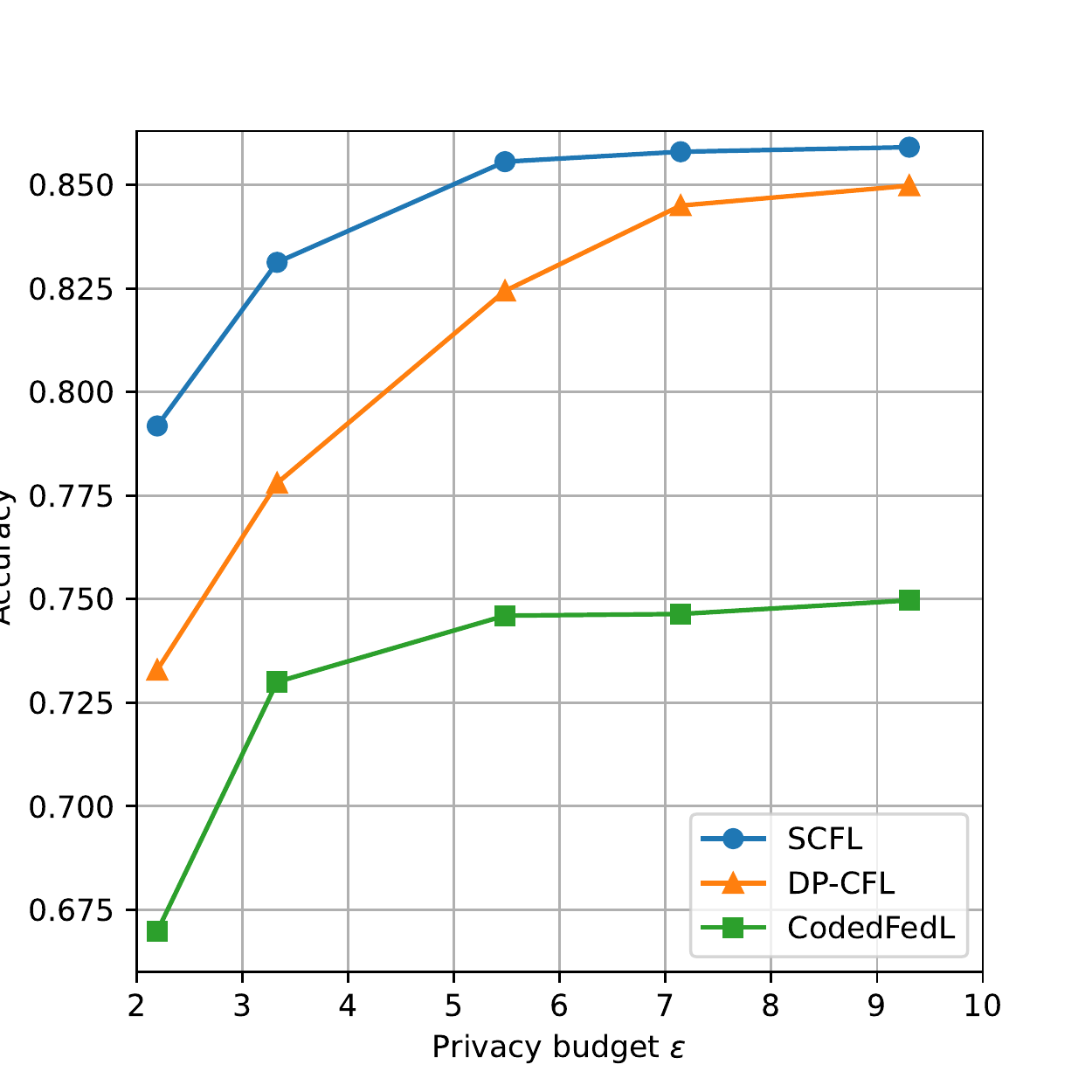}
\label{tradeoff-fig2}
\end{minipage}%
}%
\centering
\caption{Privacy-performance tradeoff on (a) the MNIST dataset and (b) the CIFAR-10 dataset. All the CFL schemes are trained with $T_{tot} = 20,000$ seconds.}
\label{Fig:privacy-performance tradeoff}
\end{figure}

We change the noise levels $\sigma^2$ of the local coded datasets to simulate different privacy budgets and investigate the privacy-performance tradeoff of different CFL schemes in Fig. \ref{Fig:privacy-performance tradeoff}.
It can be observed that the learned model achieves a higher test accuracy by increasing the privacy budget of MI-DP since the coded data are less noisy. This demonstrates the privacy-performance tradeoff highlighted in Remark \ref{rm:tradeoff}.
Among the three CFL schemes, SCFL secures the highest test accuracy under any given privacy budget, i.e., it achieves the best privacy-performance tradeoff.
This is again attributed to the proposed gradient aggregation scheme of SCFL that effectively mitigates the negative effects of the added noise.
Besides, the model obtained by CodedFedL has the worst test accuracy since its efficiency is degraded by the frequent communications. 
Comparing Fig. \ref{tradeoff-fig1} and Fig. \ref{tradeoff-fig2}, while the test accuracies on the MNIST dataset achieved by different CFL schemes keep increasing with the privacy budget, the test accuracies on the CIFAR-10 dataset plateau when the privacy budget is larger than $8.0$.
This is because the MNIST dataset is much easier to classify and therefore more sensitive to the added noise.
It is also interesting to note that under a very low noise level, i.e., $\sigma^2 \leq 5$, the proposed SCFL scheme still outperforms DP-CFL by notable margins since it improves the training efficiency by exploiting the computational resources on both the server and edge devices.

\begin{figure}[!t]
\centering
\subfigure[MNIST dataset]{
\begin{minipage}[t]{0.5\linewidth}
\centering
\includegraphics[width=\linewidth]{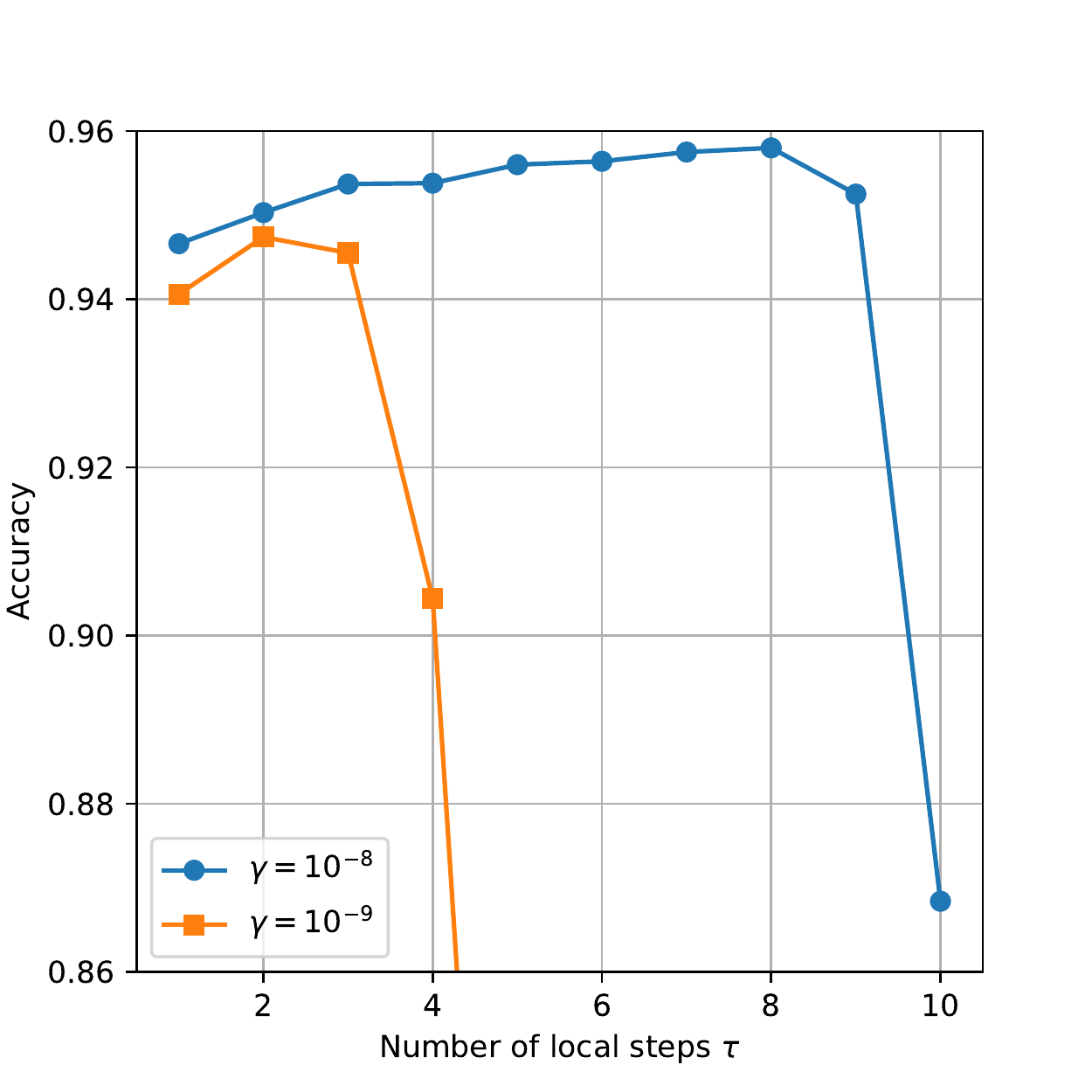}
%\caption{fig1}
\end{minipage}%
}%
\subfigure[CIFAR-10 dataset]{
\begin{minipage}[t]{0.5\linewidth}
\centering
\includegraphics[width=\linewidth]{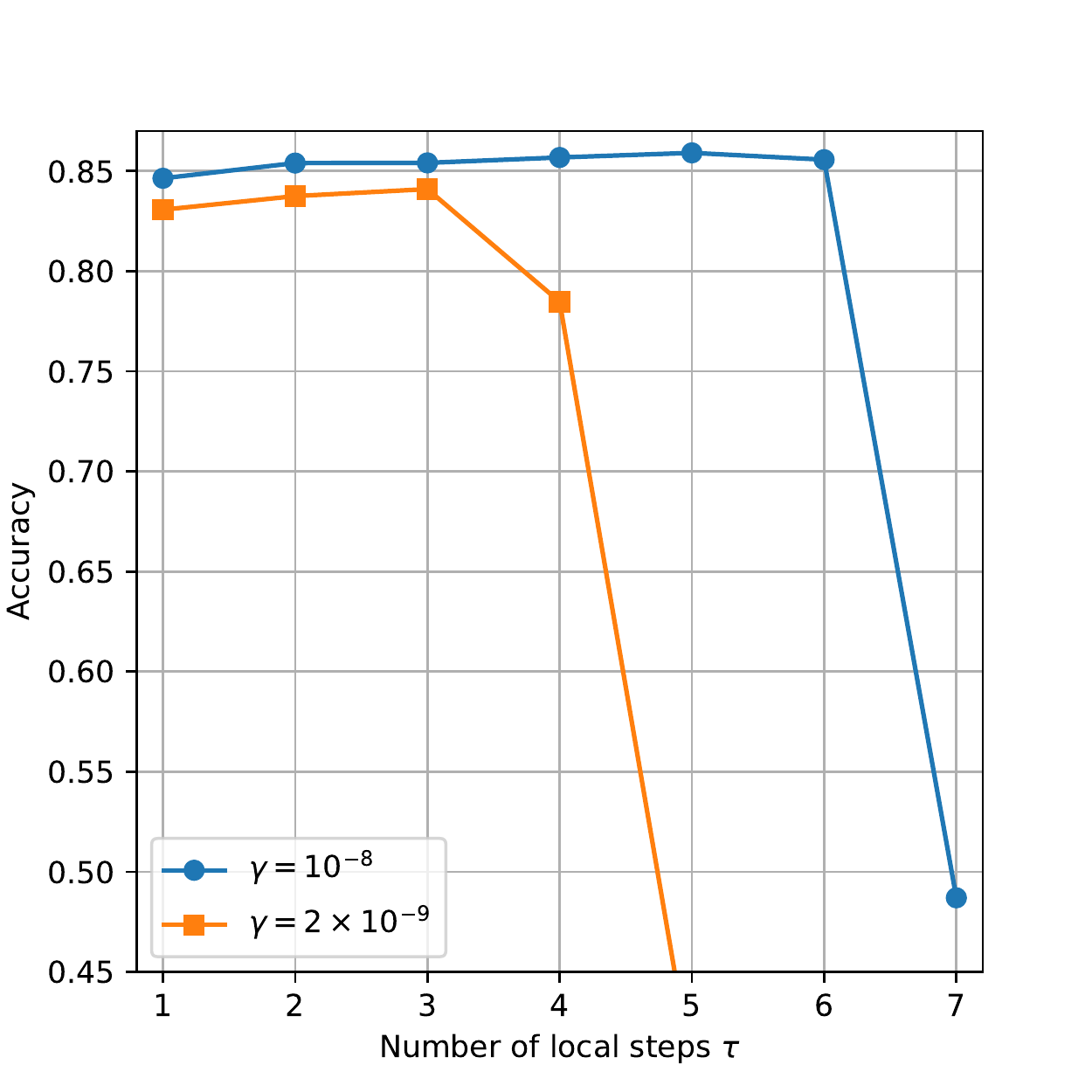}
%\caption{fig2}
\end{minipage}%
}%
\centering
% \captionsetup{belowskip=-10pt}
\caption{Test accuracy with different values of $\tau$ on (a) the MNIST dataset and (b) the CIFAR-10 dataset. The model accuracy is tested after training in SCFL scheme with $T_{tot} = 20,000$ seconds.}
\label{Fig:tau}
\end{figure}

\subsubsection{Effect of Local Steps $\tau$}

In Fig. \ref{Fig:tau}, we evaluate the model accuracy of SCFL with different local training steps $\tau$ in each communication round.
In particular, we consider two cases with different average channel gains $\gamma$, where a larger value of $\gamma$ corresponds to the case with better channel quality.
In both cases, slightly increasing the local steps, e.g., from $1$ to $3$, allows the edge devices to perform more local training before gradient uploading, which improves the training efficiency by saving the number of communication rounds to achieve a target accuracy.
However, when $\tau$ further increases, e.g., $\tau > 3$, the model accuracy drops drastically due to more uplink transmission failures, and thus fewer gradients can be received at the server.
Also, with unfavorable communication quality, i.e., with a smaller value of $\gamma$, the model accuracy degrades more significantly since more edge devices struggle in completing the local training and model uploading within a communication round.
Therefore, the number of local steps should be selected properly to avoid too frequent communication or too many uplink failures.

\subsubsection{Effect of Straggler Ratio}

To simulate different levels of straggler effect, we vary the communication bandwidth to achieve a certain straggler ratio, i.e., the expected proportion of edge devices that fail to upload their local gradients to the server within a communication round.
We see from Fig. \ref{Fig:bandwidth} that all the methods suffer from accuracy drop with an increased number of stragglers.
While SCFL achieves the best test accuracy due to the use of coded data, its accuracy still decreases slightly as more stragglers cause larger gradient estimation errors in aggregation.
In comparison, FedAvg has no mechanism to handle the straggling edge devices and thus experiences more severe accuracy degradation.
Besides, CodedFedL achieves the lowest test accuracy since its efficiency is heavily degraded by frequent communications.
We notice that uploading $c=10,000$ coded data samples causes a communication overhead of $76$ MB in the MNIST training task and $156$ MB in the CIFAR-10 training task.
In comparison, the communication overhead in the training stage is $123$ MB and $208$ MB, respectively.
Despite with a non-negligible communication cost, the above results show that the coded data effectively mitigates the straggling effect and improves the learned model accuracy.

\begin{figure}[!t]
\centering
\subfigure[MNIST dataset]{
\begin{minipage}[t]{0.5\linewidth}
\centering
\includegraphics[width=\linewidth]{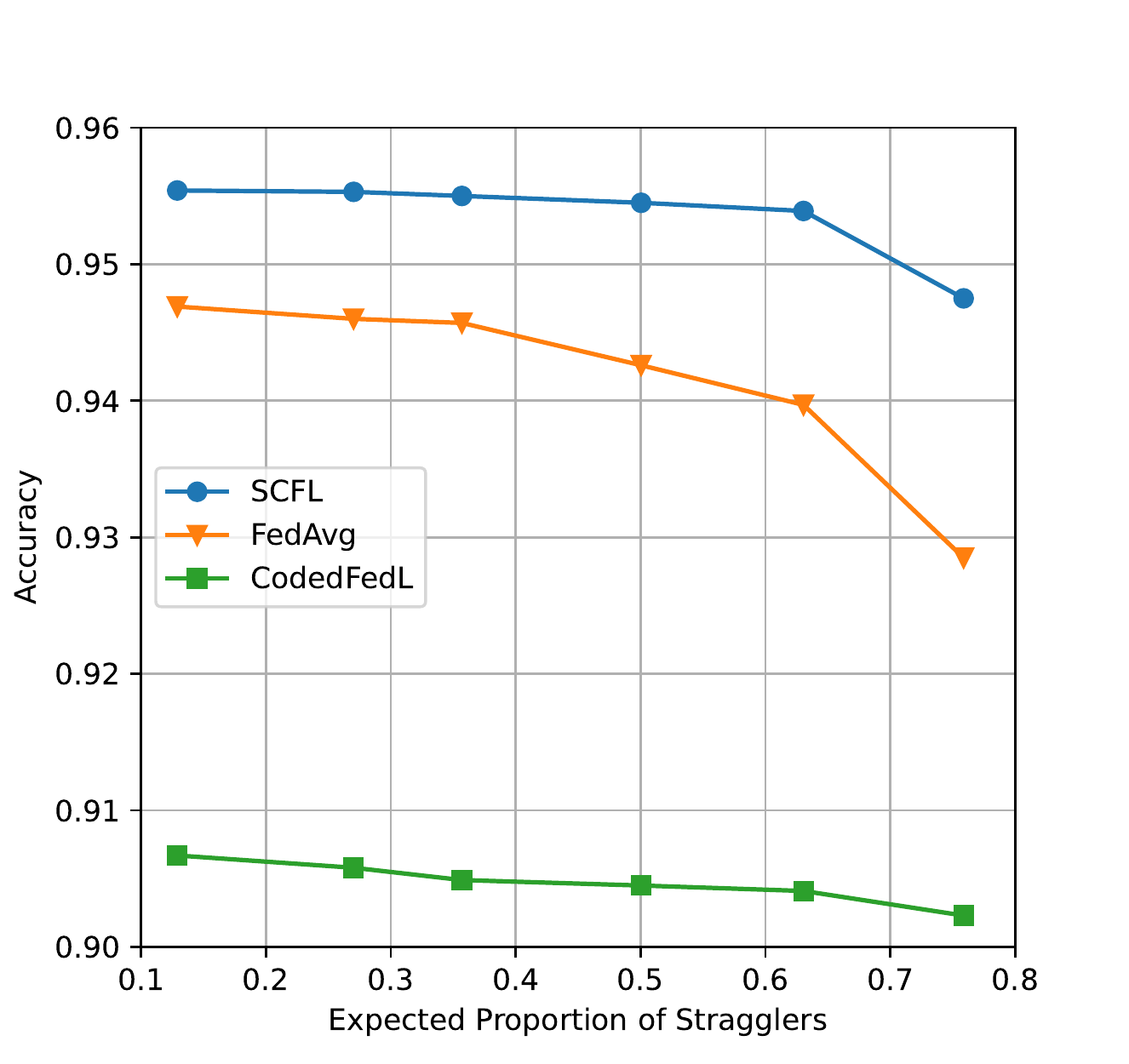}
\label{bandwidth-fig1}
\end{minipage}%
}%
\subfigure[CIFAR-10 dataset]{
\begin{minipage}[t]{0.5\linewidth}
\centering
\includegraphics[width=\linewidth]{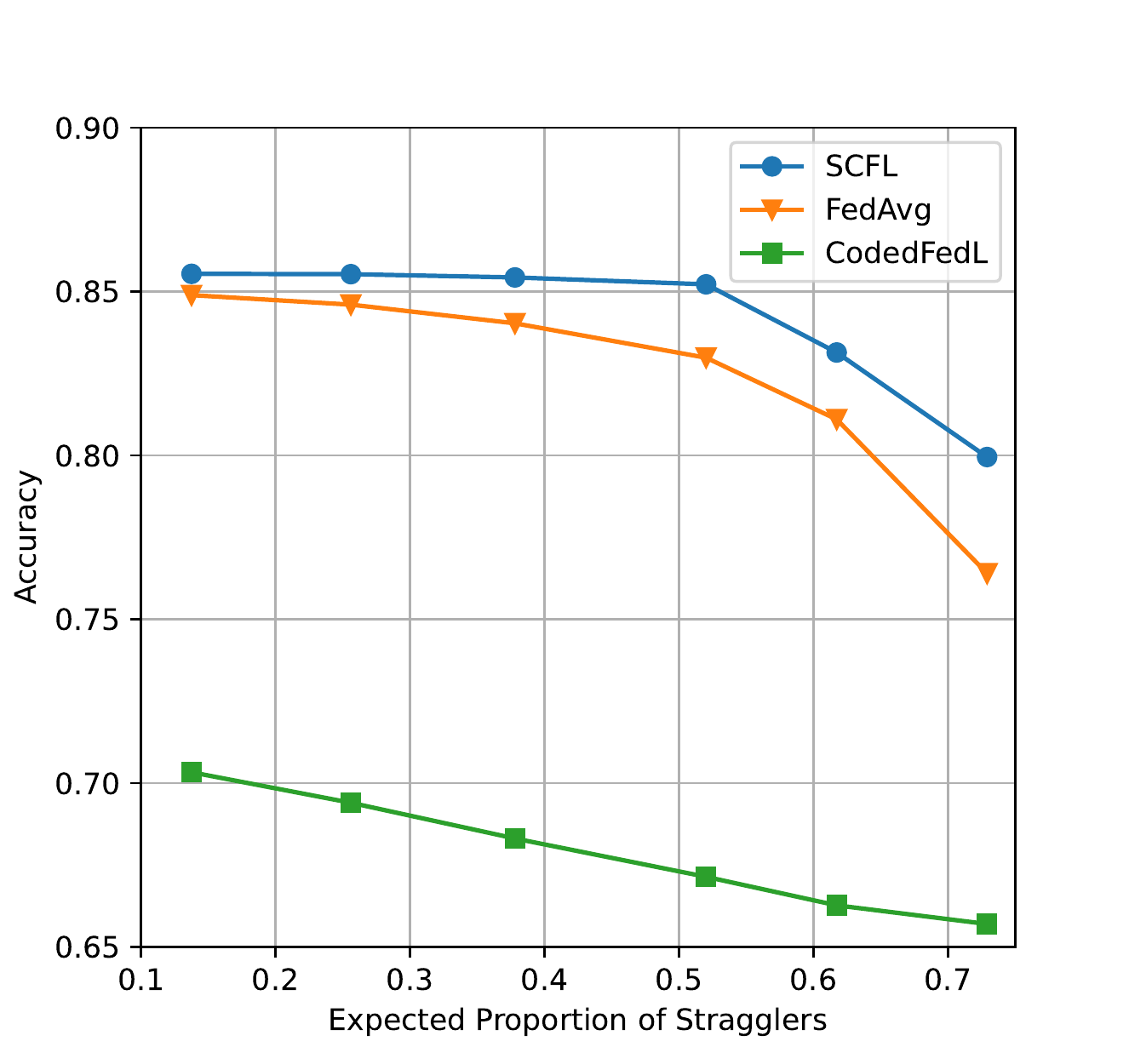}
\label{bandwidth-fig2}
\end{minipage}%
}%
\centering
\caption{Test accuracy with different expected proportions of stragglers (a) after $T_{tot} = 10,000$ seconds on the MNIST dataset and (b) after $T_{tot} = 20,000$ seconds on the CIFAR-10 dataset.}
\label{Fig:bandwidth}
\end{figure}

\subsubsection{Effect of the Number of Coded Data Samples}
We evaluate the test accuracy of SCFL on the MNIST training task with different numbers of coded data samples $c$, as shown in Fig. \ref{fig:vary-c}.
It can be observed that as the number of coded data increases, the learned model achieves a higher test accuracy, which verifies the theoretical result in Theorem \ref{thm-convergence}. Besides, a larger amount of coded data samples are needed to achieve a target accuracy when the straggler ratio becomes higher.
\begin{figure}[!t]
    \centering
    \includegraphics[width=0.35\textwidth]{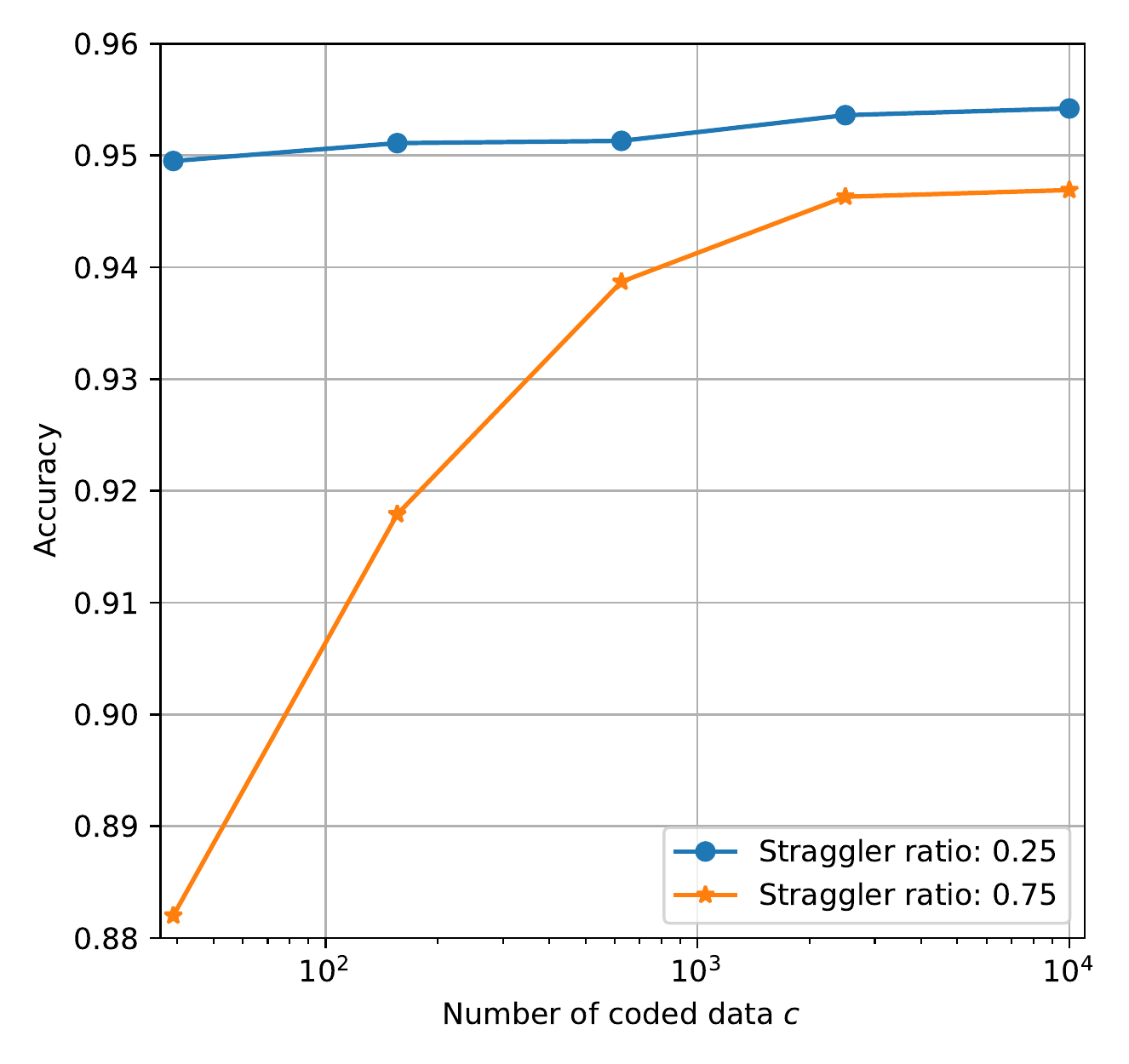}
    \caption{Test accuracy of SCFL vs. different numbers of coded data samples $c$.}
    \label{fig:vary-c}
\end{figure}

\subsubsection{Incentive Mechanism Evaluation}

\begin{figure*}[!t]
\centering
\subfigure[Device utility vs. the contract item]{
\begin{minipage}[t]{0.3\linewidth}
\centering
\includegraphics[width=\linewidth]{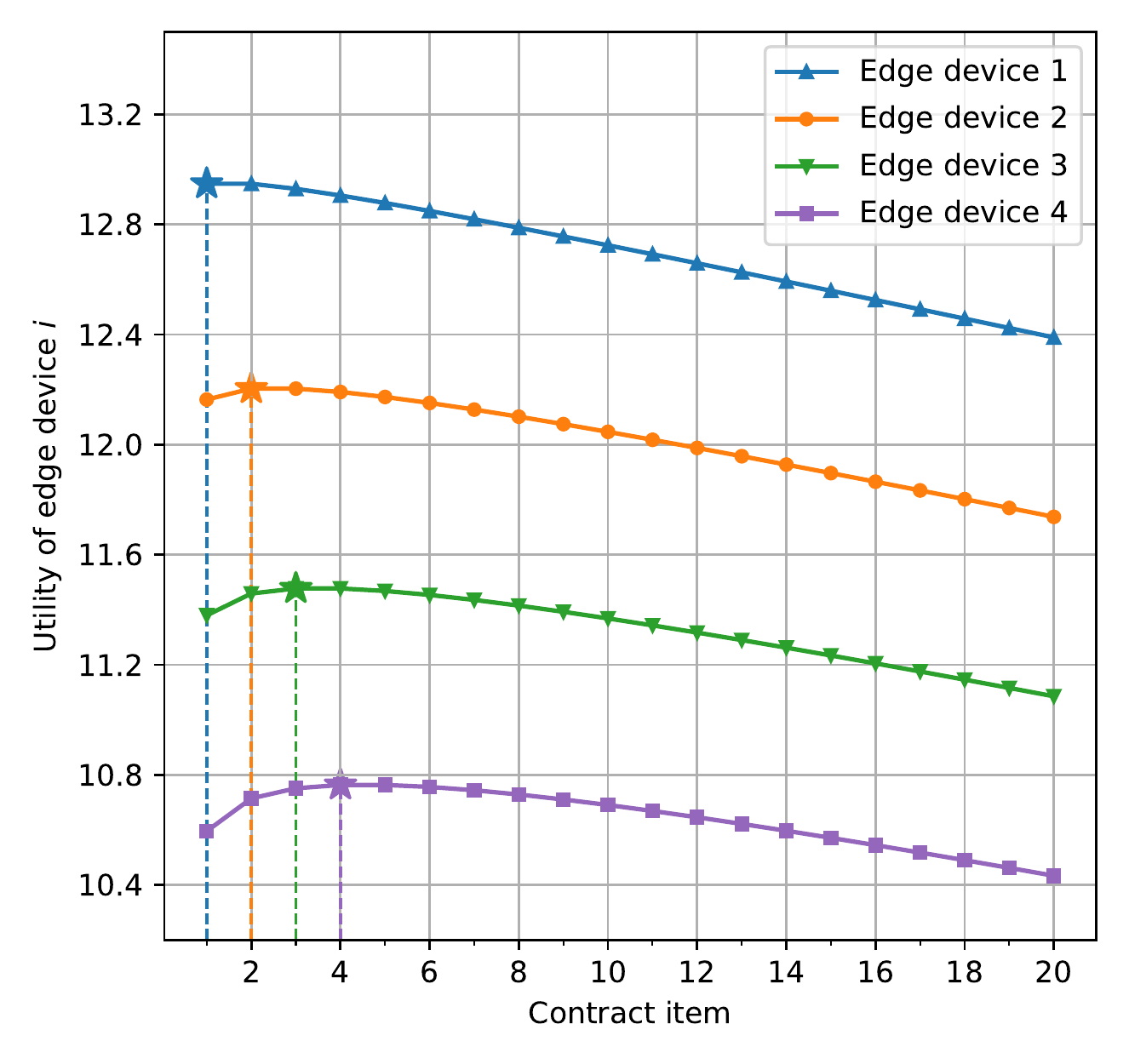}
\vspace{-3em}
\label{fig:feasibility-1}
\end{minipage}%
}%
\subfigure[Test accuracy vs. total reward]{
\begin{minipage}[t]{0.3\linewidth}
\centering
\includegraphics[width=\linewidth]{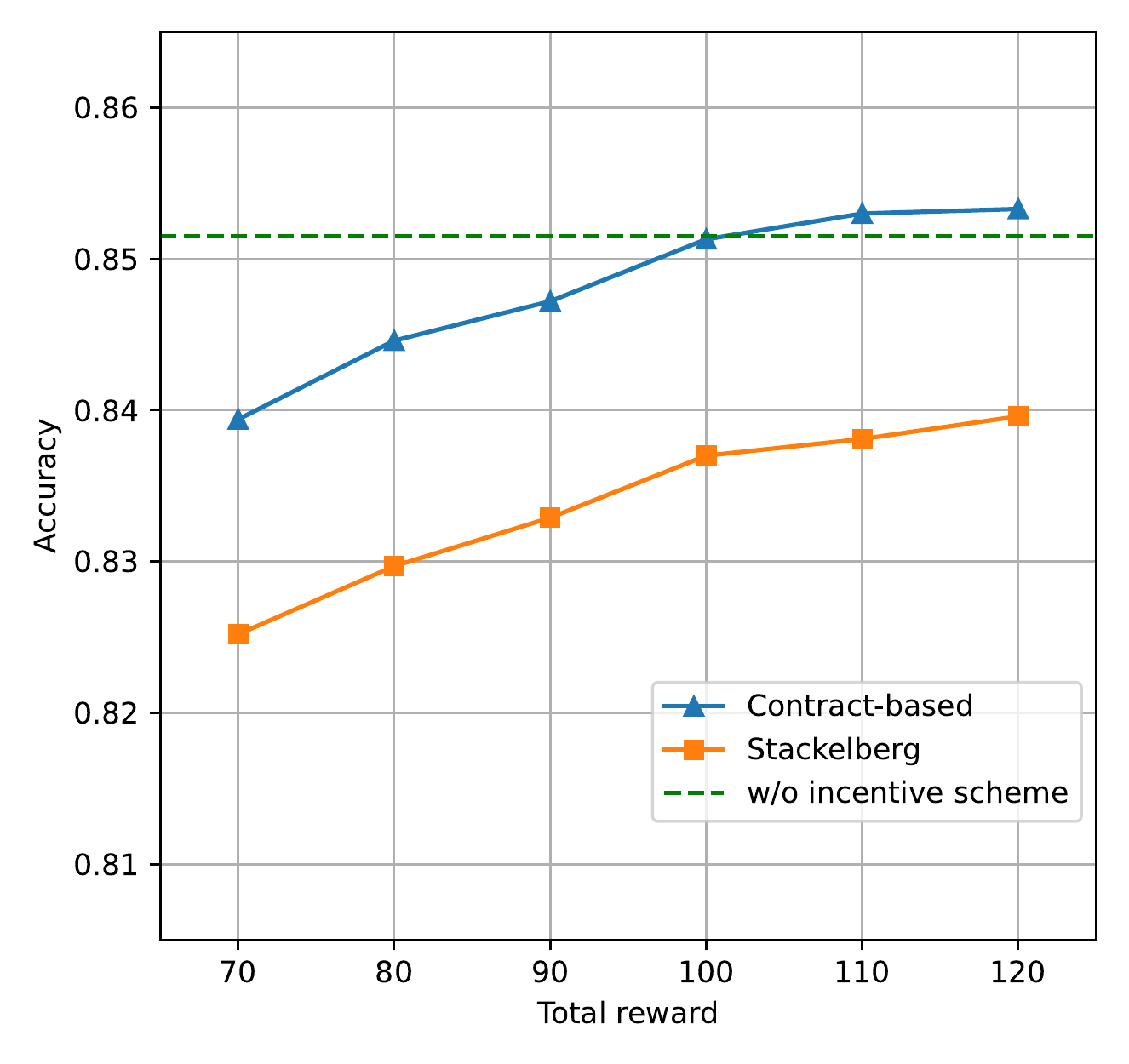}
\vspace{-3em}
\label{fig:feasibility-2}
\end{minipage}%
}
\subfigure[Test accuracy vs. total reward]{
\begin{minipage}[t]{0.3\linewidth}
\centering
\includegraphics[width=\linewidth]{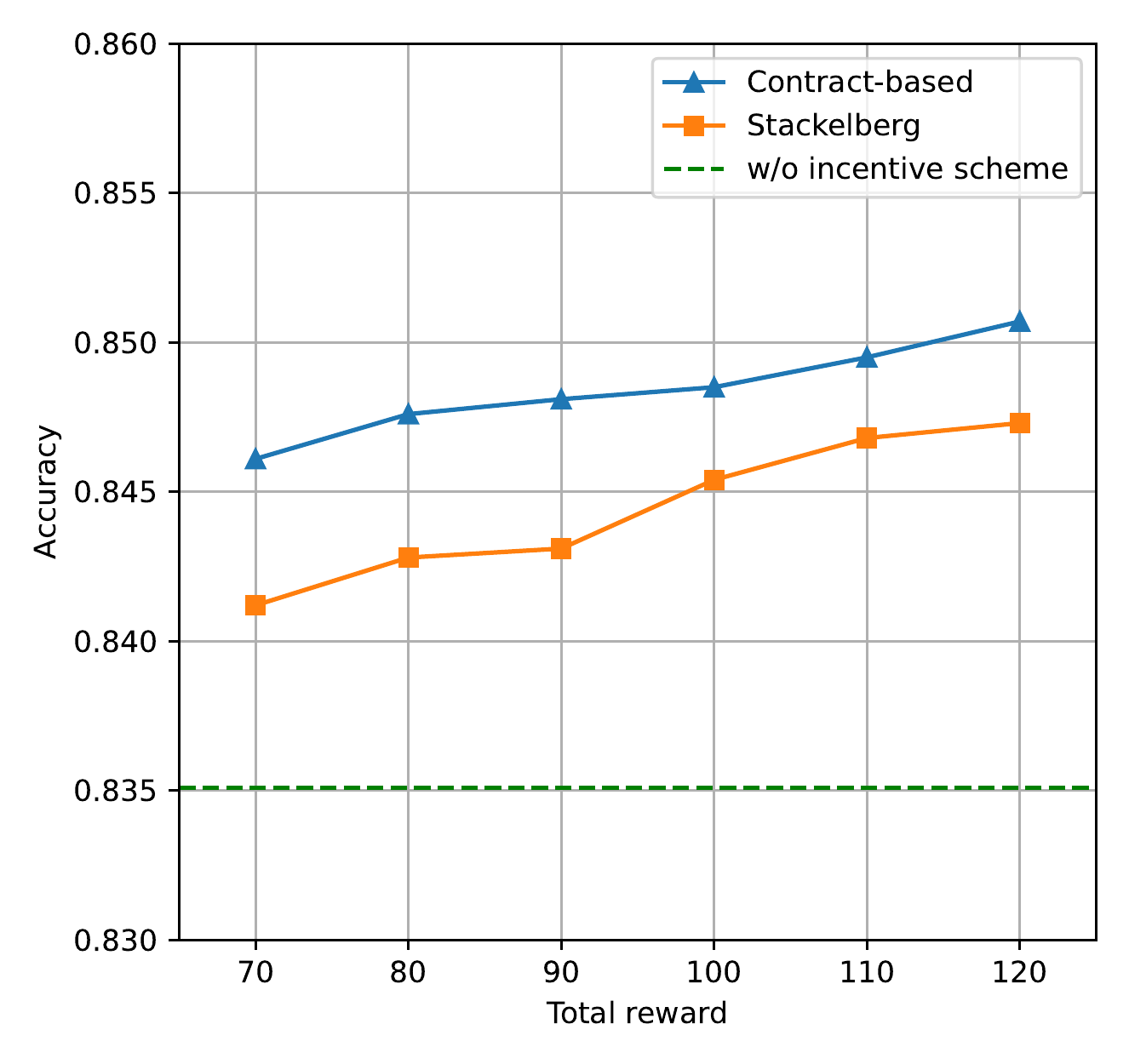}
\vspace{-3em}
\label{fig:feasibility-3}
\end{minipage}%
}
\centering
\caption{Performance comparison of incentive mechanisms on the CIFAR-10 dataset. The 20 contract items evaluated in Fig. \ref{fig:feasibility-1} satisfy $\epsilon_{1} \geq \epsilon_{2} \geq \dots \geq \epsilon_{N}$. The result in Fig. \ref{fig:feasibility-3} is evaluated with 50\% stragglers. The model accuracy is tested after training in SCFL scheme with $T_{tot} = 20,000$ seconds.}
\label{fig:feasibility}
\end{figure*}

We adopt $\Gamma(\sigma_{i}^{2}) = - \sigma_{i}^{4}$ as an example to characterize the negative effects of the added noise on the training performance.
The privacy sensitivity of each edge device is generated according to $\mu_{i} = 1 + 0.02i, \forall i\in[N]$.
In Fig. \ref{fig:feasibility-1}, we show the utility of four of the edge devices with respect to different contract items, in which, the maximal utility of edge device $i$ is achieved by accepting the $i$-th contract item instead of others, which verifies the IC requirement in \eqref{eq:IC}.
It is clear that the IR requirement in \eqref{eq:IR} is also satisfied since the maximal utility of the edge devices are all positive.

Now we validate the benefits of the proposed contract-based incentive mechanism in \cite{hu2020trading} by showing the test accuracy of the learned models with respect to the total reward paid by the server in Fig. \ref{fig:feasibility-2}-\ref{fig:feasibility-3}.
Without an incentive scheme, edge devices do not upload coded data to avoid privacy leakage, which can be viewed as FedAvg.
In the Stackelberg game approach, the server computes the Nash equilibrium to determine the total reward which is allocated to the edge devices proportionally to their privacy budgets.
To achieve a fair comparison, we vary the values of $\lambda$ in \eqref{eq:u_s_1} such that two incentive mechanisms yield the same total reward.
In particular, a small value of $\lambda$ indicates that the server concerns more on the training performance and thus provides more rewards as incentives for the edge devices, as shown in Table \ref{table:lambda} in Appendix \ref{sec:table}.
In return, the edge devices generate less noisy coded data that helps to improve the model accuracy.
By implementing an incentive mechanism, we observe severe accuracy degradation in SCFL, which is because the server receives very noisy coded data with a small reward budget (i.e., $\lambda$ is large).
However, with a sufficient amount of total reward (i.e., $\sum_{i=1}^{N} r_i \geq 100$), SCFL achieves better test accuracy than FedAvg since the server can compute gradients on the coded data to compensate the missing gradients from stragglers.
When more edge devices (e.g., 50\% in Fig. \ref{fig:feasibility-3}) become stragglers, SCFL with an incentive scheme gains larger accuracy improvement than FedAvg.
Moreover, it can be seen that under the same total reward, the contract-based mechanism leads to a better model than the Stackelberg game approach.
This is because the allocated reward to each edge device in the Stackelberg game approach depends on the privacy budgets of all edge devices.
As such, those edge devices with higher privacy sensitivity may receive a small reward and thus prefer to add much stronger noise.
On the contrary, the contract-based mechanism judiciously designs a contract item for each edge device, so that it is able to yield a less noisy coded dataset and thus improve the training performance.

\section{Conclusions}\label{sec:conclusion}

In this paper, we proposed a stochastic coded federated learning (SCFL) framework, where a coded dataset is constructed at the server to mitigate the straggler effect in FL.
We designed an unbiased aggregation scheme for SCFL which enables periodical averaging that significantly improves the training performance.
We characterized the tradeoff between the training performance and privacy guarantee of SCFL, which is determined by the noise levels of the coded datasets.
The conflict between convergence performance and the privacy protection of local coded datasets was resolved by developing a contract-based incentive mechanism.
The simulation results demonstrated the benefits of SCFL and corroborated the privacy-performance tradeoff.
We also verified the feasibility of the proposed incentive mechanism and showed it is more cost-effective than the Stackelberg game approach.

\textbf{Limitations and future works.}
While SCFL is able to solve classification problems by resorting to feature mapping methods, it is worth investigating how to extend it for non-linear ML problems in future works.
It is also important to adapt the number of coded data samples at each edge device to strike a good balance between their communication cost and privacy budget.
Furthermore, the duration of each communication round needs further optimization to achieve better training efficiency, and exploring SCFL in decentralized setups is also interesting.

\bibliographystyle{./IEEEtran}
\bibliography{ref}

\begin{IEEEbiography}[{\includegraphics[width=1in,height=1.25in,clip,keepaspectratio]{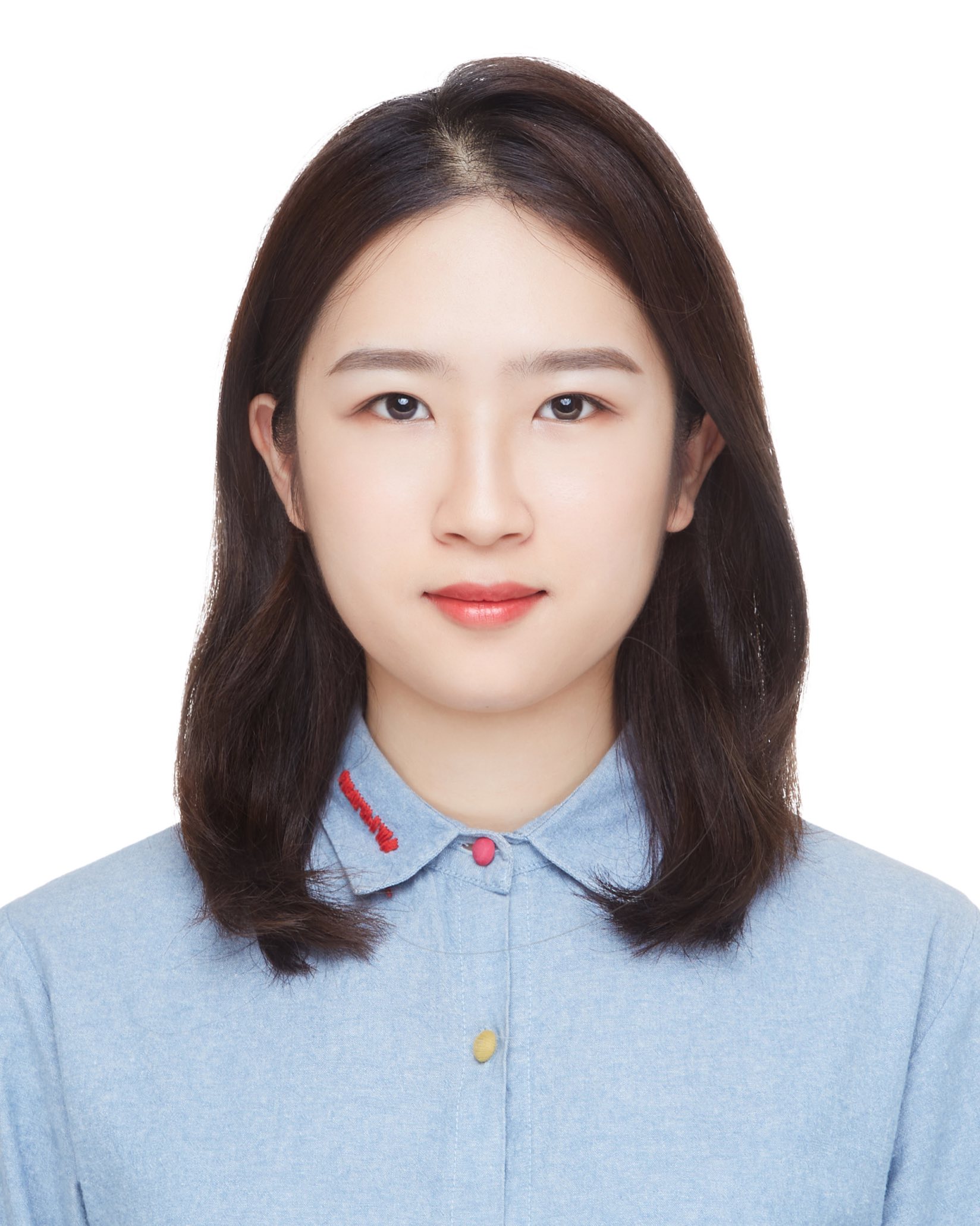}}]{Yuchang Sun}
(Graduate student member, IEEE) received the B.Eng. degree in electronic and information engineering from Beijing Institute of Technology in 2020. She is currently pursuing a Ph.D. degree at the Hong Kong University of Science and Technology. Her research interests include federated learning and distributed optimization.
\end{IEEEbiography}

\begin{IEEEbiography}[{\includegraphics[width=1in,height=1.25in,clip,keepaspectratio]{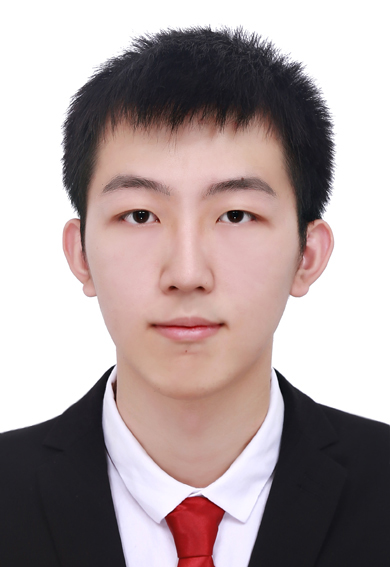}}]{Jiawei Shao}
(Graduate Student Member, IEEE) received the B.Eng. degree in telecommunication engineering from Beijing University of Posts and Telecommunications, Beijing, China, in 2019. He is currently pursuing a Ph.D. degree at the Hong Kong University of Science and Technology. His research interests include edge artificial intelligence, task-oriented communications, neural compression, and federated learning.
\end{IEEEbiography}

\begin{IEEEbiography}[{\includegraphics[width=1in,height=1.25in,clip,keepaspectratio]{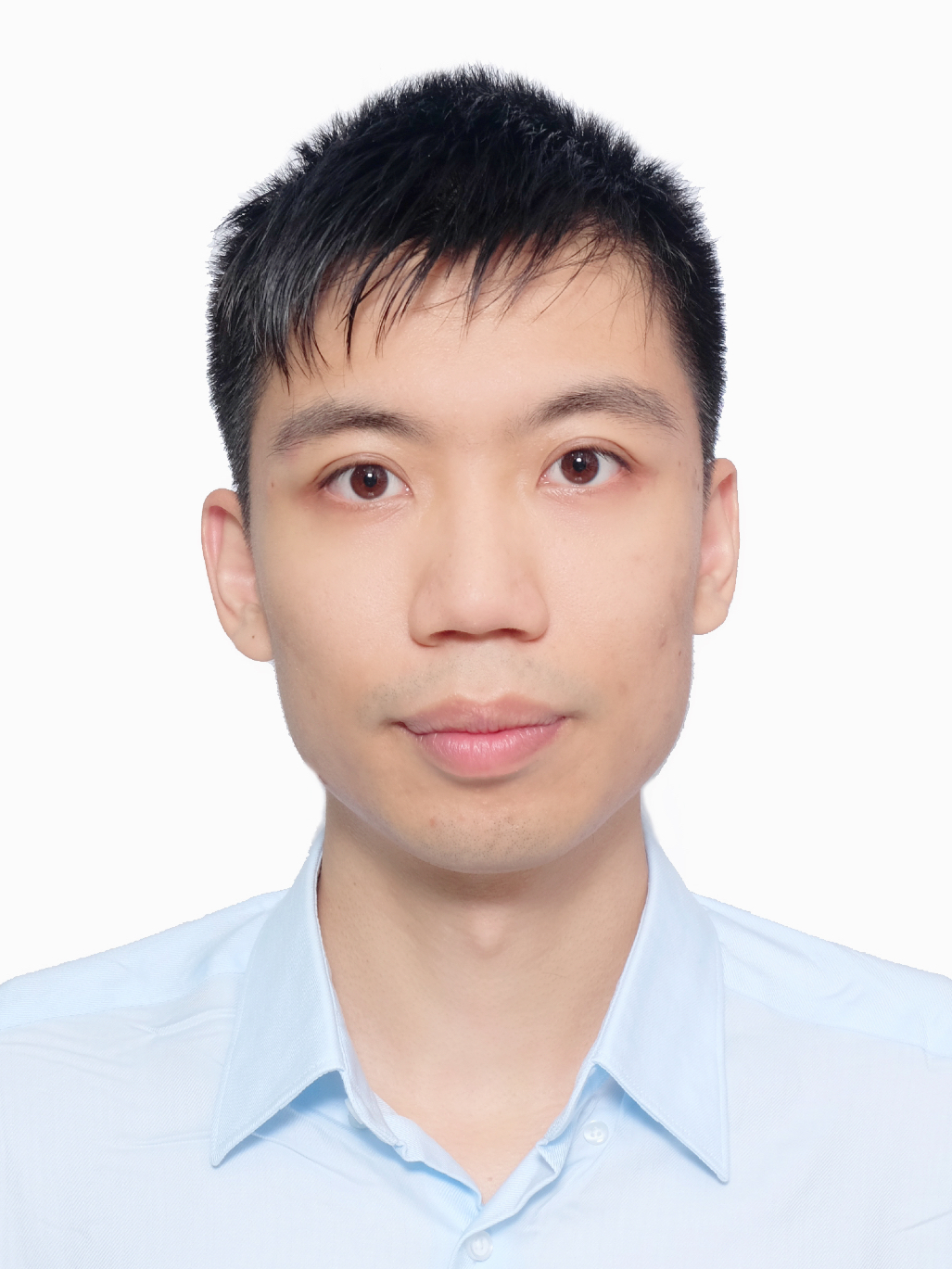}}]{Yuyi Mao}
(Member, IEEE) received the B.Eng. degree in information and communication engineering from Zhejiang University, Hangzhou, China, in 2013, and the Ph.D. degree in electronic and computer engineering from The Hong Kong University of Science and Technology, Hong Kong, in 2017. He was a Lead Engineer with the Hong Kong Applied Science and Technology Research Institute Co., Ltd., Hong Kong, and a Senior Researcher with the Theory Lab, 2012 Labs, Huawei Tech. Investment Co., Ltd., Hong Kong. He is currently a Research Assistant Professor with the Department of Electrical and Electronic Engineering, The Hong Kong Polytechnic University, Hong Kong. His research interests include wireless communications and networking, mobile-edge computing and learning, and wireless artificial intelligence.

He was the recipient of the 2021 IEEE Communications Society Best Survey Paper Award and the 2019 IEEE Communications Society and Information Theory Society Joint Paper Award. He was also recognized as an Exemplary Reviewer of the IEEE Wireless Communications Letters in 2021 and 2019 and the IEEE Transactions on Communications in 2020. He is an Editor of the IEEE Wireless Communications Letters and an Associate Editor of the EURASIP Journal on Wireless Communications and Networking.
\end{IEEEbiography}

\begin{IEEEbiography}[{\includegraphics[width=1in,height=1.25in,clip,keepaspectratio]{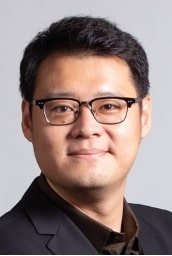}}]{Songze Li}
(Member, IEEE) received the B.Sc. degree from New York University in 2011 and the Ph.D. degree from University of Southern California in 2018, both in electrical engineering. He is a professor at School of Cyber Science and Engineering, Southeast University, China. Dr. Li’s research interest is on developing secure, scalable, and accurate distributed computing and learning solutions, mainly focused on the areas of coded distributed computing, federated learning, and blockchains. He was Qualcomm Innovation Fellowship finalist in 2017. He received the Best Paper Award at NeurIPS-20 Workshop on Scalability, Privacy, and Security in Federated Learning.
\end{IEEEbiography}

\begin{IEEEbiography}[{\includegraphics[width=1in,height=1.25in,clip,keepaspectratio]{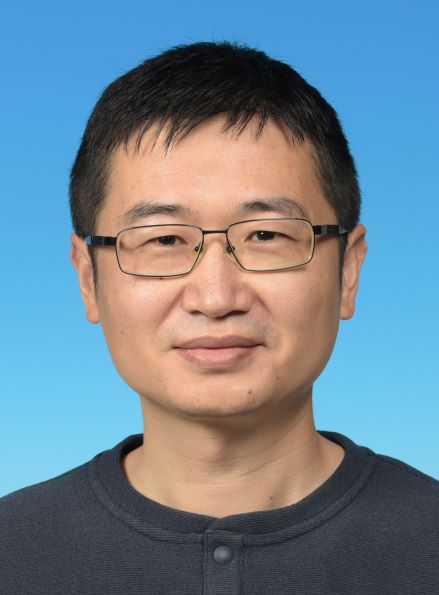}}]{Jun Zhang}
(Fellow, IEEE) received the B.Eng. degree in Electronic Engineering from the University of Science and Technology of China in 2004, the M.Phil. degree in Information Engineering from the Chinese University of Hong Kong in 2006, and the Ph.D. degree in Electrical and Computer Engineering from the University of Texas at Austin in 2009. He is an Associate Professor in the Department of Electronic and Computer Engineering at the Hong Kong University of Science and Technology. His research interests include wireless communications and networking, mobile edge computing and edge AI, and cooperative AI.

Dr. Zhang co-authored the book Fundamentals of LTE (Prentice-Hall, 2010). He is a co-recipient of several best paper awards, including the 2021 Best Survey Paper Award of the IEEE Communications Society, the 2019 IEEE Communications Society \& Information Theory Society Joint Paper Award, and the 2016 Marconi Prize Paper Award in Wireless Communications. Two papers he co-authored received the Young Author Best Paper Award of the IEEE Signal Processing Society in 2016 and 2018, respectively. He also received the 2016 IEEE ComSoc Asia-Pacific Best Young Researcher Award. He is an Editor of IEEE Transactions on Communications, IEEE Transactions on Machine Learning in Communications and Networking, and was an editor of IEEE Transactions on Wireless Communications (2015-2020). He served as a MAC track co-chair for IEEE Wireless Communications and Networking Conference (WCNC) 2011 and a co-chair for the Wireless Communications Symposium of IEEE International Conference on Communications (ICC) 2021. He is an IEEE Fellow and an IEEE ComSoc Distinguished Lecturer.
\end{IEEEbiography}

\appendices

\section{Proof of Lemma \ref{lem:unbiased}}\label{proof:unbiased}

We begin with an important lemma that characterizes the relationships between the stochastic gradients in the $u$-th step of the $k$-th communication round.
\begin{lemma}\label{lem:equal}
For any $u\in[\tau]$, $k\in[K]$, $i\in[N]$, the following equality holds:
\begin{equation}
\setlength\abovedisplayskip{0cm}
\setlength\belowdisplayskip{-1cm}
    \mathbb{E}[ g_{k,u}^{(i)}( \mathbf{W}_{k,u}^{(i)} ) ] \!=\! \mathbf{Z}^{(i)} \nabla f(\mathbf{W}_{k,u}), \mathbb{E}[g_{k,u}^{\s}( \mathbf{W}_{k,u}^{\s} )] = \nabla f(\mathbf{W}_{k,u}).
    \label{eq:equal}
\end{equation}
\end{lemma}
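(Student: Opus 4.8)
The plan is to verify each identity by conditioning on the current iterate and taking expectation only over the fresh randomness drawn at step $u$, then exploiting that for linear regression the stochastic gradients are affine in the model. For the edge-device identity, I would first use the symmetry of the diagonal matrix $\mathbf{S}_{k,u}^{(i)}$ to rewrite \eqref{eq:device} as $g_{k,u}^{(i)}(\mathbf{W}_{k,u}^{(i)}) = \frac{l_i}{b_k^i}(\mathbf{X}^{(i)})^{\TT}(\mathbf{S}_{k,u}^{(i)})^{\TT}\mathbf{S}_{k,u}^{(i)}(\mathbf{X}^{(i)}\mathbf{W}_{k,u}^{(i)} - \mathbf{Y}^{(i)})$. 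Because the step-$u$ sampling is independent of $\mathbf{W}_{k,u}^{(i)}$, conditioning on the iterate and applying $\mathbb{E}[\frac{l_i}{b_k^i}(\mathbf{S}_{k,u}^{(i)})^{\TT}\mathbf{S}_{k,u}^{(i)}] = \mathbf{I}_{l_i}$ from Lemma \ref{useful-lemma} collapses the expression to the local full-batch gradient $\nabla f_i(\mathbf{W}_{k,u}^{(i)}) = (\mathbf{X}^{(i)})^{\TT}(\mathbf{X}^{(i)}\mathbf{W}_{k,u}^{(i)} - \mathbf{Y}^{(i)})$. Substituting $\mathbf{X}^{(i)} = \mathbf{Z}^{(i)}\mathbf{X}$ and $\mathbf{Y}^{(i)} = \mathbf{Z}^{(i)}\mathbf{Y}$ recasts this as $\mathbf{X}^{\TT}(\mathbf{Z}^{(i)})^{\TT}\mathbf{Z}^{(i)}(\mathbf{X}\mathbf{W}_{k,u}^{(i)} - \mathbf{Y})$, i.e. the portion of the global gradient carried by device $i$'s samples, which is the object denoted $\mathbf{Z}^{(i)}\nabla f(\cdot)$ on the right-hand side.

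For the server identity the same collapse applies to $\mathbf{S}_{k,u}^{\s}$ via $\mathbb{E}[\frac{c}{b_{\s}}(\mathbf{S}_{k,u}^{\s})^{\TT}\mathbf{S}_{k,u}^{\s}] = \mathbf{I}_c$, but I must additionally average over the projection $\mathbf{G}$ and the noise $\mathbf{N}$. Expanding $\mathbf{\tilde{X}}^{\TT}\mathbf{\tilde{X}}$ and $\mathbf{\tilde{X}}^{\TT}\mathbf{\tilde{Y}}$ with $\mathbf{\tilde{X}} = \mathbf{G}\mathbf{X} + \mathbf{N}$ and $\mathbf{\tilde{Y}} = \mathbf{G}\mathbf{Y}$, the independence and zero mean of $\mathbf{G}$ and $\mathbf{N}$ eliminate the cross terms, while $\mathbb{E}[\frac{1}{c}\mathbf{G}^{\TT}\mathbf{G}] = \mathbf{I}_m$ and $\mathbb{E}[\frac{1}{c}\mathbf{N}^{\TT}\mathbf{N}] = \sigma^2\mathbf{I}_d$ from Lemma \ref{useful-lemma} yield $\mathbb{E}[\mathbf{\tilde{X}}^{\TT}\mathbf{\tilde{X}}] = c(\mathbf{X}^{\TT}\mathbf{X} + \sigma^2\mathbf{I}_d)$ and $\mathbb{E}[\mathbf{\tilde{X}}^{\TT}\mathbf{\tilde{Y}}] = c\mathbf{X}^{\TT}\mathbf{Y}$. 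Hence the conditional expectation of the raw coded gradient is $\nabla f(\mathbf{W}_{k,u}^{\s}) + \sigma^2\mathbf{W}_{k,u}^{\s}$, and the make-up term $g_{\mathrm{o}}(\mathbf{W}_{k,u}^{\s}) = -\sigma^2\mathbf{W}_{k,u}^{\s}$ is precisely what cancels the bias introduced by the additive noise, leaving $\nabla f(\mathbf{W}_{k,u}^{\s})$.

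The last and most delicate step is to move from the gradient at the realized iterate to the gradient at the virtual iterate $\mathbf{W}_{k,u}$ appearing in the statement. Since $\nabla f$ and $\nabla f_i$ are affine, a further expectation commutes with the gradient map, $\mathbb{E}[\nabla f(\mathbf{W}_{k,u}^{\s})] = \nabla f(\mathbb{E}[\mathbf{W}_{k,u}^{\s}])$, so it suffices to show $\mathbb{E}[\mathbf{W}_{k,u}^{\s}] = \mathbf{W}_{k,u}$ and, for the devices, $\mathbb{E}[\mathbf{W}_{k,u}^{(i)}] = \mathbf{W}_{k,u}$, which I would establish by induction on $u$ with the immediate base case $\mathbf{W}_{k,0}^{\s} = \mathbf{W}_{k,0}^{(i)} = \mathbf{W}_k = \mathbf{W}_{k,0}$. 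On the server side the induction closes cleanly because the expected update direction $g_{k,u-1}^{\s} + g_{\mathrm{o}}$ is exactly $\nabla f$ at the previous iterate, matching the virtual recursion $\mathbf{W}_{k,u} = \mathbf{W}_{k,u-1} - \eta_k \nabla f(\mathbf{W}_{k,u-1})$. I expect the edge-device side to be the main obstacle: there the expected per-step direction is the local gradient $\nabla f_i$ rather than the global $\nabla f$, so a priori the individual expected iterates need not track $\mathbf{W}_{k,u}$. The identity therefore hinges on establishing $\mathbb{E}[\mathbf{W}_{k,u}^{(i)}] = \mathbf{W}_{k,u}$ across the $\tau$ inner steps; once this holds, $\sum_{i=1}^{N}(\mathbf{Z}^{(i)})^{\TT}\mathbf{Z}^{(i)} = \mathbf{I}_m$ (with $\mathbb{E}[\mathbbm{1}_k^{(i)}] = p_i$ neutralizing the $1/p_i$ weights) lets the summed device contribution collapse to $\nabla f(\mathbf{W}_{k,u})$. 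Controlling this iterate-level agreement is where I would concentrate the effort, and the per-step identity proved here is exactly what Lemma \ref{lem:unbiased} then sums over $u = 0, \dots, \tau-1$.
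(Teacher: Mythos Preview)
Your proposal follows the same route as the paper: collapse the sampling matrices via Lemma~\ref{useful-lemma}, average out $\mathbf{G}$ and $\mathbf{N}$ on the server side so that the make-up term $-\sigma^2\mathbf{W}_{k,u}^{\s}$ cancels the residual bias, then use affineness of $\nabla f$ to replace the realized iterate by its expectation, and close by an induction on $u$ showing $\mathbb{E}[\mathbf{W}_{k,u}^{(i)}]=\mathbb{E}[\mathbf{W}_{k,u}^{\s}]=\mathbf{W}_{k,u}$. In particular, your expansion of $\mathbb{E}[\tfrac{1}{c}\mathbf{\tilde{X}}^{\TT}\mathbf{\tilde{X}}]$ and $\mathbb{E}[\tfrac{1}{c}\mathbf{\tilde{X}}^{\TT}\mathbf{\tilde{Y}}]$ is exactly what the paper does term by term.

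The concern you raise at the end is well placed and is not resolved by the paper either. The paper's inductive step reads
\[
\mathbb{E}[\mathbf{W}_{k,u+1}]
=\mathbb{E}[\mathbf{W}_{k,u}]-\mathbb{E}\!\left[\tfrac{\eta_k}{2}\Big(\textstyle\sum_{i}\tfrac{1}{p_i}g_{k,u}^{(i)}\mathbbm{1}_k^{(i)}+g_{k,u}^{\s}\Big)\right]
=\mathbb{E}[\mathbf{W}_{k,u}-\eta_k\nabla f(\mathbf{W}_{k,u})]
=\mathbb{E}[\mathbf{W}_{k,u+1}],
\]
which is a tautology and does \emph{not} establish $\mathbb{E}[\mathbf{W}_{k,u+1}^{(i)}]=\mathbf{W}_{k,u+1}$. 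As you observe, the device recursion has expected step direction $\nabla f_i$, not $\nabla f$, so for $\tau>1$ the individual $\mathbb{E}[\mathbf{W}_{k,u}^{(i)}]$ generally drifts away from the virtual $\mathbf{W}_{k,u}$; the per-device identity $\mathbb{E}[g_{k,u}^{(i)}(\mathbf{W}_{k,u}^{(i)})]=\mathbf{Z}^{(i)}\nabla f(\mathbf{W}_{k,u})$ as stated is therefore not obtainable by this induction beyond $u=0$. What \emph{can} be salvaged is the aggregated statement that Lemma~\ref{lem:unbiased} actually needs, namely that $\mathbb{E}\big[\sum_i \tfrac{1}{p_i}g_{k,u}^{(i)}\mathbbm{1}_k^{(i)}\big]$ and $\mathbb{E}[g_{k,u}^{\s}+g_{\mathrm{o}}]$ both equal $\nabla f$ evaluated at the appropriate expected iterate, since the sum $\sum_i (\mathbf{Z}^{(i)})^{\TT}\mathbf{Z}^{(i)}=\mathbf{I}_m$ restores the global gradient even though the summands individually do not match. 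Your instinct to concentrate the effort there is the right one.
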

\begin{proof}
We prove Lemma \ref{lem:equal} by mathematical induction.
Recall $\nabla f(\mathbf{W}_{k,u}) = \mathbf{X}^{\TT} \left(\mathbf{X} \mathbf{W}_{k,u} - \mathbf{Y}\right)$.
It is straightforward that \eqref{eq:equal} holds for $u=0$ since the models have the same initialization, i.e., $\mathbf{W}_{k,0}^{(i)} = \mathbf{W}_{k,0}^{\s} = \mathbf{W}_{k,0} = \mathbf{W}_{k}$.
Suppose $\mathbb{E}[\mathbf{W}_{k,u}^{(i)}] = \mathbb{E}[\mathbf{W}_{k,u}^{\s}] = \mathbb{E}[\mathbf{W}_{k,u}], \forall u=0,1,\dots, \tau-1$, and thus we have:
\begin{align}
\setlength\abovedisplayskip{0cm}
\setlength\belowdisplayskip{-1em}
    & \mathbb{E} \left[g_{k,u}^{(i)}(\mathbf{W}_{k,u}^{(i)}) \right] \nonumber \\
    =& \mathbb{E} \left[ \mathbf{X}^{(i)\TT} \left( \frac{l_i}{b_k^i} \mathbf{S}_{k,u}^{(i)\TT} \mathbf{S}_{k,u}^{(i)} \right) \left(\mathbf{X}^{(i)} \mathbf{W}_{k,u}^{(i)} - \mathbf{Y}^{(i)} \right) \right] \\
    \overset{\text{(a)}}{=} &\mathbb{E} \left[ \mathbf{X}^{(i)\TT} \left(\mathbf{X}^{(i)} \mathbf{W}_{k,u}^{(i)} - \mathbf{Y}^{(i)} \right)\right] \nonumber \\
    =& \left( \mathbf{Z}^{(i)\TT} \mathbf{Z}^{(i)} \right) \mathbf{X}^{\TT} \left(\mathbf{X} \mathbf{W}_{k,u}^{(i)} - \mathbf{Y}\right) \nonumber \\
    =& \mathbf{Z}^{(i)} \mathbf{X}^{\TT} \left(\mathbf{X} \mathbf{W}_{k,u} - \mathbf{Y}\right), \nonumber
\end{align}
and
\begin{align}
    & \mathbb{E} \left[ g_{k,u}^{\s}(\mathbf{W}_{k,u}^{\s}) \right] \nonumber \\
    =& \mathbb{E} \left[ \mathbf{\tilde{X}}^{\TT} \left( \frac{1}{b_{\s}} (\mathbf{S}_{k}^{\s})^{\TT} \mathbf{S}_{k,u}^{\s} \right) \left( \mathbf{\tilde{X}} \mathbf{W}_{k,u}^{\s} - \mathbf{\tilde{Y}} \right) - \sigma^2 \mathbf{W}_{k,u}^{\s} \right] \\
    \overset{\text{(b)}}{=} & \mathbb{E} \left[  \mathbf{\tilde{X}}^{\TT} \left( \mathbf{\tilde{X}} \mathbf{W}_{k,u}^{\s} - \mathbf{\tilde{Y}} \right) - \sigma^2 \mathbf{W}_{k,u}^{\s} \right] \nonumber\\
    = & \mathbb{E} \left[ (\mathbf{X})^{\TT} \left( \frac{1}{c} \mathbf{G}^{\TT} \mathbf{G} \right) \mathbf{X} \mathbf{W}_{k,u}^{\s} \right]
    + \mathbb{E} \left[ \mathbf{X}^{\TT} \left( \frac{\sigma^2}{c} \mathbf{G}^{\TT} \mathbf{N} \right) \mathbf{X} \mathbf{W}_{k,u}^{\s} \right] \nonumber \\
    &- \mathbb{E} \left[ \mathbf{X}^{\TT} \left( \frac{1}{c} \mathbf{G}^{\TT} \mathbf{G} \right) \mathbf{Y} \right] 
    + \mathbb{E} \left[ \left( \frac{\sigma^2}{c} \mathbf{N}^{\TT} \mathbf{G} \right) \mathbf{X} \mathbf{W} \right] \nonumber \\
    &+ \mathbb{E} \left[ \frac{\sigma^2}{c} \mathbf{N}^{\TT} \mathbf{N} \mathbf{W}_{k,u}^{\s} \right]
    + \mathbb{E} \left[ \left( \frac{\sigma^2}{c} \mathbf{N}^{\TT} \mathbf{G} \right) \mathbf{Y} \right] 
    - \sigma^2 \mathbf{W}_{k,u}^{\s} \nonumber \\
    \overset{\text{(c)}}{=} & \mathbf{X}^{\TT} \left(\mathbf{X} \mathbf{W}_{k,u} - \mathbf{Y}\right), \nonumber
\end{align}
where (a), (b), and (c) adopt the properties in Lemma \ref{useful-lemma}.
Then for $u+1$ we have $\mathbb{E}[ \mathbf{W}_{k,u+1}] = \mathbb{E}[\mathbf{W}_{k,u}] - \mathbb{E} \left[ \frac{\eta_k}{2} \left( \sum_{i=1}^{N} \frac{1}{p_i} g_{k,u}^{(i)}( \mathbf{W}_{k,u}^{(i)} ) \mathbbm{1}_{k}^{(i)} + g_{k,u}^{\s}( \mathbf{W}_{k,u}^{\s} ) \right)  \right ] = \mathbb{E}[\mathbf{W}_{k,u} - \nabla f(\mathbf{W}_{k,u})] = \mathbb{E}[\mathbf{W}_{k,u+1}]$.
Therefore, Lemma \ref{lem:equal} holds.
\end{proof}

Since $\tau$ steps of SGD are independent, we can decompose the aggregated gradient $g(\mathbf{W}_{k})$ into $\tau$ parts, i.e., $g(\mathbf{W}_{k}) = \sum_{u=0}^{\tau-1} \frac{1}{2} \left( \sum_{i=1}^{N} \frac{1}{p_i} g_{k,u}^{(i)}(\mathbf{W}_{k,u}^{(i)}) \mathbbm{1}_{k}^{(i)} + g_{k,u}^{\s}(\mathbf{W}_{k,u}^{\s}) - \sigma^2 \mathbf{W}_{k,u}^{\s} \right)$.
According to Lemma \ref{lem:equal}, we show the gradient obtained in step $u$ is an unbiased estimate of $\nabla f(\mathbf{W}_{k,u})$ as follows:
\begin{small}
\begin{align}
    & \mathbb{E} \left[\frac{1}{2} \left( \sum_{i=1}^{N} \frac{1}{p_i} g_{k,u}^{(i)}(\mathbf{W}_{k,u}^{(i)}) \mathbbm{1}_{k}^{(i)} + g_{k,u}^{\s}(\mathbf{W}_{k,u}^{\s}) \right) \right] \label{eq:28}\\
    =& \frac{1}{2} \mathbb{E} \left[ \sum_{i=1}^{N}  \mathbf{X}^{(i)\TT} \left( \frac{l_i}{b_k^i} \mathbf{S}_{k,u}^{(i)\TT} \mathbf{S}_{k,u}^{(i)} \right) \left(\mathbf{X}^{(i)} \mathbf{W}_{k,u}^{(i)} - \mathbf{Y}^{(i)} \right) \right. \nonumber \\
    &+ \left. \frac{1}{c} \mathbf{\tilde{X}}^{\TT} \left( \frac{c}{b_{\s}} (\mathbf{S}_{k,u}^{\s})^{\TT} \mathbf{S}_{k,u}^{\s} \right) \left( \mathbf{\tilde{X}} \mathbf{W}_{k,u}^{\s} - \mathbf{\tilde{Y}} \right) - \sigma^2 \mathbf{W}_{k,u}^{\s} \right] \nonumber \\
    \overset{\text{(d)}}{=}& \frac{1}{2} \mathbb{E} \left[ \sum_{i=1}^{N}  \mathbf{X}^{(i)\TT} \left(\mathbf{X}^{(i)} \mathbf{W}_{k,u}^{(i)} - \mathbf{Y}^{(i)} \right) + \frac{1}{c} \mathbf{\tilde{X}}^{\TT} \left( \mathbf{\tilde{X}} \mathbf{W}_{k,u}^{\s} - \mathbf{\tilde{Y}} \right) - \sigma^2 \mathbf{W}_{k,u}^{\s} \right] \nonumber \\
    =& \frac{1}{2} \sum_{i=1}^{N}  \mathbf{X}^{(i)\TT} \left(\mathbf{X}^{(i)} \mathbf{W}_{k,u}^{(i)} - \mathbf{Y}^{(i)} \right)
    + \frac{1}{2} \mathbb{E} \left[ (\mathbf{X})^{\TT} \left( \frac{1}{c} \mathbf{G}^{\TT} \mathbf{G} \right) \mathbf{X} \mathbf{W}_{k,u}^{(i)} \right] \nonumber \\
    &+ \frac{1}{2} \mathbb{E} \left[ (\mathbf{X})^{\TT} \left( \frac{\sigma^2}{c} \mathbf{G}^{\TT} \mathbf{N} \right) \mathbf{X} \mathbf{W}_{k,u}^{(i)} \right] 
    + \frac{1}{2} \mathbb{E} \left[ (\mathbf{X})^{\TT} \left( \frac{1}{c} \mathbf{G}^{\TT} \mathbf{G} \right) \mathbf{Y} \right] \nonumber \\
    &+ \frac{1}{2} \mathbb{E} \left[ \frac{\sigma^2}{c} \mathbf{N}^{\TT} \mathbf{N} \right] 
    + \frac{1}{2} \mathbb{E} \left[ \left( \frac{\sigma^2}{c} \mathbf{N}^{\TT} \mathbf{G} \right) \mathbf{Y} \right]
    - \frac{1}{2} \sigma^2 \mathbf{W}_{k,u}^{\s} \nonumber \\
    \overset{\text{(e)}}{=} & \frac{1}{2} \left[ \sum_{i=1}^{N} \left( \mathbf{Z}^{(i)\TT} \mathbf{Z}^{(i)} \right) \mathbf{X}^{\TT} \left(\mathbf{X} \mathbf{W}_{k,u}^{(i)} - \mathbf{Y}\right) + \mathbf{X}^{\TT} \left(\mathbf{X} \mathbf{W}_{k,u} - \mathbf{Y}\right) \right] \nonumber \\
    =& \mathbf{X}^{\TT} \left(\mathbf{X} \mathbf{W}_{k,u} - \mathbf{Y}\right), \nonumber
\end{align}
\end{small}where (d) and (e) follow the properties derived in Lemma \ref{useful-lemma}.
Since the variance in each step is independent, the result in Lemma \ref{lem:unbiased} is completed by summing up both sides of \eqref{eq:28} over $u=0, 1, \dots, \tau-1$.

\section{Proof of Lemma \ref{lem:device}}\label{appendix-B}

We decompose the gradient variance into each SGD step $u$ as follows and provide an upper bound for them respectively. 
\begin{align}
    & \mathbb{E} \Bigg[\Bigg\| \sum_{u=0}^{\tau-1}  \sum_{i=1}^{N} \frac{1}{p_i} g_{k,u}^{(i)}(\mathbf{W}_{k,u}^{(i)}) \mathbbm{1}_{k}^{(i)} - \nabla f(\mathbf{W}_{k,u}) \Bigg\|_{\mathrm{F}}^2 \Bigg] \nonumber \\
    \leq & \tau \sum_{u=0}^{\tau-1} \mathbb{E} \left[\left\| \sum_{i=1}^{N} \frac{1}{p_i} g_{k,u}^{(i)}(\mathbf{W}_{k,u}^{(i)}) \mathbbm{1}_{k}^{(i)} - \nabla f(\mathbf{W}_{k,u}) \right\|_{\mathrm{F}}^2 \right].
\label{eq:help-0}
\end{align}

Define the full-batch gradient on edge device $i$ as $\nabla f_i (\mathbf{W}_{k,u}^{(i)})  \triangleq \mathbf{X}^{(i)\TT} (\mathbf{X}^{(i)} \mathbf{W}_{k,u}^{(i)} - \mathbf{Y}^{(i)})$.
For any step $u$, we have:
\begin{small}
\begin{align}
    & \mathbb{E} \Bigg[\Bigg\| \sum_{i=1}^{N} \frac{1}{p_i} g_{k,u}^{(i)}(\mathbf{W}_{k,u}^{(i)}) \mathbbm{1}_{k}^{(i)} - \nabla f(\mathbf{W}_{k,u}) \Bigg\|_{\mathrm{F}}^2 \Bigg] \nonumber \\
    \overset{\text{(a)}}{\leq} & 2 \mathbb{E} \Bigg[ \Bigg\| \sum_{i=1}^{N} \frac{ \mathbbm{1}_{k}^{(i)} - p_i}{p_i} g_{k,u}^{(i)}(\mathbf{W}_{k,u}^{(i)}) \Bigg\|_{\mathrm{F}}^2 \Bigg] \nonumber \\
    &+2 \mathbb{E} \Bigg[\Bigg\| \sum_{i=1}^{N} g_i (\mathbf{W}_{k,u}^{(i)}) - \nabla f_i (\mathbf{W}_{k,u}^{(i)}) \Bigg\|_{\mathrm{F}}^2 \Bigg] \nonumber \\
    \overset{\text{(b)}}{=} & 2 \mathbb{E} \left[ \sum_{i=1}^{N} \frac{ \mathbb{E} [\| \mathbbm{1}_{k}^{(i)} - p_i \|^2]}{(p_i)^2} \left\| g_{k,u}^{(i)}(\mathbf{W}_{k,u}^{(i)}) \right\|_{\mathrm{F}}^2 \right] \nonumber \\
    &+ 2 \mathbb{E} \Bigg[\Bigg\| \mathbf{X}^{(i)\TT} \Bigg( \sum_{i=1}^{N} \frac{l_i}{b_k^i} \mathbf{S}_{k,u}^{(i)}-\mathbf{I} \Bigg)  (\mathbf{X}^{(i)} \mathbf{W}_{k,u}^{(i)} - \mathbf{Y}^{(i)}) \Bigg\|_{\mathrm{F}}^2 \Bigg] \nonumber\\
    \overset{\text{(c)}}{\leq} & 2 \sum_{i=1}^{N} \frac{1 - p_i}{p_i} \zeta_i^2\kappa_i^2 + 
    2 \sum_{i=1}^{N} \frac{l_i(l_i-b_k^i)}{b_k^i} \zeta_i^2\kappa_i^2,
\label{eq:help-1}
\end{align}
\end{small}
where (a) follows the Jensen's inequality, and (b) holds since $\sum_{i=1}^{N} {\mathbf{Z}^{(i)\TT}}\mathbf{Z}^{(i)} = \mathbf{I}_m$ and $\sum_{i=1}^{N}$ $\nabla f_i (\mathbf{W}_{k,u}^{(i)}) = \nabla f (\mathbf{W}_{k,u}^{(i)})$.
Besides, the first term in (c) follows $\mathbb{E} [\|\mathbbm{1}_{k}^{(i)} - p_i\|^2] = p_i (1 - p_i)$, while the second term adopts the result in Lemma \ref{useful-lemma}.
By combing \eqref{eq:help-0} and \eqref{eq:help-1}, we conclude the proof.
% \vspace{-10pt}

\section{Proof of Lemma \ref{lem:server}}\label{appendix-C}
Similar to the proof of Lemma \ref{lem:device}, we first decompose the gradient variance into $\tau$ steps:
\begin{align}
    & \mathbb{E} \Bigg[\Bigg\| \sum_{u=0}^{\tau-1}  g_{k,u}^{\s}(\mathbf{W}_{k,u}^{\s}) - \sigma^2 \mathbf{W}_{k,u}^{\s} - \nabla f(\mathbf{W}_{k,u}) \Bigg\|_{\mathrm{F}}^2 \Bigg] \nonumber \\
    \leq & \tau \sum_{u=0}^{\tau-1} \mathbb{E} \left[\left\| g_{k,u}^{\s}(\mathbf{W}_{k,u}^{\s}) - \sigma^2 \mathbf{W}_{k,u}^{\s} - \nabla f(\mathbf{W}_{k,u}) \right\|_{\mathrm{F}}^2 \right].
\end{align}
Define the full-batch gradient on the coded dataset $(\mathbf{\tilde{X}},\mathbf{\tilde{Y}})$ as $\nabla f_\mathrm{s} (\mathbf{W}_{k,u}^{\s})  \triangleq  \frac{1}{c} \mathbf{\tilde{X}}^{\TT} (\mathbf{\tilde{X}} \mathbf{W}_{k,u}^{\s} - \mathbf{\tilde{Y}})$. 
Given the independent error caused by mini-batch sampling in \eqref{eq:help-2} and the data coding in \eqref{eq:help-3}, we derive the gradient estimation error over the coded data as follows:
\begin{align}
    & \mathbb{E} \left[ \left\| g_{k,u}^{\s}(\mathbf{W}_{k,u}^{\s}) -\sigma^2 \mathbf{W}_{k,u}^{\s} - \nabla f(\mathbf{W}_{k,u}^{\s}) \right\|_{\mathrm{F}}^2 \right] \nonumber\\
    = & \mathbb{E} \left[\left\| g_{k,u}^{\s}(\mathbf{W}_{k,u}^{\s}) - \nabla f_\mathrm{s} (\mathbf{W}_{k,u}^{\s}) 
    + \nabla f_\mathrm{s} (\mathbf{W}_{k,u}^{\s}) \right. \right. \nonumber \\
    & \left. \left. - \sigma^2 \mathbf{W}_{k,u}^{\s} - \nabla f(\mathbf{W}_{k,u}^{\s}) \right\|_{\mathrm{F}}^2 \right] \nonumber\\
    \overset{\text{(a)}}{=} & \mathbb{E} \left[\left\| g_{k,u}^{\s}(\mathbf{W}_{k,u}^{\s}) - \nabla f_\mathrm{s} (\mathbf{W}_{k,u}^{\s}) \right\|_{\mathrm{F}}^2 \right] \nonumber \\
    &+ \mathbb{E} \left[\left\| -\sigma^2 \mathbf{W}_{k,u}^{\s} + \nabla f_\mathrm{s} (\mathbf{W}_{k,u}^{\s}) - \nabla f(\mathbf{W}_{k,u}^{\s}) \right\|_{\mathrm{F}}^2 \right], \label{eq:32}
\end{align}
where (a) holds since the error caused by two parts is independent. The RHS of \eqref{eq:32} can be further upper bounded as follows:

\begin{align}   
    & \mathbb{E} \left[\left\| g_{k,u}^{\s}(\mathbf{W}_{k,u}^{\s}) - \nabla f_\mathrm{s} (\mathbf{W}_{k,u}^{\s}) \right\|_{\mathrm{F}}^2 \right] \label{eq:help-2} \\
    \overset{\text{(b)}}{\leq} & \left\| \mathbf{\tilde{X}}^{\TT} \right\|_{\mathrm{F}}^2 \mathbb{E} \left[ \frac{1}{c^2} \left\| \frac{c}{b_{\s}} \mathbf{S}_{k,u}^{\s} - \mathbf{I} \right\|_{\mathrm{F}}^2 \right] \left\| \mathbf{\tilde{X}} \mathbf{W}_{k,u}^{\s} - \mathbf{\tilde{Y}} \right\|_{\mathrm{F}}^2 \nonumber \\
    \overset{\text{(c)}}{\leq} & \frac{c-b_{\s}}{c b_{\s}} \left\| \mathbf{\tilde{X}}^{\TT} \right\|_{\mathrm{F}}^2 \left\| \mathbf{\tilde{X}} \mathbf{W}_{k,u}^{\s} - \mathbf{\tilde{Y}} \right\|_{\mathrm{F}}^2 \nonumber\\
    \leq & \frac{c-b_{\s}}{c b_{\s}} \zeta\kappa, \nonumber
\end{align}
and
\begin{align}
    & \mathbb{E} \left[\left\|  \nabla f_\mathrm{s} (\mathbf{W}_{k,u}^{\s}) - \sigma^2 \mathbf{W}_{k,u}^{\s} - \nabla f(\mathbf{W}_{k,u}^{\s}) \right\|_{\mathrm{F}}^2 \right] \label{eq:help-3} \\
    \overset{\text{(d)}}{\leq} & 4 \left\| \mathbf{X}^{\TT} \right\|_{\mathrm{F}}^2 \mathbb{E} \left[ \left\| \left(\frac{1}{c}\mathbf{G}^{\TT}\mathbf{G}-\mathbf{I}\right) \right\|_{\mathrm{F}}^2 \right] \left\| \mathbf{X} \mathbf{W}_{k,u}^{\s} - \mathbf{Y} \right\|_{\mathrm{F}}^2 \nonumber\\
    &+ 4 \mathbb{E} \left[ \left\| \left(\frac{1}{c}\mathbf{N}^{\TT}\mathbf{N}- \sigma^2 \mathbf{I}\right) \right\|_{\mathrm{F}}^2 \right] \left\| \mathbf{W}_{k,u}^{\s} \right\|_{\mathrm{F}}^2 \nonumber \\
    & + \frac{4\sigma^2}{c^2} \left\| \mathbf{X}^{\TT} \right\|_{\mathrm{F}}^2 \mathbb{E} \left[ \left\| \mathbf{G}^{\TT} \mathbf{N} \right\|_{\mathrm{F}}^2 \right] \left\| \mathbf{W}^{(r)\TT} \right\|_{\mathrm{F}}^2 \nonumber \\
    & + \frac{4\sigma^2}{c^2} \mathbb{E} \left[ \left\| \mathbf{N}^{\TT} \mathbf{G} \right\|_{\mathrm{F}}^2 \right] \big\| \mathbf{X} \mathbf{W}_{k,u}^{\s}  -  \mathbf{Y} \big\|_{\mathrm{F}}^2 \nonumber\\
    \overset{\text{(e)}}{\leq} & \frac{4}{c} (m+m^2) \zeta\kappa + \frac{4}{c} (d+d^2) \phi^2 \sum_{i=1}^{N} \sigma_i^4 \nonumber \\
    &+ \frac{4\sigma^2}{c^2} dmn \zeta\phi^2 + \frac{4\sigma^2}{c^2} dmn \kappa \nonumber \\
    =& \frac{4}{c} [ (m+m^2) \zeta\kappa + (d+d^2) \phi^2 \sum_{i=1}^{N} \sigma_i^4 ] + \frac{4 dmn\sigma^2}{c^2} (\zeta\phi^2 + \kappa), \nonumber
\end{align}
where (b) and (d) follow the inequality $\|\mathbf{A}\mathbf{B}\mathbf{x}\|_{\mathrm{F}}^2 \leq \|\mathbf{A}\|_{\mathrm{F}}^2\|\mathbf{B}\|_{\mathrm{F}}^2\|\mathbf{x}\|_2^2$ for any compatible matrices $\mathbf{A},\mathbf{B}$ and vector $\mathbf{x}$. (c) and (e) hold due to Lemma \ref{useful-lemma} and the fact that the expectation of both $\left\| \mathbf{N}^{\TT} \mathbf{G} \right\|_{\mathrm{F}}^2$ and $\left\| \mathbf{G}^{\TT} \mathbf{N} \right\|_{\mathrm{F}}^2$ equals $dmn$.
By summing up \eqref{eq:help-2} and \eqref{eq:help-3}, we complete the proof.

\section{Proof of Theorem \ref{thm:optimal-r}} \label{proof:contract}

We first present the necessary and sufficient conditions to ensure the feasibility of Problem (P1), which can be shown by following similar lines of the proofs of Lemmas 1-3 in \cite{gao2011spectrum}.

\begin{lemma}\label{thm:sufficient-contract}
A feasible contract $\Omega(\chi)$ for Problem (P1) should satisfy: 1) $\epsilon_1 \geq \epsilon_2 \geq \dots \geq \epsilon_N$; 2) $r_1 \geq r_2 \geq \dots \geq r_N$ and $r_N \geq \mu_{N} \epsilon_{N}> 0$; and 3) $r_{i+1} - \mu_{i+1} \epsilon_{i+1} + \mu_{i+1} \epsilon_{i} \geq r_{i} \geq r_{i+1} - \mu_{i} \epsilon_{i+1} + \mu_{i} \epsilon_{i}$.
\end{lemma}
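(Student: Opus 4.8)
The plan is to show that conditions 1)--3) are jointly equivalent to the individual rationality (IR) constraints \eqref{eq:IR} and incentive compatibility (IC) constraints \eqref{eq:IC}, splitting the argument into necessity and sufficiency as in Lemmas 1--3 of \cite{gao2011spectrum}. Throughout I would use the explicit utility $U_i(\epsilon_{i'},r_{i'}) = r_{i'} - \mu_i \epsilon_{i'}$ and the assumed ordering $0<\mu_1 \leq \mu_2 \leq \dots \leq \mu_N$.

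For the \textbf{necessity} direction, I would first extract the monotonicity of $\{\epsilon_i\}$. Writing \eqref{eq:IC} for the ordered pairs $(i,j)$ and $(j,i)$ with $i<j$ gives $r_i - \mu_i \epsilon_i \geq r_j - \mu_i \epsilon_j$ and $r_j - \mu_j \epsilon_j \geq r_i - \mu_j \epsilon_i$; adding these yields $(\mu_j - \mu_i)(\epsilon_i - \epsilon_j) \geq 0$, so $\mu_j \geq \mu_i$ forces $\epsilon_i \geq \epsilon_j$, which is condition 1). Substituting $\epsilon_i \geq \epsilon_j$ into the first inequality gives $r_i - r_j \geq \mu_i(\epsilon_i - \epsilon_j) \geq 0$, hence $r_1 \geq \dots \geq r_N$; combining \eqref{eq:IR} for device $N$, namely $r_N \geq \mu_N \epsilon_N$, with $\mu_N>0$ and $\epsilon_N>0$ gives condition 2). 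Finally, specializing \eqref{eq:IC} to the adjacent pairs $(i,i+1)$ and $(i+1,i)$ yields $r_i - r_{i+1} \geq \mu_i(\epsilon_i - \epsilon_{i+1})$ and $r_i - r_{i+1} \leq \mu_{i+1}(\epsilon_i - \epsilon_{i+1})$, which is exactly the two-sided bound in condition 3).

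For the \textbf{sufficiency} direction, I would show that conditions 1)--3) recover every global IC and IR constraint. The key is a telescoping argument: for a downward deviation $j>i$, I would write $U_i(\epsilon_i,r_i) - U_i(\epsilon_j,r_j) = \sum_{k=i}^{j-1}\big[(r_k - r_{k+1}) - \mu_i(\epsilon_k - \epsilon_{k+1})\big]$, then bound each summand using the lower half of condition 3), $r_k - r_{k+1} \geq \mu_k(\epsilon_k - \epsilon_{k+1})$, together with $\mu_k \geq \mu_i$ and $\epsilon_k \geq \epsilon_{k+1}$, so that every term is nonnegative. A symmetric telescoping over $k=j,\dots,i-1$ using the upper half of condition 3), $r_k - r_{k+1} \leq \mu_{k+1}(\epsilon_k - \epsilon_{k+1})$ with $\mu_{k+1} \leq \mu_i$, deters the upward deviations $j<i$. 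Once global IC holds, IR for every device follows from IR of device $N$, since $U_i(\epsilon_i,r_i) \geq U_i(\epsilon_N,r_N) = r_N - \mu_i \epsilon_N \geq (\mu_N - \mu_i)\epsilon_N \geq 0$; moreover $r_i \geq r_N > 0$ recovers the remaining constraint $r_i \geq 0$.

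I expect the main obstacle to be the sufficiency direction, specifically verifying that the local (adjacent-type) IC constraints in condition 3) imply the full set of global IC constraints. The subtlety is that each telescoped summand mixes the deviating device's own sensitivity $\mu_i$ with a neighbouring sensitivity $\mu_k$ or $\mu_{k+1}$, so the sign of each term only closes by simultaneously invoking the monotonicity of both $\{\mu_k\}$ and $\{\epsilon_k\}$; keeping the inequality directions consistent across the downward ($j>i$) and upward ($j<i$) cases is where care is required.
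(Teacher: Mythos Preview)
Your proposal is correct and follows exactly the standard contract-theory argument the paper defers to (Lemmas 1--3 of \cite{gao2011spectrum}): monotonicity of $\{\epsilon_i\}$ and $\{r_i\}$ from pairing the $(i,j)$ and $(j,i)$ IC constraints, the adjacent-pair sandwich for condition~3), and the telescoping reduction of local IC to global IC together with IR at type $N$ for sufficiency. The paper itself gives no independent proof of this lemma, so your outline is precisely the content it cites.
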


Then Theorem \ref{thm:optimal-r} can be proved by contradiction.
Assume there exist $\{\tilde{r}_1,\tilde{r}_2,\dots,\tilde{r}_{N} \}$ with at least one $i \in\{1,2,\dots,N\}$ satisfying $\tilde{r}_j < r_i^*$ that achieves a larger utility at the server.
Since $r_i^*= r_{i-1}^* - \mu_{i} \epsilon_{i-1} + \mu_{i} \epsilon_{i}$ and according to the IC constraint, i.e., ${r}_{i} \geq {r}_{i-1} - \mu_{i} \epsilon_{i-1} + \mu_{i} \epsilon_{i}$, we have $\tilde{r}_{i-1} \leq \tilde{r}_{i} + \mu_{i} \epsilon_{i-1} - \mu_{i} \epsilon_{i} = \tilde{r}_{i} - r_i^* + r_{i-1}^* < r_{i-1}^{*}$.
We iterate this inequality to show $\tilde{r}_{N} \!\leq\! r^*_{N} \!=\! \mu_{N}\epsilon_{N}$, which violates the IR constraint of edge device $i$.
Hence, Theorem \ref{thm:optimal-r} is verified.

\section{}\label{sec:table}
Table \ref{table:lambda} shows the relationship between the total reward and values of $\lambda$ and $\sigma^2$.

\begin{table}[!h]
\caption{Relationship between the total reward and values of $\lambda$ and $\sigma^2$.}
\label{table:lambda}
\centering
\resizebox{\columnwidth}{!}{
\begin{tabular}{|c|c|c|c|c|c|c|}
\hline
Total Reward & 70    & 80   & 90 \\ \hline
$\lambda$ (Contract-based)    & 1.8e5 & 8e4  & 5e4  \\
$\sigma^2$ (Contract-based)  & 5,878  & 3,937 & 3,119  \\
$\lambda$ (Stackelberg Game)    & 2e-4 & 1e-5  & 1e-7  \\ 
$\sigma^2$ (Stackelberg Game) & 12,324 & 10,488 & 8,364  \\ \hline
Total Reward & 100  & 110   & 120 \\ \hline
$\lambda$ (Contract-based)   & 9e3  & 3.9e3 & 1e3 \\ 
$\sigma^2$ (Contract-based)  & 1,329 & 876   & 444 \\ 
$\lambda$ (Stackelberg Game)   & 1e-9  & 1.1e-11 & 1e-12 \\ 
$\sigma^2$ (Stackelberg Game) & 6,808 & 5,652 & 5,146 \\ \hline
\end{tabular}}
\end{table}

% \begin{table*}[!h]
% \caption{Relationship between the total reward and values of $\lambda$ and $\sigma^2$.}
% \label{table:lambda}
% \centering
% \begin{tabular}{|c|c|c|c|c|c|c|}
% \hline
% Total Reward & 70    & 80   & 90   & 100  & 110   & 120 \\ \hline
% $\lambda$ (Contract-based)    & 1.8e5 & 8e4  & 5e4  & 9e3  & 3.9e3 & 1e3 \\ \hline
% $\sigma^2$ (Contract-based)  & 5,878  & 3,937 & 3,119 & 1,329 & 876   & 444 \\ \hline
% $\lambda$ (Stackelberg Game)    & 2e-4 & 1e-5  & 1e-7  & 1e-9  & 1.1e-11 & 1e-12 \\ \hline
% $\sigma^2$ (Stackelberg Game) & 12,324 & 10,488 & 8,364 & 6,808 & 5,652 & 5,146 \\ \hline
% \end{tabular}
% \end{table*}

\end{document}